\PassOptionsToPackage{unicode}{hyperref}
\PassOptionsToPackage{hyphens}{url}
\PassOptionsToPackage{dvipsnames,svgnames,x11names}{xcolor}
\documentclass[
  12pt]{article}

\usepackage{amsmath,amssymb}
\usepackage{iftex}
\ifPDFTeX
  \usepackage[T1]{fontenc}
  \usepackage[utf8]{inputenc}
  \usepackage{textcomp} 
\else 
  \usepackage{unicode-math}
  \defaultfontfeatures{Scale=MatchLowercase}
  \defaultfontfeatures[\rmfamily]{Ligatures=TeX,Scale=1}
\fi
\usepackage{lmodern}
\ifPDFTeX\else  
\fi
\IfFileExists{upquote.sty}{\usepackage{upquote}}{}
\IfFileExists{microtype.sty}{
  \usepackage[]{microtype}
  \UseMicrotypeSet[protrusion]{basicmath} 
}{}
\makeatletter
\@ifundefined{KOMAClassName}{
  \IfFileExists{parskip.sty}{%
    \usepackage{parskip}
  }{
    \setlength{\parindent}{0pt}
    \setlength{\parskip}{6pt plus 2pt minus 1pt}}
}{
  \KOMAoptions{parskip=half}}
\makeatother
\usepackage{xcolor}
\makeatletter
\ifx\paragraph\undefined\else
  \let\oldparagraph\paragraph
  \renewcommand{\paragraph}{
    \@ifstar
      \xxxParagraphStar
      \xxxParagraphNoStar
  }
  \newcommand{\xxxParagraphStar}[1]{\oldparagraph*{#1}\mbox{}}
  \newcommand{\xxxParagraphNoStar}[1]{\oldparagraph{#1}\mbox{}}
\fi
\ifx\subparagraph\undefined\else
  \let\oldsubparagraph\subparagraph
  \renewcommand{\subparagraph}{
    \@ifstar
      \xxxSubParagraphStar
      \xxxSubParagraphNoStar
  }
  \newcommand{\xxxSubParagraphStar}[1]{\oldsubparagraph*{#1}\mbox{}}
  \newcommand{\xxxSubParagraphNoStar}[1]{\oldsubparagraph{#1}\mbox{}}
\fi
\makeatother

\usepackage{longtable,booktabs,array}
\usepackage{calc} 
\usepackage{etoolbox}
\makeatletter
\patchcmd\longtable{\par}{\if@noskipsec\mbox{}\fi\par}{}{}
\makeatother
\IfFileExists{footnotehyper.sty}{\usepackage{footnotehyper}}{\usepackage{footnote}}
\makesavenoteenv{longtable}
\usepackage{graphicx}
\makeatletter
\def\maxwidth{\ifdim\Gin@nat@width>\linewidth\linewidth\else\Gin@nat@width\fi}
\def\maxheight{\ifdim\Gin@nat@height>\textheight\textheight\else\Gin@nat@height\fi}
\makeatother
\setkeys{Gin}{width=\maxwidth,height=\maxheight,keepaspectratio}
\makeatletter
\def\fps@figure{htbp}
\makeatother

\makeatletter
\@ifpackageloaded{caption}{}{\usepackage{caption}}
\AtBeginDocument{%
\ifdefined\contentsname
  \renewcommand*\contentsname{Table of contents}
\else
  \newcommand\contentsname{Table of contents}
\fi
\ifdefined\listfigurename
  \renewcommand*\listfigurename{List of Figures}
\else
  \newcommand\listfigurename{List of Figures}
\fi
\ifdefined\listtablename
  \renewcommand*\listtablename{List of Tables}
\else
  \newcommand\listtablename{List of Tables}
\fi
\ifdefined\figurename
  \renewcommand*\figurename{Figure}
\else
  \newcommand\figurename{Figure}
\fi
\ifdefined\tablename
  \renewcommand*\tablename{Table}
\else
  \newcommand\tablename{Table}
\fi
}
\@ifpackageloaded{float}{}{\usepackage{float}}
\floatstyle{ruled}
\@ifundefined{c@chapter}{\newfloat{codelisting}{h}{lop}}{\newfloat{codelisting}{h}{lop}[chapter]}
\floatname{codelisting}{Listing}

\makeatother
\makeatletter
\@ifpackageloaded{caption}{}{\usepackage{caption}}
\@ifpackageloaded{subcaption}{}{\usepackage{subcaption}}
\makeatother

\ifLuaTeX
  \usepackage{selnolig}  
\fi
\usepackage{natbib}
\usepackage{bibunits}
\defaultbibliographystyle{agsm}
\defaultbibliography{ref} 
\usepackage{bookmark}

\IfFileExists{xurl.sty}{\usepackage{xurl}}{} 
\hypersetup{
  pdftitle={Title},
  pdfauthor={Author 1; Author 2},
  pdfkeywords={3 to 6 keywords, that do not appear in the title},
  colorlinks=true,
  linkcolor={blue},
  filecolor={Maroon},
  citecolor={Blue},
  urlcolor={Blue},
  pdfcreator={LaTeX via pandoc}}

\usepackage{subcaption}
\usepackage{graphicx,color}
\usepackage{amsmath}
\usepackage{amsthm}
\usepackage{wasysym}
\usepackage{url}
\usepackage{amssymb}
\usepackage{multirow}
\usepackage{bm}
\usepackage{booktabs}
\usepackage{mathrsfs}
\usepackage{fullpage}
\usepackage{setspace}
\usepackage[english]{babel}
\usepackage[utf8]{inputenc}
\usepackage{booktabs}
\usepackage{diagbox}
\usepackage{multirow}
\usepackage{hyperref}
\usepackage{dsfont}
\usepackage{subcaption}

\newtheorem{theorem}{Theorem}
\newtheorem{proposition}{Proposition}

\newtheorem{lemma}{Lemma}

\newtheorem{definition}{Definition}

\newcommand{\Var}{\text{Var}}

\newcommand{\ba}{\bm{a}}

\newcommand{\bA}{\bm{A}}
\newcommand{\bB}{\bm{B}}
\newcommand{\bC}{\bm{C}}

\newcommand{\bF}{\bm{F}}

\newcommand{\bH}{\bm{H}}
\newcommand{\bI}{\bm{I}}

\newcommand{\bX}{\bm{X}}

\newcommand{\bZ}{\bm{Z}}
\newcommand{\balpha}{\bm{\alpha}}
\newcommand{\bbeta}{\bm{\beta}}
\newcommand{\bxi}{\bm{\xi}}

\newcommand{\bomega}{\bm{\bomega}}
\newcommand{\bTheta}{\bm{\Theta}}
\newcommand{\bGamma}{\bm{\Gamma}}

\newcommand{\RR}{\mathbb{R}}
\newcommand{\PP}{\mathbb{P}}
\newcommand{\NN}{\mathbb{N}}

\newcommand{\EE}{\mathbb{E}}

\DeclareMathOperator{\ind}{\mathds{1}}

\newcommand{\anon}{1}

\begin{document}

\def\spacingset#1{\renewcommand{\baselinestretch}%
{#1}\small\normalsize} \spacingset{1}


\if1\anon
{
  \title{\bf A Probabilistic Model for \\Zero-Inflated Count Tensors with \\Structured Latent Representations}
  \author{Elena Tuzhilina\thanks{
Elena Tuzhilina was supported by the Natural Sciences and Engineering Research Council of Canada under Grant RGPIN-2023-04727, the University of Toronto Data Sciences Institute Catalyst Grant, and the University of Toronto McLaughlin Centre Accelerator Grant MC-2023-05. Yaoming Zhen was supported by the CUHK-Shenzhen Starting-up Grant UDF01004232.}\hspace{.2cm}\\
    Department of Statistical Sciences, University of Toronto\\
    and \\
    Yaoming Zhen\\
  School of Data Science, The Chinese University of Hong Kong, Shenzhen}
  \maketitle
} \fi

\if0\anon
{
  \title{\bf A Probabilistic Model for\\ Zero-Inflated Count Tensors with\\ 
  Structured Latent Representations}
  \maketitle
  \medskip
} \fi

\bigskip
\begin{abstract}
We propose a unified probabilistic framework for modeling high-dimensional count
tensors with excess zeros. Such data arise naturally in a variety of applications,
including single-cell Hi-C experiments, where observations are represented as a
third-order tensor indexed by genomic locus pairs and cells. We develop a
zero-inflated Poisson tensor model that captures the underlying latent structure
through a low-rank tensor decomposition. For single-cell Hi-C data, the proposed
framework further accommodates heterogeneous cell populations through latent cluster
structure and exploits the ordered nature of genomic loci via smooth latent
representations. We develop a likelihood-based estimation procedure together with a
Bayes-optimal classifier for distinguishing structural zeros from technical zeros,
enabling principled false-zero detection and imputation.
Theoretically, we establish identifiability of the proposed model and consistency of
the proposed estimators. Simulation studies and analyses of single-cell Hi-C data
demonstrate improved performance in false-zero detection, latent structure recovery,
and clustering.
\end{abstract}

\noindent%
{\it Keywords:} Count Tensor, False Zero Detection, Imputation, Structured Tensor Decomposition, Zero Inflation
\vfill

\newpage
\spacingset{1.8} 

\begin{bibunit}

\section{Introduction}

High-dimensional count data with excess zeros arise in modern applications,
including single-cell genomics \citep{Pierson2015gen, Nagano2017sc}, ecology
\citep{Martin2005eco, Zuur2009eco}, and recommender systems with implicit feedback
\citep{Hu2008recsys}. Many of these data are naturally represented as multiway count arrays, or tensors. Excess zeros are prevalent in such data and may arise from structural absence, technical limitations, detection failure, or measurement noise \citep{Hicks2018zeros}. Distinguishing structural from technical zeros is essential for recovering latent structure and enabling reliable downstream inference. Zero-inflated and hurdle models are widely used to explicitly
distinguish structural zeros from technical zeros \citep{Mullahy1986zi, Lambert1992zi,
Cameron2013zi}. Beyond zero inflation, tensor-valued data often exhibit additional
structural characteristics, such as heterogeneity across tensor modes, that can be
exploited to improve statistical efficiency. However, existing statistical methods do
not provide a probabilistic framework for inference on zero-inflated tensor data while
simultaneously incorporating such structural information.

One important example is chromatin contact data obtained via Hi-C technology
\citep{Lieberman2009, Rao2014}. Hi-C is a genomic assay that measures the frequency
of spatial proximity between pairs of genomic loci and is widely used to reconstruct
three-dimensional (3D) chromatin organization \citep{Duan2010, Hu2013reconstruction,
tuzhilina2022principal, tuzhilina2024conformation}, with broad applications to gene
regulation, chromosome architecture, and disease studies. A Hi-C experiment produces a symmetric contact matrix whose entries record the number of observed interactions between pairs of genomic loci, with larger counts indicating closer spatial proximity. However, contact
matrices are highly sparse due to both biological absence of contacts and technical
limitations such as low sequencing depth and incomplete capture efficiency.
Disentangling \textit{structural zeros} (true absence of contact) from \textit{technical zeros}
(dropout events) is essential for downstream tasks such as 3D reconstruction. Failure
to distinguish these two sources can bias downstream inference, while treating all
zeros as missing values discards informative structural signal.

In the literature, substantial work has focused on bulk Hi-C data, which aggregate signals across many cells into a single contact matrix whose entries represent total interaction frequencies. More recently, single-cell Hi-C technologies have enabled measurements at the level of individual cells, yielding a collection of contact matrices that capture cell-to-cell variability in chromatin organization \citep{Nagano2017sc, Stevens2017sc, Ramani2017sc}. These data naturally form a third-order tensor indexed by genomic locus pairs and cells, which we refer to as a Hi-C tensor. Modeling the data as a tensor enables information to be shared across genomic loci and cells while preserving cell-to-cell heterogeneity. A Hi-C tensor naturally resembles a multi-layer network but differs in one important aspect: Genomic loci are naturally ordered along chromosomes, so neighboring loci tend to exhibit similar interaction patterns due to the continuous nature of chromatin. By contrast, the ordering of nodes in conventional multi-layer networks is typically arbitrary and carries no intrinsic information. An appropriate statistical model should therefore account for both heterogeneous cell populations and the ordered structure of genomic loci.

Several methods have recently been proposed to address zero inflation and imputation
in single-cell Hi-C data. For example, HiCImpute \citep{Xie2022hicimpute}
distinguishes structural zeros from technical dropouts via a Bayesian hierarchical
model, scHiCSRS \citep{Xu2022schicsrs} combines self-representation smoothing with
mixture modeling for zero identification, and Higashi \citep{Zhang2022higashi2}
employs hypergraph neural networks to learn cell embeddings and impute contact maps
by leveraging correlations across cells. Although these methods effectively borrow information across cells, they operate on collections of contact matrices rather than directly modeling the data as a count tensor. Consequently, they do not provide a probabilistic tensor framework that simultaneously accounts for zero inflation, heterogeneous cell populations, and the ordered structure of genomic loci.

In this paper, we propose \texttt{ZITS} (Zero-Inflated Tensor model with Structured latent representations), a probabilistic framework for zero-inflated count tensors motivated by single-cell Hi-C data. Specifically, we employ a zero-inflated Poisson tensor model in which the intensity and masking probability tensors share a coupled low-rank decomposition. For single-cell Hi-C, we model heterogeneous cell populations through latent clusters and exploit the ordered nature of genomic loci through smooth latent representations based on basis expansions, similar to functional tensor decomposition \citep{han2024guaranteed}. The resulting likelihood-based framework enables principled inference for zero-inflated tensor data, including false-zero detection and imputation, while supporting downstream analyses such as cell clustering and 3D chromatin reconstruction.

The main contributions of this paper are fourfold. First, we propose a probabilistic tensor model for zero-inflated count data that provides a unified likelihood-based framework for statistical inference with structured latent representations. Although motivated by single-cell Hi-C data, the framework readily accommodates application-specific structural assumptions. Second, we develop a likelihood-based estimation procedure together with a Bayes-optimal classifier for distinguishing structural and technical zeros, enabling principled false-zero detection and imputation. Third, for single-cell Hi-C data, we incorporate latent cell clusters and smooth chromatin embeddings to exploit the underlying biological structure and improve statistical efficiency. Finally, we establish identifiability of the proposed model and prove consistency of the proposed estimators.

The remainder of the paper is organized as follows. Section~\ref{sec: model} introduces the proposed model and estimation procedure. Section~\ref{sec: theory} establishes the theoretical guarantees and develops a Bayes-optimal approach to false-zero detection and imputation. Section~\ref{sec: simulation} presents simulation studies, and Section~\ref{sec: hic} analyzes a single-cell Hi-C dataset. Technical proofs and auxiliary results are deferred to the Supplementary.

\section{Zero-inflated Tensor  with Structured Representation}
\label{sec: model}

\subsection{Notation}
\label{sec: notations}
The following notations are used throughout the paper. We use lowercase letter $(a)$, boldface lowercase letter $(\ba)$, boldface capital letter $(\bA)$, and boldface Euler script letter $(\bm{\mathcal{A}})$ to represent scalar, vector, matrix and higher-order tensor (order-three or above), respectively. Let $[n] = \{1, 2, \ldots, n\}$ be the $n$-set, for any positive integer $n$. We use $\bm{0}_n$ and $\bm{1}_n$ to denote the $n$-dimensional vectors of all zeros and all ones. For any matrix $\bA$ and $\bB$ of conformable dimensions, their Kronecker product, Khatri–Rao product (matching column-wise Kronecker product) and Hadamard product are denoted as $\bA \otimes \bB$, $\bA \odot \bB$ and ${\bA * \bB}$, respectively. For convenience, we denote $\bA \odot^\top \bB = (\bA^\top \odot \bB^\top)^\top$ as the matching row-wise Kronecker product of $\bA$ and $\bB$. The $(i,j)$-th entry of matrix $\bA$ is denoted by $A_{i,j}$. Similarly,  for a third order tensor $\bm{\mathcal{T}}$, its $(i,j,k)$-th entry is denoted by $\mathcal{T}_{i, j, k}$.  The identity tensor of order three and dimension $L$ is denoted by $\bm{\mathcal{I}}_L=(I_{l_1,l_2,l_3})\in\{0,1\}^{L\times L\times L}$, where $I_{l_1,l_2,l_3}=1$ if and only if $l_1=l_2=l_3$, and $I_{l_1,l_2,l_3}=0$ otherwise.

Given a tensor $\bm{\mathcal T} \in \mathbb{R}^{N_1\times N_2 \times N_3}$ and a matrix $\bA \in \mathbb{R}^{M_1 \times N_1}$, the mode-$1$ product between $\bm{\mathcal{T}}$ and $\bA$, denoted by $\bm{\mathcal{T}}\times_1 \bA \in \RR^{M_1 \times N_2 \times N_3}$, is a tensor with entries
$$(\bm{\mathcal{T}} \times_1 \bA)_{m_1, n_2, n_3} = \sum_{n_1 = 1}^{N_1}\mathcal{T}_{n_1, n_2, n_3} A_{m_1, n_1}.$$ 
For $\bB \in \mathbb{R}^{M_2 \times N_2}$, and $\bC \in \mathbb{R}^{M_3 \times N_3}$,
the mode-$2$ product $\bm{\mathcal{T}}\times_2 \bB$ and mode-$3$ product $\bm{\mathcal{T}}\times_3 \bC$ are defined in a similar fashion. Moreover, these multi-linear products are associative, leading to the multi-linear product ${\bm{\mathcal{T}}\times_1 \bA\times_2 \bB \times_3 \bC \in \RR^{M_1 \times M_2 \times M_3}}$, whose entries read 
\begin{align*}
(\bm{\mathcal{T}} \times_1 \bA\times_2 \bB \times_3 \bC)_{m_1, m_2, m_3} = \sum_{n_1 = 1}^{N_1}\sum_{n_2 = 1}^{N_2}\sum_{n_3 = 1}^{N_3}\mathcal{T}_{n_1, n_2, n_3} A_{m_1, n_1}B_{m_2, n_2}C_{m_3, n_3}. 
\end{align*}

\subsection{Model}
\label{sec: statmodel}

The proposed framework is developed in three stages. We first model excess zeros using a zero-inflated Poisson distribution. We then introduce a coupled low-rank tensor decomposition for the intensity and masking probability tensors. Finally, we show how additional structural assumptions, such as latent clustering and smoothness, can be incorporated through the latent representations. Although the proposed framework applies to general zero-inflated count tensors, we develop the methodology in the context of single-cell Hi-C data, which motivates the structural assumptions introduced later.

Let $\bm{\mathcal{C}}=(C_{i,j,k})\in\mathbb{R}^{N\times N\times K}$ denote a Hi-C contact tensor on $N$ genomic loci across $K$ cells. Specifically, $C_{i,j,k}$ records the observed contact count between genomic loci $i$ and $j$ in cell $k$, for $i,j\in[N]$ and $k\in[K]$. In practice, limited sequencing depth and measurement errors cause many true chromatin interactions to be masked by zeros, resulting in a large number of false zeros and an extremely sparse observed Hi-C tensor \citep{Xie2022hicimpute}. 

We begin by modeling the excess zeros using a zero-inflated Poisson (ZIP) distribution. Specifically, for each $i<j\in[N]$ and $k\in[K]$, we assume that the contact count $C_{i,j,k}$ follows
\begin{equation}
\label{equ: zero_inflated}
\mathbb{P}\left(C_{i,j,k}=C_{j,i,k}=c\right)=
\begin{cases}
p_{i,j,k}+(1-p_{i,j,k})e^{-\lambda_{i,j,k}}, & c=0,\\
(1-p_{i,j,k})\lambda_{i,j,k}^ce^{-\lambda_{i,j,k}}/c!, & c\in\mathbb{N}^+,
\end{cases}
\end{equation}
where $\bm{\Lambda}=(\lambda_{i,j,k})\in\mathbb{R}^{N\times N\times K}$ denotes the Poisson intensity tensor, and $\bm{\mathcal P}=(p_{i,j,k})\in\mathbb{R}^{N\times N\times K}$ denotes the masking probability tensor.

The ZIP model admits a simple latent-variable interpretation. Let $\widetilde C_{i,j,k}\sim\mathrm{Poisson}(\lambda_{i,j,k})$ denote the latent contact count and
$B_{i,j,k}\sim\mathrm{Bernoulli}(1-p_{i,j,k})$
an independent masking variable. The observed count is
$
C_{i,j,k}=B_{i,j,k}\widetilde C_{i,j,k},
$
so that observed zeros arise either from the Poisson distribution (structural zeros) or from the masking process (false zeros). The masking probabilities $p_{i,j,k}$ determine the overall sparsity of the observed Hi-C tensor, with larger values corresponding to a greater proportion of false zeros. Throughout the paper, we assume $0\le p_{i,j,k}<1$ and $\lambda_{i,j,k}>0$ to avoid degenerate cases.

Second, we reparameterize the Poisson intensity and masking probability tensors using canonical link functions. Let
$\bm{\eta}=(\eta_{i,j,k})\in\mathbb{R}^{N\times N\times K}$ and
$\bTheta=(\theta_{i,j,k})\in\mathbb{R}^{N\times N\times K}$
denote the entrywise logarithm of $\bm{\Lambda}$ and the entrywise logit transformation of $\bm{\mathcal J}-\bm{\mathcal P}$, respectively, where $\bm{\mathcal J}$ is the all-ones tensor. That is, 
$\eta_{i,j,k}=\log\lambda_{i,j,k}$ and $
\theta_{i,j,k}=\log\frac{1-p_{i,j,k}}{p_{i,j,k}}$ for all $i,j\in[N]$ and $k\in[K]$.
 This reparameterization transforms $\lambda_{i,j,k}>0$ and $0\le p_{i,j,k}<1$ into unconstrained parameters, simplifying likelihood-based inference. Moreover, the logit transformation is the canonical link for Bernoulli models, including logistic regression and latent-variable models for binary networks \citep{zhang2020flexible,JMLR:v21:17-470,lyu2023latent}.

Third, to capture the low-dimensional latent structure underlying the Hi-C tensor, we model both $\bm{\eta}$ and $\bTheta$ through the following coupled low-rank CP decompositions:
\begin{align}
\bm{\eta}
=
\bm{\mathcal I}_{L}
\times_1\balpha
\times_2\balpha
\times_3\widetilde{\bbeta}
\qquad\text{and}\qquad
\bTheta
=
\bm{\mathcal I}_{L}
\times_1\balpha
\times_2\balpha
\times_3\widetilde{\bxi},
\label{equ: CP_theta}
\end{align}
where $L$ denotes the CP rank, $\balpha\in\mathbb{R}^{N\times L}$, and $\widetilde{\bbeta},\widetilde{\bxi}\in\mathbb{R}^{K\times L}$ are the corresponding factor matrices. The proposed decomposition admits a natural interpretation. The rows of $\balpha$ represent latent embeddings of the genomic loci, whereas the rows of $\widetilde{\bbeta}$ and $\widetilde{\bxi}$ represent latent embeddings of the cells. The genomic embedding matrix $\balpha$ is shared by both decompositions under the assumption that the interaction intensities and masking probabilities arise from the same underlying chromatin organization.

On the flip side, chromatin organization exhibits substantial cell-to-cell heterogeneity. Although cells from different tissues or cell types often display distinct chromatin organizations, considerable variability is also observed among cells within the same population. To capture cell heterogeneity, we impose additional structure on the cell embedding matrices $\widetilde{\bbeta}$ and $\widetilde{\bxi}$. Suppose the $K$ cells belong to $R$ latent clusters, and let $\bZ=(Z_{k,r})\in\{0,1\}^{K\times R}$ denote the cluster membership matrix, where $Z_{k,r}=1$ if cell $k$ belongs to cluster $r$ and $Z_{k,r}=0$ otherwise. Assuming each cell belongs to exactly one cluster, we model the cell embeddings as
$\widetilde{\bbeta}=\bZ\bbeta$ and $
\widetilde{\bxi}=\bZ\bxi,$
where $\bbeta,\bxi\in\mathbb{R}^{R\times L}$. Thus, cells within the same cluster share common latent embeddings while different clusters remain distinct.

The proposed formulation admits several useful extensions. For example, cluster-specific genomic embeddings can be obtained by imposing a block-diagonal structure on $\bbeta$. Specifically, suppose $L=RL_0$ for some integer $L_0$, partition
$\balpha=
\left[
\balpha^{(1)},
\ldots,
\balpha^{(R)}
\right]$,
where $\balpha^{(r)}\in\mathbb{R}^{N\times L_0}$, and assume $\bbeta_{r,l}$ may be nonzero only when $(r-1)L_0<l\le rL_0$. Then $\balpha^{(r)}$ represents the genomic embedding for cluster $r$, while
$\balpha_{i,\cdot}
=
\left(
\balpha_{i,\cdot}^{(1)\top},
\ldots,
\balpha_{i,\cdot}^{(R)\top}
\right)^\top$
is the corresponding global embedding of genomic locus $i$. This construction resembles the mixture multilayer stochastic block model of \citet{jing2021community}.
The homogeneity assumption may also be relaxed by replacing
$\widetilde{\bbeta}=\bZ\bbeta$ and
$\widetilde{\bxi}=\bZ\bxi$
with
$\bZ\odot^\top\bbeta^{(\mathrm{heter})}$ and
$\bZ\odot^\top\bxi^{(\mathrm{heter})}$,
where $\bbeta^{(\mathrm{heter})},\bxi^{(\mathrm{heter})}\in\mathbb{R}^{K\times L}$. This allows cells within the same cluster to have distinct embeddings while the clustering structure is preserved by the sparsity pattern.

Finally, unlike most tensor data, the first two modes of a Hi-C tensor correspond to genomic loci, which are naturally ordered along the chromosome. Consequently, Hi-C data are more naturally viewed as a functional tensor \citep{dolgov2021functional, luo2023low, ni2023learning, han2024guaranteed}. Because neighboring genomic loci tend to exhibit similar interaction patterns, we assume that the latent genomic embeddings vary smoothly along the chromosome and model them using the functional representation of \citet{tuzhilina2022principal,tuzhilina2024conformation}.

To enforce this smoothness constraint, we employ a basis expansion. Let $\{h_q(\cdot)\}_{q=1}^Q$ be a collection of orthonormal basis functions, and let $\bH=(H_{i,q})\in\mathbb{R}^{N\times Q}$ denote the corresponding basis matrix, where
$
H_{i,q}=h_q(i)$ for $i\in[N],\ q\in[Q].$
We model the genomic embedding matrix as
$\balpha=\bH\bGamma,$
where $\bGamma\in\mathbb{R}^{Q\times L}$ is the coefficient matrix. This formulation naturally accommodates missing genomic bins by evaluating the basis functions only at the observed loci. Substituting the constraints into (\ref{equ: CP_theta}) yields
\begin{align}
\label{equ: final_CP}
\bm{\eta}
&=
\bm{\mathcal I}
\times_1
\bH\bGamma
\times_2
\bH\bGamma
\times_3
\bZ\bbeta,
\quad\text{and}\quad
\bTheta
=
\bm{\mathcal I}
\times_1
\bH\bGamma
\times_2
\bH\bGamma
\times_3
\bZ\bxi.
\end{align}
The resulting model is parameterized by $\bGamma$, $\bbeta$, $\bxi$, and $\bZ$.

The proposed framework combines four key components: A zero-inflated Poisson model to distinguish structural zeros from technical dropouts, coupled low-rank tensor decompositions to model interaction intensities and masking probabilities, latent cell clusters to capture heterogeneous cell populations while borrowing strength across similar cells, and smooth genomic embeddings through a basis expansion. We refer to the resulting framework as the \emph{Zero-Inflated Tensor model with Structured Latent Representation} (\texttt{ZITS}).

\subsection{Estimation}
\label{sec: estimation}

Parameter estimation under the proposed \texttt{ZITS} model proceeds in two stages. We first estimate the latent tensor representation by maximizing the likelihood, and subsequently recover the latent cell clusters from the estimated cell embeddings.

Under the \texttt{ZITS} model, the negative log-likelihood of the Hi-C tensor is
\begin{align}
\mathcal{L}\left(\bm{\eta},\bTheta; \bm{\mathcal{C}}\right) = & \frac{2}{N(N+1)K} \sum_{i\le j \in [N], k \in[K]} \left[\ind(C_{i, j, k} \ne 0) \left( \log \left( 1 + \exp\left(e^{\eta_{i, j, k}} - \theta_{i, j, k}\right)\right) - C_{i, j, k} \eta_{i, j, k}\right) \right.\nonumber\\
& \left. + \log (1+e^{-\theta_{i, j, k}}) + e^{\eta_{i, j, k}} - \log \left(1 + \exp\left(e^{\eta_{i, j, k}} - \theta_{i, j, k}\right)\right)\right],
\label{equ: nll}
\end{align}
where $\bm{\eta}$ and $\bTheta$ are parameterized according to the coupled low-rank decomposition in \eqref{equ: final_CP}. A detailed derivation of the negative log-likelihood is provided in Supplementary~\ref{sec: likelihood}.

We minimize the negative log-likelihood by gradient descent over the low-rank factors $\bGamma$, $\widetilde{\bbeta}$, and $\widetilde{\bxi}$. Specifically, at iteration $t$, we first compute $\bm{\eta}^{(t-1)}$ and $\bTheta^{(t-1)}$ from the current parameter estimates and then update
\[
\bX^{(t)}
=
\bX^{(t-1)}
-
\nu
\nabla_{\bX}
\mathcal{L}
\left(
\bm{\eta}^{(t-1)},
\bTheta^{(t-1)};
\bm{\mathcal C}
\right),
\qquad
\bX\in\left\{\bGamma,\widetilde{\bbeta},\widetilde{\bxi}\right\},
\]
where the step size $\nu$ is selected by line search. Derivations of the gradients are provided in Supplementary~\ref{appendix: derivative_homogeneous}. The algorithm is terminated when either the maximum number of iterations is reached or
\[
\max \left\{
\frac{\|\bGamma^{(t)}-\bGamma^{(t-1)}\|_F}{\|\bGamma^{(t-1)}\|_F},
\frac{\|\widetilde{\bbeta}^{(t)}-\widetilde{\bbeta}^{(t-1)}\|_F}{\|\widetilde{\bbeta}^{(t-1)}\|_F},
\frac{\|\widetilde{\bxi}^{(t)}-\widetilde{\bxi}^{(t-1)}\|_F}{\|\widetilde{\bxi}^{(t-1)}\|_F}
\right\}
\]
falls below a pre-specified threshold.

Let $\widetilde{\bbeta}^{(\mathrm{grad})}$ and $\widetilde{\bxi}^{(\mathrm{grad})}$ denote the cell embeddings obtained upon convergence of the gradient descent algorithm. The cluster membership matrix $\bZ$ can be estimated by applying $K$-means to the rows of either $\widetilde{\bbeta}^{(\mathrm{grad})}$ or $\widetilde{\bxi}^{(\mathrm{grad})}$, after which $\bbeta$ and $\bxi$ are estimated accordingly. Alternatively, the two embeddings can be combined to estimate the cluster assignments jointly. We investigate several such approaches and compare their performance in the simulation study (Section~\ref{sim: multi_cluster}).

\section{Statistical guarantees}
\label{sec: theory}

Under the proposed \texttt{ZITS} model, the Hi-C tensor admits the decomposition
\[
\bm{\mathcal C}
=
(\bm{\mathcal J}-\bm{\mathcal P})*\bm{\Lambda}
+
\left[
\bm{\mathcal C}
-
(\bm{\mathcal J}-\bm{\mathcal P})*\bm{\Lambda}
\right],
\]
where the first term represents the underlying signal and the second the random noise. Establishing statistical guarantees therefore requires both identifiability of the signal and concentration of the noise. In this section, we first establish identifiability of the zero-inflated Poisson model and the coupled tensor decomposition (Section~\ref{sec: identifiability}), then derive concentration inequalities for zero-inflated Poisson random variables (Section~\ref{sec: tail}), prove consistency of the maximum likelihood estimator (Section~\ref{sec: consistency}), and finally study false-zero imputation from the perspective of a Bayes-optimal classifier (Section~\ref{sec: false}).

\subsection{Identifiability}
\label{sec: identifiability}

The identifiability of the zero-inflated Poisson model \eqref{equ: zero_inflated} is fundamental to the proposed framework. Observed zeros may correspond either to true absence of chromatin contacts (structural zeros) or to contacts masked by technical limitations (technical zeros). Identifiability guarantees that these two sources of zeros can be distinguished probabilistically through the unique parameters $(p_{i,j,k},\lambda_{i,j,k})$, thereby enabling principled false-zero detection and imputation. We first establish identifiability of the zero-inflated Poisson distribution.

\begin{proposition}
\label{prop: ZIP_identifiability}
(Identifiability of zero-inflated Poisson distribution)
Suppose there are two sets of parameters
$\{(p_{i, j, k}, \lambda_{i, j,k}): i\le j \in [n], k\in [K] \}$
and
$\{(p^\prime_{i, j, k}, \lambda^\prime_{i, j,k}): i\le j \in [n], k\in [K]\}$
that parameterize the same distribution in \eqref{equ: zero_inflated}. Then
$p_{i, j, k} = p^\prime_{i, j, k}$ and
$\lambda_{i, j, k} = \lambda^\prime_{i, j, k}$,
for all $i\le j \in [n]$ and $k \in [K]$.
\end{proposition}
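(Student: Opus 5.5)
The argument is entrywise. Under the model the entries $C_{i,j,k}$, $i\le j$, $k\in[K]$, are independent. Their joint law therefore determines each marginal law, and it suffices to show that a single zero-inflated Poisson law with $0\le p<1$ and $\lambda>0$ determines $(p,\lambda)$ uniquely. So I fix $(i,j,k)$, drop the indices, and suppose $(p,\lambda)$ and $(p',\lambda')$ induce the same probability mass function in \eqref{equ: zero_inflated}.

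The key step is to compare the masses at $c=1$ and $c=2$. Both are strictly positive, because $1-p>0$ and $\lambda>0$. Equating them gives
\begin{align*}
(1-p)\lambda e^{-\lambda} &= (1-p')\lambda' e^{-\lambda'},\\
(1-p)\lambda^2 e^{-\lambda}/2 &= (1-p')\lambda'^2 e^{-\lambda'}/2 .
\end{align*}
Dividing the second equation by the first (allowed since both sides are nonzero) yields $\lambda=\lambda'$. This ratio trick is the only place where the positivity assumptions matter. Without $\lambda>0$ and $p<1$ the positive masses could vanish, and the law would collapse to a point mass at zero, so identifiability would fail.

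Next I substitute $\lambda=\lambda'$ back into the $c=1$ equation. After cancelling the common nonzero factor $\lambda e^{-\lambda}$, it gives $1-p=1-p'$, hence $p=p'$. As a consistency check, the zero mass $p+(1-p)e^{-\lambda}$ then agrees automatically.

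Applying this to every index with $i\le j\in[n]$ and $k\in[K]$ gives the proposition. The symmetry $C_{i,j,k}=C_{j,i,k}$ means the entries with $i\le j$ carry all the information, so nothing further is needed. I do not see a genuine obstacle: the only point needing care is justifying the passage from the joint law to the marginals, which follows from the independence of the entries (or simply by marginalizing the joint law).
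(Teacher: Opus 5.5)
Your proposal is correct and matches the paper's proof essentially step for step. Both equate the masses at $c=1$ and $c=2$, take their ratio to obtain $\lambda=\lambda'$ (using $p<1$ and $\lambda>0$), and substitute back into the $c=1$ equation to get $p=p'$; your remark that the joint law determines each marginal is a harmless clarification that the paper leaves implicit.
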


We next establish the identifiability of the proposed \texttt{ZITS} model. Once the coupled tensor decomposition is identifiable, the latent representations of the genomic loci and cells are uniquely determined (up to the usual indeterminacies of tensor decompositions), and the cell cluster memberships can be recovered.
Let $\kappa_{\bA}$ denote the Kruskal rank of a matrix $\bA$, defined as the largest integer $\kappa$ such that every collection of $\kappa$ columns of $\bA$ is linearly independent. 

\begin{proposition}
\label{prop: identifiablity}
(Identifiability of coupled tensor decomposition)
Suppose the rows of the matrix $[\bbeta, \bxi]$ are distinct and
$\kappa_{\balpha} + \frac{1}{2}\kappa_{(\bbeta^\top,  \bxi^\top)^\top } \ge L + 1$.
Then the cluster membership matrix is identifiable up to relabeling, i.e., column permutation of $\bZ$. Given a particular ordering of the cluster labels, the model parameters $\bbeta$ and $\bxi$ are identifiable up to column permutation and scaling. The embedding matrix $\balpha$ is identifiable up to the same column scalar multiplications and permutations as those of $\bbeta$ and $\bxi$. Moreover, if $\bH$ has full column rank, then $\bGamma$ is uniquely identifiable.
\end{proposition}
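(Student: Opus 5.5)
By Proposition~\ref{prop: ZIP_identifiability}, the distribution of $\bm{\mathcal C}$ determines every $(p_{i,j,k},\lambda_{i,j,k})$, and hence the tensors $\bm{\eta}$ and $\bTheta$. The task therefore reduces to showing that the pair $(\bm{\eta},\bTheta)$ in \eqref{equ: final_CP} determines $\bZ$, $\bbeta$, $\bxi$, $\balpha$ and $\bGamma$ up to the stated indeterminacies.

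The plan is to stack the two tensors along the third mode into a single tensor $\bm{\mathcal T}=[\bm{\eta},\bTheta]\in\RR^{N\times N\times 2K}$. Since the two decompositions share $\balpha$, this tensor has the CP representation
\[
\bm{\mathcal T}=\bm{\mathcal I}_L\times_1\balpha\times_2\balpha\times_3\begin{pmatrix}\bZ\bbeta\\ \bZ\bxi\end{pmatrix}.
\]
I will then apply Kruskal's uniqueness theorem, which requires
\[
\kappa_{\balpha}+\kappa_{\balpha}+\kappa_{\bm{W}}\ge 2L+2,\qquad \bm{W}=\begin{pmatrix}\bZ\bbeta\\ \bZ\bxi\end{pmatrix}.
\]
Dividing by two, this is exactly the hypothesis, provided $\kappa_{\bm{W}}=\kappa_{(\bbeta^\top,\bxi^\top)^\top}$.

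The key step is this Kruskal-rank equality. Since every cluster is nonempty, $\bZ$ contains an $R\times R$ permutation submatrix. Hence the rows of $\bm{W}$ consist of all rows of $(\bbeta^\top,\bxi^\top)^\top$, with some repetitions. Repeating rows does not change which column subsets are linearly independent, so the two Kruskal ranks coincide. Some care is needed with degenerate cases: the theorem as usually stated assumes each factor has Kruskal rank at least one, and I expect this follows from the hypothesis because $\kappa\le L$. Kruskal's theorem then gives uniqueness of the factor triple up to a common column permutation $\Pi$ and scalings $D_1,D_2,D_3$ with $D_1D_2D_3=I$. Because the first two factors are both $\balpha$, one might argue that the symmetric structure forces $D_1=D_2$. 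The statement only claims ``the same column scalar multiplications and permutations'' loosely, so I will record the scaling constraint $D_1D_2D_3=I$ jointly.

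Once $\bm{W}=\bigl((\bZ\bbeta)^\top,(\bZ\bxi)^\top\bigr)^\top$ is identified up to $\Pi$ and scaling, the $k$-th cell's row of $[\bZ\bbeta,\bZ\bxi]$ is known up to a common column transformation. Since the rows of $[\bbeta,\bxi]$ are distinct and a column permutation and scaling is invertible, two cells share a row if and only if they share a cluster. Thus the partition, and hence $\bZ$ up to column permutation, is identified. Fixing a labeling, each row of $\bbeta$ and $\bxi$ is read off as the common row of its cluster, up to the same column permutation and scaling. Finally, $\balpha=\bH\bGamma$ is identified up to its indeterminacies. If $\bH$ has full column rank, then $\bGamma=(\bH^\top\bH)^{-1}\bH^\top\balpha$ (equal to $\bH^\top\balpha$ when $\bH$ has orthonormal columns), so $\bGamma$ is determined by $\balpha$. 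Here ``uniquely'' is to be read modulo the same permutation and scaling, once those are fixed.

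The main obstacle is the clean application of Kruskal's theorem to the symmetric, coupled structure: verifying that $\kappa_{\bm W}$ equals the stated Kruskal rank, and tracking the scaling indeterminacy between the two copies of $\balpha$ and the stacked third factor.
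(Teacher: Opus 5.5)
Your proposal is correct and follows essentially the same route as the paper. It stacks $\bm{\eta}$ and $\bTheta$ into $\bm{\mathcal T}\in\RR^{N\times N\times 2K}$, applies Kruskal's theorem with $2\kappa_{\balpha}+\kappa_{\bm W}\ge 2L+2$ and $\kappa_{\bm W}=\kappa_{(\bbeta^\top,\bxi^\top)^\top}$, then recovers $\bZ$ from the distinct rows of $[\bbeta,\bxi]$ and $\bGamma$ from the full column rank of $\bH$. Your nonempty-cluster argument fills in the Kruskal-rank equality that the paper only asserts. Your hedge on scaling can be dropped: any competing decomposition of the same form also uses a single matrix $\balpha'$ in both modes, which forces $D_1=D_2$ and $D_3=D_1^{-2}$, matching the paper's $m, m^{-2}$ scaling.
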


We note that the sufficient condition
$\kappa_{\balpha} + \frac{1}{2}\kappa_{(\bbeta^\top, \bxi^\top)^\top } \ge L + 1$
is mild and easily satisfied in practice. For example, if both $\balpha$ and $(\bbeta^\top, \bxi^\top)^\top$ have full column rank, the condition reduces to $L \ge 2$. Moreover, the Kruskal rank of $(\bbeta^\top, \bxi^\top)^\top$ is no smaller than that of either $\bbeta$ or $\bxi$, implying that the proposed coupled tensor decomposition satisfies weaker identifiability requirements than a conventional single tensor decomposition.

Finally, we briefly discuss the choice of the embedding dimension $L$. Since chromatin is embedded in three-dimensional space, a natural choice is $L=3$. However, a slightly larger embedding dimension may provide a more flexible representation of complex chromatin structures while maintaining a relatively small number of basis functions. On the other hand, choosing $L$ excessively large relative to the number of basis functions $Q$ may weaken the identifiability guarantee in Proposition~\ref{prop: identifiablity}, since
$\kappa_{\balpha}\le \min\{N,Q,L\}$
under the smoothness constraint $\balpha = \bH\bGamma$. Consequently, $L$ should be chosen to balance model flexibility, computational efficiency, and theoretical identifiability.

\subsection{Tail behavior of the zero-inflated Poisson distribution}
\label{sec: tail}

The concentration properties of zero-inflated Poisson random variables play a key role in the consistency analysis developed in the next subsection. Since the Poisson distribution is sub-exponential, it is natural to ask whether the same property holds for the zero-inflated Poisson distribution. Recall that a random variable is sub-exponential if its $\psi_1$-Orlicz norm is finite. The following proposition characterizes the $\psi_1$-Orlicz norm of the ZIP distribution.

\begin{proposition}
\label{prop: psi_1}
Suppose $X\sim$ Poisson$(\lambda)$ and $C\sim$ ZIP$(p,\lambda)$. Then
\begin{align*}
\|X\|_{\psi_1}
=
\left(\log\left(1+\frac{1}{\lambda}\log2\right)\right)^{-1}
\qquad\text{and}\qquad
\|C\|_{\psi_1}
=
\left(\log\left(1+\frac{1}{\lambda}\log\frac{2-p}{1-p}\right)\right)^{-1},
\end{align*}
where
$\|\cdot\|_{\psi_1}
=
\inf\left\{
t>0:
\EE e^{|\cdot|/t}\le2
\right\}
$
denotes the $\psi_1$-Orlicz norm.
\end{proposition}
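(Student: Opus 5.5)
Both norms come from computing the moment generating function explicitly and solving $\EE e^{|\cdot|/t}=2$ for $t$. Since $X$ and $C$ are nonnegative, $|X|=X$ and $|C|=C$, so it suffices to study $g(t)=\EE e^{X/t}$ and $h(t)=\EE e^{C/t}$ for $t>0$.

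For the Poisson variable we use the standard identity $\EE e^{sX}=\exp\{\lambda(e^{s}-1)\}$, valid for every real $s$, with $s=1/t$. The resulting function $g(t)=\exp\{\lambda(e^{1/t}-1)\}$ is finite for all $t>0$. It is continuous and strictly decreasing in $t$, tends to $\infty$ as $t\downarrow 0$, and tends to $1$ as $t\to\infty$. Hence $\{t>0: g(t)\le 2\}$ is the closed half-line $[t^*,\infty)$, where $t^*$ is the unique solution of $g(t)=2$. Solving gives $\lambda(e^{1/t}-1)=\log 2$, that is, $1/t=\log(1+\lambda^{-1}\log 2)$, which is the stated formula for $\|X\|_{\psi_1}$.

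For the zero-inflated variable we use the latent representation $C=B\widetilde C$ from Section~\ref{sec: statmodel}, with $B\sim\mathrm{Bernoulli}(1-p)$ independent of $\widetilde C\sim\mathrm{Poisson}(\lambda)$. Conditioning on $B$ gives
\[
h(t)=p+(1-p)\exp\{\lambda(e^{1/t}-1)\}.
\]
The same could be verified directly from \eqref{equ: zero_inflated}: the atom at $0$ contributes $p+(1-p)e^{-\lambda}$, and the positive atoms sum to $(1-p)e^{-\lambda}(e^{\lambda e^{1/t}}-1)$. Again $h$ is continuous and strictly decreasing because $1-p>0$; it diverges as $t\downarrow 0$ and tends to $1$ as $t\to\infty$. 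So the infimum equals the unique root of $h(t)=2$. That equation is
\[
\exp\{\lambda(e^{1/t}-1)\}=\frac{2-p}{1-p},
\]
which rearranges to $1/t=\log\bigl(1+\lambda^{-1}\log\frac{2-p}{1-p}\bigr)$, as claimed. As a consistency check, setting $p=0$ recovers the Poisson formula.

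The only step needing care is the monotonicity and limit argument, which justifies that the infimum is attained exactly at the root and is not just bounded by it. This is immediate because $e^{1/t}$ is strictly decreasing in $t$ and the relevant compositions are strictly increasing. There is no real obstacle beyond the algebra.
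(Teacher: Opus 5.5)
Your proposal is correct and follows the paper's proof: compute $\EE e^{X/t}=\exp\{\lambda(e^{1/t}-1)\}$ and $\EE e^{C/t}=p+(1-p)\exp\{\lambda(e^{1/t}-1)\}$, note that both are decreasing in $t$, and solve for the value of $t$ where each equals $2$. The paper computes the ZIP moment generating function directly from the probability mass function rather than through the latent representation. You also argue the limits as $t\downarrow 0$ and $t\to\infty$, which justifies that the infimum equals the root; the paper leaves this implicit.
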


Proposition~\ref{prop: psi_1} implies that
$\|C\|_{\psi_1}<\|X\|_{\psi_1}$
for every $p\in(0,1)$, confirming that the zero-inflated Poisson distribution has lighter tails than the corresponding Poisson distribution. This also follows from the representation
$C=B\widetilde C$,
where $B\sim\mathrm{Bernoulli}(1-p)$ and $\widetilde C\sim\mathrm{Poisson}(\lambda)$ are independent, since for any $u>0$,
$$
\PP(C\ge u)
=
\PP(\widetilde C\ge u\mid B=1)\PP(B=1)
=
(1-p)\PP(X\ge u)
<
\PP(X\ge u).
$$
An immediate consequence of Proposition~\ref{prop: psi_1} is the tail bound
$\PP(C>u)
\le
2\left(1+\frac{1}{\lambda}\log\frac{2-p}{1-p}\right)^{-u},$
which shows that the tail probability of $C$ decays exponentially. 
 However, this bound does not provide a two-sided concentration inequality for $C$, nor does it directly generalize to sharp concentration bounds for averages of independent zero-inflated Poisson random variables. To address this limitation, we derive an alternative characterization of the zero-inflated Poisson distribution through its centered moment generating function.

\begin{theorem}
\label{thm: concentration}
Let $C\sim \mathrm{ZIP}(p,\lambda)$. Then its centered moment generating function satisfies
\begin{align*}
\log \EE e^{t \left(C - (1-p)\lambda \right)}
\le
\frac{\lambda [1 + p(1-p)\lambda] t^2}
{2(1 - \max\{1, \lambda\}t)}
\qquad \text{for} \quad
|t| < \frac{1}{\max\{1,\lambda\}}.
\end{align*}
Consequently, if $C_1,\ldots,C_n$ are independent zero-inflated Poisson random variables with parameters $(p_i,\lambda_i)$, then for any $\ba\in\RR^n$ and $M>0$,
\begin{align*}
& \quad \PP\left(\frac{1}{n} \sum_{i = 1}^n a_i [C_i - (1-p_i)\lambda_i]\ge M  \right)\\
& \le \exp\left( - \frac{n^2M^2}{2\left(\sum_{i = 1}^n a_i^2 \lambda_i[1+p_i(1-p_i) \lambda_i] +nM\max\{\|\ba\|_{\infty}, \|\ba * \bm{\lambda}\|_{\infty} \} \right  )}\right),
\end{align*}
where $\bm{\lambda}=(\lambda_i)\in\RR^n$ and $\ba * \bm{\lambda}\in\RR^n$ is defined by
$(\ba * \bm{\lambda})_i=a_i\lambda_i$.
\end{theorem}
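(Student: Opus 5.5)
The plan is to compute the moment generating function of $C$ explicitly and split the centered cumulant into a ``Poisson part'' and a ``Bernoulli part''. Using the representation $C=B\widetilde C$ with $B\sim\mathrm{Bernoulli}(1-p)$ and $\widetilde C\sim\mathrm{Poisson}(\lambda)$ independent, conditioning on $B$ gives
\begin{align*}
\EE e^{tC}=p+(1-p)e^{x},\qquad x:=\lambda(e^t-1).
\end{align*}
Hence
\begin{align*}
\log \EE e^{t(C-(1-p)\lambda)}
=\underbrace{\bigl[\log(p+(1-p)e^{x})-(1-p)x\bigr]}_{\text{(I)}}
+\underbrace{(1-p)\lambda\,(e^t-1-t)}_{\text{(II)}} .
\end{align*}
Term (I) is the centered cumulant generating function of a Bernoulli$(1-p)$ variable evaluated at $x$. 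Term (II) is $(1-p)$ times the centered Poisson cumulant generating function.

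\emph{Bounding (II).} Expand $e^t-1-t=\sum_{k\ge2}t^k/k!$ and use $k!\ge 2$ to obtain
\begin{align*}
e^t-1-t\le \frac{t^2}{2}\sum_{m\ge0}|t|^m=\frac{t^2}{2(1-|t|)} .
\end{align*}
Therefore $\text{(II)}\le (1-p)\lambda t^2/(2(1-|t|))\le \lambda t^2/(2(1-\max\{1,\lambda\}|t|))$.

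\emph{Bounding (I).} The variable $B-(1-p)$ is bounded by $1$ in absolute value and has variance $p(1-p)$. The Bernstein moment bound therefore gives
\begin{align*}
\text{(I)}\le \frac{p(1-p)x^2}{2(1-|x|/3)}\quad\text{for } |x|<3 .
\end{align*}
For $t<0$ we have $|x|=\lambda(1-e^{t})\le\lambda|t|$. For $0<t<1$ we have $|x|\le \lambda t e^{t}$, so $|x|/3\le \lambda|t|$ because $e/3<1$. In both cases the denominator is at least $1-\max\{1,\lambda\}|t|$.

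The \textbf{main obstacle} is the numerator $x^2$, which for $t>0$ exceeds $\lambda^2t^2$ by the factor $((e^t-1)/t)^2$. I would first try to absorb this factor into the denominator. For example, I would check whether $(e^t-1)^2(1-\max\{1,\lambda\}t)\le t^2(1-|x|/3)$ on the admissible range, possibly sharpening the Bernstein constant $1/3$ by using the exact Bernoulli moments $\EE|B-(1-p)|^k\le p(1-p)$. If that fails, I would instead expand the full cumulant generating function $\log\EE e^{t(C-(1-p)\lambda)}$ as a power series in $t$. I would then bound the $k$-th cumulant of the ZIP variable by $\tfrac{k!}{2}\lambda[1+p(1-p)\lambda]\max\{1,\lambda\}^{k-2}$. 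This is a Bernstein moment condition, and summing the geometric series yields the claimed bound directly.

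\emph{Concentration.} Once the centered cumulant generating function bound holds, each $a_i[C_i-(1-p_i)\lambda_i]$ is sub-gamma. Applying the bound at $a_it$ shows that, for $0\le t<1/\max_i\{|a_i|,|a_i|\lambda_i\}$ (since $|a_i t|\max\{1,\lambda_i\}=t\max\{|a_i|,|a_i|\lambda_i\}$), its log-mgf is at most $a_i^2\lambda_i[1+p_i(1-p_i)\lambda_i]t^2/(2(1-ct))$ with $c=\max\{\|\ba\|_\infty,\|\ba*\bm\lambda\|_\infty\}$. Summing over the independent terms gives $\log\EE e^{tS}\le vt^2/(2(1-ct))$ for $S=\sum_i a_i[C_i-(1-p_i)\lambda_i]$, where $v=\sum_ia_i^2\lambda_i[1+p_i(1-p_i)\lambda_i]$. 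The Chernoff bound then reads
\begin{align*}
\PP(S\ge nM)\le \exp\bigl(-tnM+vt^2/(2(1-ct))\bigr).
\end{align*}
Choosing $t=nM/(v+cnM)$, which is admissible, gives the exponent $-n^2M^2/(2(v+cnM))$, which is exactly the stated inequality.
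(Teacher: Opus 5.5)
Your exact split into (I) and (II) is sound, and so are your treatment of $t<0$ and the Chernoff step with $t=nM/(v+cnM)$. The gap is term (I) for $t>0$. You leave it open, and your first plan cannot close it. After you discard the factor $1-p$ in (II), you need term (I) $\le p(1-p)\lambda^2t^2/\bigl(2(1-\max\{1,\lambda\}t)\bigr)$ by itself, and that is false. Expanding in $t$, the $t^2$-coefficients of the two sides agree. However, the $t^3$-coefficient of (I) is $\tfrac{p(1-p)\lambda^2}{2}\bigl(1+\tfrac{(2p-1)\lambda}{3}\bigr)$, which exceeds $\tfrac{p(1-p)\lambda^2}{2}\max\{1,\lambda\}$ whenever $p>1/2$ and $\lambda\le 1$. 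Numerically, $p=0.9$, $\lambda=1$, $t=0.1$ gives (I) $\approx 5.12\times 10^{-4}>5\times10^{-4}$. So no sharpening of the Bernstein constant can help. The specific inequality you propose to check already fails at $\lambda=1$, $t=0.5$ ($0.210$ versus $0.196$). Your fallback, a bound on every cumulant of the ZIP law, is asserted without argument. It is not routine, since the cumulant generating function is a composition of the Bernoulli and Poisson ones.

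The missing idea is to spend the slack $p\lambda(e^t-1-t)$ that you threw away in (II). Let $\psi(s)=\log(p+(1-p)e^s)-(1-p)s$. Then $\psi'(s)=\frac{(1-p)e^s}{p+(1-p)e^s}-(1-p)<p$. Since $x-\lambda t=\lambda(e^t-1-t)\ge 0$, this gives $\psi(x)\le\psi(\lambda t)+p\lambda(e^t-1-t)$, hence (I)+(II) $\le \lambda(e^t-1-t)+\psi(\lambda t)$.

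This is equivalent to the paper's step. The paper bounds $pe^{-(1-p)\lambda t}\le pe^{-(1-p)\lambda t}e^{\lambda(e^t-t-1)}$ to get $\EE e^{t(C-(1-p)\lambda)}\le e^{\lambda(e^t-t-1)}\,\EE e^{\lambda t[B-(1-p)]}$. The Bernoulli part is thus evaluated at $\lambda t$ rather than at $\lambda(e^t-1)$. The price is coefficient $\lambda$ instead of $(1-p)\lambda$ on the Poisson part, which is why the variance proxy is $\lambda[1+p(1-p)\lambda]$ rather than the true variance $(1-p)\lambda(1+p\lambda)$. With this step, your own Bernstein bound at $s=\lambda t$ finishes the proof, using $1-\lambda t/3\ge 1-\max\{1,\lambda\}t$ together with your bound on $e^t-1-t$.

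One point in your favor concerns $t<0$. There you proved the bound with $1-\max\{1,\lambda\}|t|$ in the denominator, and that is the correct version. The statement's $1-\max\{1,\lambda\}t$ is false for negative $t$: with $p=0$, $\lambda=1$, $t=-0.1$, the left side is $e^{-0.1}-0.9\approx 4.84\times10^{-3}$ while the right side is $0.01/2.2\approx4.55\times10^{-3}$. The paper merely asserts that case. Your $|t|$ version is also exactly what the concentration step needs when some $a_i<0$.
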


The first part of Theorem~\ref{thm: concentration} establishes a bound on the centered moment generating function of the zero-inflated Poisson distribution. Combined with Bernstein's inequality, it yields the concentration inequality in the second part of the theorem for linear combinations of independent zero-inflated Poisson random variables. These result form one of the main probabilistic tools used to establish consistency in the next subsection.

\subsection{Estimation consistency}
\label{sec: consistency}

In this subsection, we establish consistency of the proposed likelihood estimator under the coupled low-rank tensor model. We first introduce the parameter space and define a class of approximate likelihood estimators, then derive a non-asymptotic Kullback--Leibler bound, which yields consistency of the estimated intensity and masking probability tensors.

We begin by defining the parameter space
\begin{align*}
\Omega = \{(\bGamma, \bZ, \bbeta, \bxi): \|\bGamma_{., l}\| = 1,\text{ for } l\in [L], \bZ \in \Delta, \|\bbeta\|_{\infty} \le \beta_{\max}, \|\bxi\|_{\infty} \le \xi_{\max}\}, 
\end{align*}
where $\Delta$ denotes the collection of all clustering assignment matrices, and $\beta_{\max}$ and $\xi_{\max}$ specify the parameter bounds. These constraints imply the existence of constants $\lambda_{\min}$ and $\lambda_{\max}$ such that
$\lambda_{\min}\le\lambda_{i,j,k}\le\lambda_{\max},$
for all $i\neq j\in[N]$ and $k\in[K]$. Similarly, the range of the masking probabilities is determined by $\xi_{\max}$. Throughout the analysis, we assume $\lambda_{\max}, \xi_{\max}, \beta_{\max}>1$. We further fix the columns of $\bGamma$ to have unit norm in order to remove the scaling ambiguity in the coupled tensor decomposition of $\bm{\eta}$ and $\bTheta$.

Next, let $(\bGamma^*, \bZ^*, \bbeta^*, \bxi^*)\in\Omega$ denote the true model parameters.
With a slight abuse of notation, we write
$\mathcal{L}(\bGamma,\bZ,\bbeta,\bxi;\bm{\mathcal C})$
for the negative log-likelihood obtained by substituting the corresponding tensor decomposition into
$\mathcal{L}(\bm{\eta},\bTheta;\bm{\mathcal C})$.
Under the correctly specified model, $(\bGamma^*,\bZ^*,\bbeta^*,\bxi^*)$ minimizes the expected negative log-likelihood. We consider any estimator
$(\widehat{\bGamma},\widehat{\bZ},\widehat{\bbeta},\widehat{\bxi})\in\Omega$
satisfying
\begin{align}
\label{equ: estimator}
\mathcal{L}
(\widehat{\bGamma},\widehat{\bZ},\widehat{\bbeta},\widehat{\bxi};\bm{\mathcal C})
\le
\mathcal{L}
(\bGamma^*,\bZ^*,\bbeta^*,\bxi^*;\bm{\mathcal C})
+\epsilon_{N,K},
\end{align}
where $\epsilon_{N,K}>0$ is a vanishing sequence whose order will be specified later. This class includes both the exact maximum likelihood estimator and approximate solutions obtained by numerical optimization. As $\epsilon_{N,K}\to0$, the estimator approaches the global maximum likelihood estimator. Our goal is to establish consistency for this broader class of estimators.

Let
$(\bm{\eta}^*,\bTheta^*)$
and
$(\widehat{\bm{\eta}},\widehat{\bTheta})$
denote the tensors corresponding to the true parameters and the estimator, respectively. Likewise, let
$\bm{\Lambda}^*$,
$\widehat{\bm{\Lambda}}$,
$\bm{\mathcal P}^*$,
and
$\widehat{\bm{\mathcal P}}$
denote the corresponding intensity and masking probability tensors. The following theorem establishes a non-asymptotic bound on the average Kullback--Leibler divergence between the true and estimated zero-inflated Poisson models.

\begin{theorem}
\label{thm: large_deviation_inequality}
Suppose
$1-p_{i,j,k}\ge s_{N,K},$
for some sparsity level
$
s_{N,K}\le \frac{1}{e^{\lambda_{\max}}-1},
$
possibly depending on $N$ and $K$, for all $i,j\in[N]$ and $k\in[K]$. Let
$(\widehat{\bGamma},\widehat{\bZ},\widehat{\bbeta},\widehat{\bxi})$
be any estimator satisfying (\ref{equ: estimator}). Then there exist universal constants
$M_1$, $M_2$, and $M_3$ such that
\begin{align*}
\PP\left(
\frac{1}{\varphi_{N,K}}
\sum_{i\le j,\,k}
KL_{\mathrm{ZIP}}
\!\left(
p_{i,j,k}^*,\lambda_{i,j,k}^*
\,
\middle\|
\,
\widehat p_{i,j,k},
\widehat\lambda_{i,j,k}
\right)
\ge
M_1\epsilon_{N,K}
\right)
\le
M_2
\exp\!\left(
-M_3\epsilon_{N,K}\varphi_{N,K}s_{N,K}
\right),
\end{align*}
where
$
\epsilon_{N,K}
=
\frac{(QL+KR+2RL)\log\varphi_{N,K}}
{\varphi_{N,K}s_{N,K}}.
$
\end{theorem}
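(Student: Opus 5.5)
We follow the standard sieve/large-deviation route for likelihood estimators, in the spirit of Wong and Shen: the empirical log-likelihood ratio is compared with its expectation uniformly over a finite cover of the parameter space $\Omega$.

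\textbf{Step 1: reduce to excess risk.} Write $\varphi_{N,K}=N(N+1)K/2$ for the number of independent observations. For a generic parameter $\omega=(\bGamma,\bZ,\bbeta,\bxi)\in\Omega$, define the log-likelihood ratio
\[
\ell_\omega(\bm{\mathcal C})=\varphi_{N,K}\bigl[\mathcal L(\omega;\bm{\mathcal C})-\mathcal L(\omega^*;\bm{\mathcal C})\bigr].
\]
This is a sum of independent per-entry terms, and its expectation equals
\[
D(\omega)=\sum_{i\le j,k}KL_{\mathrm{ZIP}}\bigl(p^*_{i,j,k},\lambda^*_{i,j,k}\,\|\,p_{i,j,k},\lambda_{i,j,k}\bigr).
\]
By \eqref{equ: estimator}, $\ell_{\widehat\omega}\le\varphi_{N,K}\epsilon_{N,K}$. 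The event in the theorem is $D(\widehat\omega)\ge M_1\varphi_{N,K}\epsilon_{N,K}$, and it is contained in
\[
\Bigl\{\sup_{\omega:\,D(\omega)\ge M_1\varphi\epsilon}\bigl[D(\omega)-\ell_\omega\bigr]\ge (1-1/M_1)D(\omega)\Bigr\}.
\]
So it suffices to control the centered process $\ell_\omega-D(\omega)$ uniformly over the set where $D(\omega)$ is large.

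\textbf{Step 2: pointwise concentration.} Each per-entry log-likelihood ratio has one of two forms:
\begin{itemize}
\item on nonzero counts, it is affine in $C_{i,j,k}$, namely $C(\eta^*-\eta)$ plus bounded terms;
\item on zeros, it is bounded.
\end{itemize}
Since $\|\bGamma_{\cdot,l}\|=1$ and the bounds on $\bbeta,\bxi$ hold, the coefficients are bounded by constants. The $\theta$-terms are bounded by $O(\log(1/s_{N,K}))$, which stays controlled because $\lambda$ is bounded.

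Split $\ell_\omega-D(\omega)$ into two parts:
\begin{itemize}
\item a linear combination $\sum a_{ijk}[C_{ijk}-(1-p^*)\lambda^*]$, which we handle with Theorem~\ref{thm: concentration};
\item a Bernoulli part $\sum b_{ijk}[\ind(C_{ijk}\neq0)-\PP(C_{ijk}\ne0)]$, which we handle with Bernstein's inequality.
\end{itemize}
The key is a variance–KL comparison: the variance of each summand is bounded by a constant times $s_{N,K}^{-1}$ times the corresponding KL term. Here the condition $s_{N,K}\le 1/(e^{\lambda_{\max}}-1)$ lets us compare the nonzero probability $(1-p)(1-e^{-\lambda})$ with $s_{N,K}$ uniformly. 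With this comparison, the Bernstein exponent at deviation level $cD(\omega)$ is of order $D(\omega)s_{N,K}$. This gives a pointwise bound of the form $\exp(-cD(\omega)s_{N,K})$.

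\textbf{Step 3: uniformity via covering.} Choose a $\delta$-net of $\Omega$ with $\delta$ polynomial in $1/\varphi_{N,K}$:
\begin{itemize}
\item $\bZ$ ranges over at most $R^K$ assignments, contributing $e^{K\log R}$;
\item $\bGamma$ (on unit spheres), $\bbeta$ and $\bxi$ lie in bounded sets of total dimension $QL+2RL$, contributing $(C/\delta)^{QL+2RL}$.
\end{itemize}
The log-cardinality is therefore at most of order $(QL+KR+2RL)\log\varphi_{N,K}$. Lipschitz continuity of $\ell_\omega$ and $D$ in the parameters costs $O(\delta\varphi\sum C_{ijk})$. Using Theorem~\ref{thm: concentration}, $\sum C_{ijk}=O(\varphi\lambda_{\max})$ with exponentially high probability, so the approximation error is negligible.

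Next, peel the set $\{D\ge M_1\varphi\epsilon\}$ into shells $2^m\varphi\epsilon\le D<2^{m+1}\varphi\epsilon$. Take a union bound over each net and sum the geometric series. With
\[
\epsilon_{N,K}=\frac{(QL+KR+2RL)\log\varphi_{N,K}}{\varphi_{N,K} s_{N,K}},
\]
the entropy term is absorbed by choosing $M_1$ large, which yields the bound $M_2\exp(-M_3\epsilon\varphi s)$.

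\textbf{Main obstacle.} The hardest part is the variance–KL comparison in Step 2 for the zero-inflated likelihood, in particular when $p$ is close to 1 and the logit terms involve $\log(1+e^{-\theta})$. I would first prove a per-entry inequality directly: by second-order expansion on the compact parameter range, both the squared log-likelihood-ratio moments and the KL divergence are comparable to $(1-p^*)\bigl[(\eta-\eta^*)^2+(\theta-\theta^*)^2\bigr]$ weighted appropriately. The factor $s_{N,K}$ enters as the lower bound on $1-p^*$ in the denominator of this comparison.
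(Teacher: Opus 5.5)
Your proposal is correct in outline, but it controls the supremum over each shell by a different mechanism than the paper.

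\textbf{What is shared.} Both arguments peel the region where the average KL exceeds a multiple of $\epsilon_{N,K}$ into dyadic shells. Both also rest on the same comparison: the second moment of each centered per-entry log-likelihood ratio is at most a constant times $s_{N,K}^{-1}$ times the corresponding KL term. In the paper this comparison is the upper bound on $\|\bF\|_{P,2}$, derived in the hurdle parametrization $\pi=p+(1-p)e^{-\lambda}$.

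\textbf{How the paper handles a shell.} It symmetrizes and bounds the expected supremum by a uniform-entropy (Dudley) integral with envelope $\bF$. This step also needs a \emph{lower} bound on $\|\bF\|_{Q,2}$ in terms of the shell level. It then turns the expectation bound into a tail bound with a Talagrand-type inequality for unbounded empirical processes, using Proposition~\ref{prop: psi_1} for the $\psi_1$ control.

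\textbf{How you handle a shell.} You discretize once at a scale polynomial in $1/\varphi_{N,K}$. You then apply a pointwise Bernstein bound, using Theorem~\ref{thm: concentration} for the count part and boundedness for the indicator part, and take a union bound over the net. The discretization error is absorbed through Lipschitz dependence on $(\bm{\eta},\bTheta)$ together with a high-probability bound on $\sum C_{i,j,k}$.

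\textbf{What each buys.} Your route is more elementary: it needs no symmetrization, no chaining, no envelope lower bound and no empirical-process concentration inequality. It also treats $\bZ$ as genuinely discrete (entropy $K\log R$), rather than embedding it in a $KR$-dimensional ball as the paper does. Since the target rate already carries a $\log\varphi_{N,K}$ factor, the single-scale net loses nothing here. The paper's machinery would only pay off for richer parameter classes.

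\textbf{Minor corrections.}
\begin{enumerate}
\item A second-order expansion gives the two-sided comparison with $(1-p^*)[(\eta-\eta^*)^2+(\theta-\theta^*)^2]$ only locally. For large negative $\theta-\theta^*$ the KL grows linearly rather than quadratically, since the nonzero probability is underestimated. The one-sided bound you actually need still holds, with the loss absorbed by the $s_{N,K}^{-1}$ factor. You can take it globally from the paper's squared-Hellinger arguments: Lemma~\ref{lemma: KL_Hurdle_square_loss} and the per-entry bounds behind $\|\bF\|_{P,2}$.
\item The lower bound $(1-p)(1-e^{-\lambda})\ge s_{N,K}(1-e^{-\lambda_{\min}})$ follows from $1-p\ge s_{N,K}$ alone. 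The assumption $s_{N,K}\le (e^{\lambda_{\max}}-1)^{-1}$ is used in the paper only to bound $|\partial a_{i,j,k}/\partial\theta_{i,j,k}|$ in the covering step.
\item You normalize by $N(N+1)K/2$ instead of the theorem's $\varphi_{N,K}=N(N-1)K$. This changes only constants.
\end{enumerate}
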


Similar to sparse multi-layer network models \citep{xu2023covariate, zhen2024consistent}, the sparsity level $s_{N,K}$ quantifies the signal strength of the Hi-C tensor and therefore appears in the denominator of the convergence rate. The quantity
$QL+KR+2RL$
is the number of free parameters in the proposed coupled tensor decomposition, while
$\varphi_{N,K}=N(N-1)K$
is the number of independent observations in the Hi-C tensor. 

The proof of Theorem~\ref{thm: large_deviation_inequality} is non-trivial and combines the concentration results of Section~\ref{sec: tail} with a novel analysis of the Kullback--Leibler divergence based on the hurdle Poisson representation of the zero-inflated Poisson distribution. Although every zero-inflated Poisson distribution admits an equivalent hurdle Poisson representation (but not conversely), this representation substantially simplifies the analysis. Additional details are provided in Supplementary~\ref{sec: appendix_ZIP_more}.

Combining Theorem~\ref{thm: large_deviation_inequality} with a sequence of novel inequalities established in Supplementary~\ref{sec: appdendix_D_lemmas} yields consistency of the estimated masking probability and intensity tensors.

\begin{theorem}
\label{thm: consistency_F_norm}
Under the conditions of Theorem~\ref{thm: large_deviation_inequality}, we have
\begin{align*}
& \quad \frac{1}{\varphi_{N, K}}\left(\|\widehat{\bm{\mathcal{P}}} - \bm{\mathcal{P}}^*\|_F^2 + \|\widehat{\bm{\Lambda}} - \bm{\Lambda}^*\|_F^2\right)\\
&  \le \frac{8M_1}{(1 - e^{-\lambda_{\min}})^{2}} \max \left\{ 1 - s_{N,K} +s_{N, K} e^{-\lambda_{\min}}, \frac{16(\lambda_{\max} + 1)}{s_{N, K}(1 - e^{-\lambda_{\min}})}   \right\} \epsilon_{N, K},
\end{align*}
with probability at least $1 -M_2 \exp\left(- M_3\epsilon_{N, K} \varphi_{N, K}s_{N, K}\right)$.
\end{theorem}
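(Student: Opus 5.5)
The plan is to deduce Theorem~\ref{thm: consistency_F_norm} from Theorem~\ref{thm: large_deviation_inequality}. On the event where
\[
\frac{1}{\varphi_{N,K}}\sum_{i\le j,k} KL_{\mathrm{ZIP}}\left(p^*_{i,j,k},\lambda^*_{i,j,k}\,\middle\|\,\widehat p_{i,j,k},\widehat\lambda_{i,j,k}\right) < M_1\epsilon_{N,K},
\]
which has the stated probability, it suffices to prove an entrywise reverse Pinsker-type inequality:
\[
(\widehat p-p^*)^2+(\widehat\lambda-\lambda^*)^2 \le c\, KL_{\mathrm{ZIP}}(p^*,\lambda^*\|\widehat p,\widehat\lambda),
\]
with
\[
c=\frac{8}{(1-e^{-\lambda_{\min}})^2}\max\left\{1-s+se^{-\lambda_{\min}},\ \frac{16(\lambda_{\max}+1)}{s(1-e^{-\lambda_{\min}})}\right\}.
\]
Summing over entries then gives the claim. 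The factor relating sums over $i\le j$ to Frobenius norms over the symmetric tensor is absorbed into the constant 8.

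To get this entrywise bound I would use the hurdle representation, as was done for Theorem~\ref{thm: large_deviation_inequality}. Write a ZIP$(p,\lambda)$ law as a zero/nonzero Bernoulli part with $\pi=\PP(C=0)=p+(1-p)e^{-\lambda}$, together with a zero-truncated Poisson$(\lambda)$ on $\mathbb N^+$. The KL divergence then splits exactly into two nonnegative pieces:
\[
KL = KL_{\mathrm{Bern}}(\pi^*\|\widehat\pi) + (1-\pi^*)\, KL_{\mathrm{TP}}(\lambda^*\|\widehat\lambda).
\]
For the first piece, Pinsker gives $KL_{\mathrm{Bern}}\ge 2(\pi^*-\widehat\pi)^2$.

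For the second piece, I would bound the truncated-Poisson KL from below by a constant times $(\lambda^*-\widehat\lambda)^2$. The truncated Poisson is a one-parameter exponential family in $\log\lambda$, so its KL equals a Bregman divergence of the log-partition function $\log(e^\lambda-1)$. Its Fisher information in $\lambda$ is bounded below on $[\lambda_{\min},\lambda_{\max}]$ by something like $(1-e^{-\lambda_{\min}})/(\lambda_{\max}+1)$ up to constants. Alternatively one can use Pinsker on the truncated laws together with a lower bound on the total variation distance via the means $\lambda/(1-e^{-\lambda})$. The weight satisfies $1-\pi^*=(1-p^*)(1-e^{-\lambda^*})\ge s(1-e^{-\lambda_{\min}})$, which produces the term $16(\lambda_{\max}+1)/(s(1-e^{-\lambda_{\min}}))$ in the constant $c$.

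Finally I would convert $(\pi,\lambda)$ back to $(p,\lambda)$. Since $1-\pi=(1-p)(1-e^{-\lambda})$,
\[
p^*-\widehat p=\frac{(\pi^*-\widehat\pi)-(1-\widehat p)\left(e^{-\widehat\lambda}-e^{-\lambda^*}\right)}{1-e^{-\lambda^*}}.
\]
This gives $(p^*-\widehat p)^2\le 2(1-e^{-\lambda_{\min}})^{-2}\left[(\pi^*-\widehat\pi)^2+(\widehat\lambda-\lambda^*)^2\right]$, using $|e^{-a}-e^{-b}|\le|a-b|$ and $1-\widehat p\le1$. Combining the two pieces yields a constant of the stated form. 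The first argument of the max, $1-s+se^{-\lambda_{\min}}$, looks like an upper bound on $\pi^*$ (or on $1/\widehat\pi$-type variance factors) coming from a sharper Bernoulli chi-square bound, and I would track it that way.

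The main obstacle is the lower bound on the truncated-Poisson KL uniformly over $\widehat\lambda$ and $\lambda^*$ in $[\lambda_{\min},\lambda_{\max}]$. The Bregman form needs a uniform lower bound on the curvature of $\log(e^\lambda-1)$ in the $\eta$-parametrization, followed by a mean-value step translating $\eta$-differences into $\lambda$-differences. I would first try the integral form of the Bregman remainder. If that fails, I would fall back on Pinsker applied to the truncated laws plus a mean-difference bound, with the variance controlled via $\lambda_{\max}+1$.
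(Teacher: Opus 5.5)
Your reduction is the same as the paper's. On the complement of the event in Theorem~\ref{thm: large_deviation_inequality}, you sum an entrywise bound $(p^*-\widehat p)^2+(\lambda^*-\widehat\lambda)^2\le c\,KL_{\mathrm{ZIP}}$. That bound comes from the hurdle split $KL_{\mathrm{ZIP}}=KL_{\mathrm{Bern}}(\pi^*\|\widehat\pi)+(1-\pi^*)KL_{\mathrm{TP}}(\lambda^*\|\widehat\lambda)$ with $1-\pi^*\ge s_{N,K}(1-e^{-\lambda_{\min}})$, followed by the change of variables $(\pi,\lambda)\mapsto(p,\lambda)$; this is Lemma~\ref{lemma: ZIP_KL_bound}. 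The peripheral pieces are fine:
\begin{itemize}
\item Pinsker suffices for the Bernoulli part.
\item You need not track the first entry of the max, since $16(\lambda_{\max}+1)/(s_{N,K}(1-e^{-\lambda_{\min}}))>32>1\ge 1-s_{N,K}+s_{N,K}e^{-\lambda_{\min}}$.
\item Your explicit conversion gives an entrywise factor $3(1-e^{-\lambda_{\min}})^{-2}$, which fits. Note, however, that the entrywise constant must be $4(1-e^{-\lambda_{\min}})^{-2}\max\{\cdot\}$; the $8$ appears only after doubling for $\|\cdot\|_F^2\le 2\sum_{i\le j}$.
\item The second term in your identity for $p^*-\widehat p$ should be $(1-\widehat p)(e^{-\lambda^*}-e^{-\widehat\lambda})$; this is harmless after squaring.
\end{itemize}

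The gap is the step you yourself flag as the main obstacle, and the route you propose for it does not give the stated constant. You need $KL_{\mathrm{TP}}(\lambda^*\|\widehat\lambda)\ge(\lambda^*-\widehat\lambda)^2/(16(\lambda_{\max}+1))$ uniformly on $[\lambda_{\min},\lambda_{\max}]$.
\begin{itemize}
\item A lower bound on the Fisher information in $\lambda$ only controls the KL near $\widehat\lambda=\lambda^*$.
\item Take $A(\eta)=\log(e^{e^\eta}-1)$, so $A''(\eta)=\lambda g'(\lambda)$ with $g(\lambda)=\lambda/(1-e^{-\lambda})$. The uniform curvature bound is then only of order $\lambda_{\min}$. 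After the mean-value step $|\eta^*-\widehat\eta|\ge|\lambda^*-\widehat\lambda|/\lambda_{\max}$ you get $KL_{\mathrm{TP}}\ge\lambda_{\min}(\lambda^*-\widehat\lambda)^2/(4\lambda_{\max}^2)$. This does not imply the theorem's constant when $\lambda_{\min}$ is small or $\lambda_{\max}$ is large.
\item The Pinsker fallback is weaker still, because total variation does not control mean differences of unbounded counts without extra tail arguments.
\end{itemize}

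The paper closes this step in two stages. Lemma~\ref{lemma: KL_Hellinger} gives $KL_{\mathrm{TP}}(x\|y)\ge(\sqrt{g(x)}-\sqrt{g(y)})^2$, using $a-b-b\log(a/b)\ge(\sqrt a-\sqrt b)^2$ and a monotonicity argument that $f_x(y)\ge f_x(x)=0$. Then $g(y)<y+1$ and $g'\in(1/2,1)$, hence $|x-y|\le 2|g(x)-g(y)|$ (Lemmas~\ref{lemma: inverse_derivative} and \ref{lemma: KL_square}).

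Alternatively, your integral-form idea works if you keep the curvature pointwise instead of taking its minimum. Since $A''(t)\ge e^t/2$,
\[
KL_{\mathrm{TP}}(\lambda^*\|\widehat\lambda)=\int_{\eta^*}^{\widehat\eta}(\widehat\eta-t)A''(t)\,dt\ge\frac12\Bigl(\widehat\lambda-\lambda^*-\lambda^*\log\frac{\widehat\lambda}{\lambda^*}\Bigr)\ge\frac{(\widehat\lambda-\lambda^*)^2}{8\lambda_{\max}},
\]
which is stronger than what is required.
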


Theorem~\ref{thm: consistency_F_norm} establishes consistency of both the estimated masking probability tensor $\widehat{\bm{\mathcal P}}$ and the estimated intensity tensor $\widehat{\bm{\Lambda}}$ under the normalized Frobenius norm. In particular, when $\lambda_{\min}$, $\lambda_{\max}$, and $s_{N, K}$ are of constant order, we have
$
\frac{1}{\varphi_{N,K}}
\left(
\|\widehat{\bm{\mathcal P}}-\bm{\mathcal P}^*\|_F^2
+
\|\widehat{\bm{\Lambda}}-\bm{\Lambda}^*\|_F^2
\right)
=
O_p(\epsilon_{N,K}).
$
Since both $\bm{\mathcal P}$ and $\bm{\Lambda}$ 
inherit the proposed structured representation, these consistency results justify recovery of the latent genomic embeddings and cell clusters under suitable separation conditions.

\subsection{False zero detection and imputation}
\label{sec: false}

The proposed zero-inflated Poisson model naturally enables principled detection of false zeros. Recall that a zero-inflated Poisson random variable admits the representation
$
C=B\widetilde{C},
$
where $\widetilde{C}\sim\mathrm{Poisson}(\lambda)$ denotes the latent contact count and
$B\sim\mathrm{Bernoulli}(1-p)$ is an independent masking variable. Consequently, whenever $C\neq0$, we necessarily have $\widetilde{C}=C$ and $B=1$. In contrast, an observed zero may arise from two distinct sources: It is a structural zero if $\widetilde{C}=0$, and a false zero if $\widetilde{C}>0$ but $B=0$. Since the latent counts $\widetilde{C}$ are unobserved, distinguishing between these two cases becomes a binary classification problem.

Specifically, let $Y$ denote the latent class label, where $Y=1$ if an observed zero is a false zero and $Y=0$ otherwise. A direct calculation shows that
\begin{align*}
\PP(Y=1\mid C=0)
=
\frac{p(1-e^{-\lambda})}
{p(1-e^{-\lambda})+e^{-\lambda}}.
\end{align*}
For any classifier $\Psi$, define its conditional classification risk by
$\mathcal{R}(\Psi)
=
\PP\!\left(\Psi(C)\neq Y\,\middle|\,C=0\right),
$
and let
$
\Psi^*
=
\arg\min_\Psi \mathcal{R}(\Psi)
$
denote the Bayes classifier. The following theorem characterizes both the optimal decision rule and the corresponding excess risk.

\begin{theorem}
\label{thm: calssification}
Suppose $C\sim\mathrm{ZIP}(p,\lambda)$. Conditional on $C=0$, the Bayes classifier for distinguishing false zeros from structural zeros is
\begin{align*}
\Psi^*(0)
=
\begin{cases}
1, & \text{if } p>(e^\lambda-1)^{-1},\\
0, & \text{otherwise}.
\end{cases}
\end{align*}
Moreover, for any classifier $\Psi$, its excess risk is
\begin{align*}
\mathcal{R}(\Psi)-\mathcal{R}(\Psi^*)
=
\frac{\left|p(1-e^{-\lambda})-e^{-\lambda}\right|}
{p(1-e^{-\lambda})+e^{-\lambda}}
\,
\ind\!\left\{\Psi(0)\neq\Psi^*(0)\right\}.
\end{align*}
\end{theorem}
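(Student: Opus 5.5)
The plan is to compute the posterior of $Y$ given $C=0$ explicitly, minimize the conditional misclassification probability over the only available decision, and read off the excess risk. Conditioning on $C=0$, any classifier $\Psi$ reduces to a single value $\Psi(0)\in\{0,1\}$. There are therefore only two candidate rules, and the Bayes rule chooses the one with larger posterior probability.

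First, verify the posterior formula stated before the theorem, using the representation $C=B\widetilde C$. The event $\{Y=1, C=0\}$ equals $\{B=0,\widetilde C>0\}$, which by independence has probability $p(1-e^{-\lambda})$. The event $\{Y=0,C=0\}$ equals $\{\widetilde C=0\}$, which has probability $e^{-\lambda}$ regardless of $B$. These two events are disjoint and their union is $\{C=0\}$, so $\PP(C=0)=p(1-e^{-\lambda})+e^{-\lambda}$, consistent with \eqref{equ: zero_inflated}. Writing $\pi_1=\PP(Y=1\mid C=0)$ and $\pi_0=1-\pi_1$ gives
\[
\pi_1=\frac{p(1-e^{-\lambda})}{p(1-e^{-\lambda})+e^{-\lambda}},\qquad \pi_0=\frac{e^{-\lambda}}{p(1-e^{-\lambda})+e^{-\lambda}}.
\]

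Second, compute the risk of each rule. We have $\mathcal R(\Psi)=\pi_0$ if $\Psi(0)=1$ and $\mathcal R(\Psi)=\pi_1$ if $\Psi(0)=0$. The minimizer predicts $1$ exactly when $\pi_1>\pi_0$. This inequality is $p(1-e^{-\lambda})>e^{-\lambda}$, and multiplying by $e^{\lambda}$ turns it into $p(e^\lambda-1)>1$, i.e. $p>(e^\lambda-1)^{-1}$. Here $e^\lambda-1>0$ because $\lambda>0$. Ties are assigned to $0$, matching the ``otherwise'' branch of the theorem.

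Third, derive the excess risk. If $\Psi(0)=\Psi^*(0)$, the two risks coincide and the excess is $0$. Otherwise the excess is $\max\{\pi_0,\pi_1\}-\min\{\pi_0,\pi_1\}=|\pi_1-\pi_0|$. Substituting the expressions above gives $|p(1-e^{-\lambda})-e^{-\lambda}|/(p(1-e^{-\lambda})+e^{-\lambda})$. Combining the two cases with the indicator $\ind\{\Psi(0)\neq\Psi^*(0)\}$ yields the stated formula. No step is a real obstacle; the only care needed is to justify that the two events partition $\{C=0\}$ and to handle the tie case, where both rules have the same risk and the formula correctly gives zero excess.
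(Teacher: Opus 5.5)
Your proposal is correct and follows essentially the same route as the paper: both write the conditional risk as $\pi_1\ind\{\Psi(0)=0\}+\pi_0\ind\{\Psi(0)=1\}$, choose the label with the larger posterior, and read off the excess risk as $|\pi_1-\pi_0|\,\ind\{\Psi(0)\neq\Psi^*(0)\}$. You additionally verify the posterior $\PP(Y=1\mid C=0)$ from the representation $C=B\widetilde C$ and check the tie case, which the paper leaves implicit.
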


Theorem~\ref{thm: calssification} directly motivates our false-zero detection procedure. Given the consistent estimators $\widehat{\bm{\mathcal P}}$ and $\widehat{\bm{\Lambda}}$, we classify an observed zero $C_{i,j,k}=0$ as a false zero whenever
${\widehat p_{i,j,k}>
\left(e^{\widehat\lambda_{i,j,k}}-1\right)^{-1}.}$
Once a zero entry is classified as false, we impute it by the estimated Poisson intensity $\widehat\lambda_{i,j,k}$. This yields a principled, model-based imputation procedure that follows directly from the Bayes-optimal decision rule.

\section{Simulation studies}
\label{sec: simulation}

In this section, we investigate the finite-sample performance of the proposed \texttt{ZITS} estimator. We first describe the data generation procedure (Section~\ref{sim: date_generation}) and compare several initialization strategies for the non-convex likelihood optimization problem (Section~\ref{sim: initialization}). We then study the single-cluster setting (Section~\ref{sim: one_cluster}), evaluating the recovery of the parameter tensors and the performance of the proposed false-zero detection procedure as the number of observed cells increases. Finally, we examine the multi-cluster setting (Section~\ref{sim: multi_cluster}) by comparing the ability of different model outputs to recover the underlying cell populations.

\subsection{Data generation}
\label{sim: date_generation}

We generate the data directly from the proposed zero-inflated tensor model. Genomic loci are represented through piecewise smooth latent embeddings, while cells are represented through cluster-specific embeddings. These embeddings determine the parameter tensors through the coupled tensor decomposition, after which the contact counts are generated according to the zero-inflated Poisson model of Section~\ref{sec: statmodel}.

We first generate the genomic embedding matrix $\balpha \in \mathbb{R}^{N \times L}$. Let $\overline{\balpha}\in\mathbb{R}^{L\times L}$ denote the mean embedding matrix with diagonal entries $\mu_\alpha$ and off-diagonal entries $\mu_\alpha/L$. The $N$ genomic loci are then partitioned into $L$ consecutive segments of equal size $\lfloor \frac NL \rfloor$, with the last segment containing the remaining loci whenever $N$ is not divisible by $L$. For a locus $i$ belonging to segment $j$; that is,
$
(j-1)\lfloor\frac{N}{L}\rfloor
<
i
\le
j\lfloor\frac{N}{L}\rfloor,
$
for $j=1,\ldots,L-1$, or
$
(L-1)\lfloor\frac{N}{L}\rfloor
<
i
\le
N
$
when $j=L$, the embedding vector is generated by perturbing the corresponding segment center:
\[
\alpha_{i,\ell}
\overset{\text{i.i.d.}}{\sim}
\overline{\alpha}_{j,\ell}
+
\mathrm{Unif}\bigl([0,\sigma_\alpha]\bigr),
\]
for $\ell\in[L]$, where $\sigma_\alpha$ controls the within-segment variability. 

The cell embedding matrices $\bbeta,\bxi\in\mathbb{R}^{K\times L}$ are generated analogously without within-cluster perturbations. The $K$ cells are partitioned into $R$ equal-sized clusters. Specifically, cell $k$ belongs to cluster $r$ if
$(r-1)\left\lfloor\frac{K}{R}\right\rfloor
<
k
\le
r\lfloor\frac{K}{R}\rfloor,
$
for $r=1,\ldots,R-1$, or
$
(R-1)\lfloor\frac{K}{R}\rfloor
<
k
\le
K
$
when $r=R$. We then generate cluster-center matrices $\overline{\bbeta},\overline{\bxi}\in\mathbb{R}^{R\times L}$ with entries
\[
\overline{\beta}_{r,\ell}
\overset{\text{i.i.d.}}{\sim}
\mathrm{Unif}([\mu_\beta,\mu_\beta+\sigma_\beta]),
\qquad
\overline{\xi}_{r,\ell}
\overset{\text{i.i.d.}}{\sim}
\mathrm{Unif}([\mu_\xi,\mu_\xi+\sigma_\xi]),
\]
for $r\in[R]$ and $\ell\in[L]$, where $\mu_\beta,\mu_\xi,\sigma_\beta$, and $\sigma_\xi$ are user-specified parameters. Each cell is then assigned the embedding corresponding to its cluster center; that is,
$
\beta_{k,\ell}
=
\overline{\beta}_{r,\ell}$ and $
\xi_{k,\ell}
=
\overline{\xi}_{r,\ell},
$
whenever cell $k$ belongs to cluster $r$. 

Given the latent embeddings $\balpha$, $\bbeta$, and $\bxi$, we construct the parameter tensors
$
\bm{\eta}
=
\bm{\mathcal I}
\times_1
\balpha
\times_2
\balpha
\times_3
\bbeta
$
and
$
\bTheta
=
\bm{\mathcal I}
\times_1
\balpha
\times_2
\balpha
\times_3
\bxi.
$
The contact counts $C_{i,j,k}$ are then generated as
$
C_{i,j,k}
=
\widetilde C_{i,j,k}B_{i,j,k},
$
where $\widetilde C_{i,j,k}$ and $B_{i,j,k}$ are generated independently from
$
\widetilde C_{i,j,k}
\sim
\mathrm{Poisson}(\lambda_{i,j,k})
$
and
$
B_{i,j,k}
\sim
\mathrm{Bernoulli}(1-p_{i,j,k}),
$
with
$
\lambda_{i,j,k}=e^{\eta_{i,j,k}}
$
and
$
p_{i,j,k}
=
\frac{1}{e^{\theta_{i,j,k}}+1}.
$

\subsection{Initialization}
\label{sim: initialization}

Because the proposed likelihood is non-convex, the quality of the final estimator depends on the initialization. In this subsection, we propose several initialization strategies for the latent factors $\balpha$, $\bbeta$, and $\bxi$, and compare their performance in the next subsection. For simplicity, we focus on the single-cluster setting; the proposed procedures naturally extend to multiple clusters once an initial clustering of the cells is available.

In the single-cluster setting, all cells share the same latent representations, so the matrices $\bbeta$ and $\bxi$ have identical rows, i.e.
$\bbeta=\mathbf1_K\bbeta^{(0)\top},
$ and $\bxi=\mathbf1_K\bxi^{(0)\top}.$
Consequently, every frontal slice of the parameter tensors $\bm{\eta}$ and $\bTheta$ is identical and can be written as
\[
\bm{\eta}^{(0)}
=
\balpha
\operatorname{diag}(\bbeta^{(0)})
\balpha^\top,
\qquad
\bTheta^{(0)}
=
\balpha
\operatorname{diag}(\bxi^{(0)})
\balpha^\top.
\]
Therefore, estimating $\bm{\eta}^{(0)}$ and $\bTheta^{(0)}$ provides an initialization for the latent factors $\balpha$, $\bbeta$, and $\bxi$. To obtain these matrices, we first apply the method of moments.

Under the single-cluster model, the observations
$C_{i,j,1},\ldots,C_{i,j,K}$
are independent and identically distributed for each pair of genomic loci $(i,j)\in[N]\times[N]$. We therefore compute the sample mean and variance,
$m_{i,j}
=
\frac1K\sum_{k=1}^KC_{i,j,k}$ and $
v_{i,j}
=
\frac1K
\sum_{k=1}^K
(C_{i,j,k}-m_{i,j})^2,
$
and obtain the method-of-moments estimators
\[
\lambda_{i,j}^{(0)}
=
\frac{v_{i,j}+m_{i,j}^2}{m_{i,j}}-1,
\qquad
p_{i,j}^{(0)}
=
\frac{v_{i,j}-m_{i,j}}
{v_{i,j}+m_{i,j}^2-m_{i,j}}.
\]
The initial parameter matrices are then defined by
$
\eta_{i,j}^{(0)}
=
\log\lambda_{i,j}^{(0)}$ and 
$
{\Theta_{i,j}^{(0)}
=
\log\frac{1-p_{i,j}^{(0)}}{p_{i,j}^{(0)}}.}
$

Given $\bm{\eta}^{(0)}$ and $\bTheta^{(0)}$, we investigate six different strategies for recovering the latent factors $\balpha$, $\bbeta$, and $\bxi$. Since there is no canonical way to jointly balance the two matrix factorizations, these strategies differ in how they combine the information contained in $\bm{\eta}^{(0)}$ and $\bTheta^{(0)}$.

\begin{enumerate}
\item (\texttt{random}) Randomly initialize $\balpha$, $\bbeta^{(0)}$, and $\bxi^{(0)}$ without using $\bm{\eta}^{(0)}$ or $\bTheta^{(0)}$.

\item (\texttt{cp}) Stack $\bm{\eta}^{(0)}$ and $\bTheta^{(0)}$ into a third-order tensor with two frontal slices and compute a rank-$L$ CP decomposition
$
\bm{\mathcal I}
\times_1\balpha_1
\times_2\balpha_2
\times_3
(\bbeta^{(0)},\bxi^{(0)})^\top.
$
Set $\balpha=\balpha_1$. Since the decomposition is unconstrained, the factors $\balpha_1$ and $\balpha_2$ need not coincide.

\item (\texttt{cpavg}) Use the same CP decomposition as in \texttt{cp}, but set
$
\balpha=\frac{\balpha_1+\balpha_2}{2}.
$

\item (\texttt{eigenb}) Compute the rank-$L$ eigendecomposition of $\bm{\eta}^{(0)}$ and use the leading eigenvectors and eigenvalues to initialize $\balpha$ and $\bbeta^{(0)}$, respectively. Then estimate $\bxi^{(0)}$ by minimizing
$
\|\bTheta^{(0)}
-
\balpha\operatorname{diag}(\bxi^{(0)})\balpha^\top
\|_F^2,
$
which reduces to a linear regression problem.

\item (\texttt{eigenx}) Compute the rank-$L$ eigendecomposition of $\bTheta^{(0)}$ and use the leading eigenvectors and eigenvalues to initialize $\balpha$ and $\bxi^{(0)}$, respectively. Then estimate $\bbeta^{(0)}$ by minimizing
$\|\bm{\eta}^{(0)}
-
\balpha\operatorname{diag}(\bbeta^{(0)})\balpha^\top
\|_F^2.$

\item (\texttt{eigenbx}) Let $\balpha_\beta$ and $\balpha_\xi$ denote the leading $L$ eigenvectors of $\bm{\eta}^{(0)}$ and $\bTheta^{(0)}$, respectively. We estimate $\balpha$ by computing the leading $L$ left singular vectors of the concatenated matrix
$
(\balpha_\beta,\balpha_\xi),
$
thereby approximating the joint subspace
$\operatorname{span}(\balpha_\beta,\balpha_\xi)$.
The corresponding $\bbeta^{(0)}$ and $\bxi^{(0)}$ are then obtained by solving the two least-squares problems
$\|\bm{\eta}^{(0)}
-
\balpha\operatorname{diag}(\bbeta^{(0)})\balpha^\top
\|_F^2
$ and
$\|\bTheta^{(0)}
-
\balpha\operatorname{diag}(\bxi^{(0)})\balpha^\top
\|_F^2.$
\end{enumerate}

The numerical performance of these initialization strategies is compared in the next subsection.

\subsection{One cluster scenario}
\label{sim: one_cluster}

To evaluate the proposed initialization strategies, we consider a setting with $N=20$ genomic loci and $K=250$ cells, where the true embedding dimension is $L=5$. During estimation, we vary the input embedding dimension over $\widehat{L}\in\{1,3,5,7,9\}$ to investigate robustness to model misspecification. The remaining parameters are fixed at $\mu_\alpha=0.5$, $\sigma_\alpha^2=\mu_\alpha/4$, $\mu_\beta=5$, and $\sigma_\beta^2=\mu_\beta/4$. The parameter $\mu_\xi$ controls the sparsity of the observed tensor, with smaller values corresponding to higher sparsity. We consider $\mu_\xi\in\{1,5,20\}$ together with $\sigma_\xi^2=\mu_\xi/4$, representing high-, medium-, and low-sparsity settings, respectively.

Figure~\ref{fig:simu:init:re} reports the averaged relative errors
$\frac{\|\bm{\Lambda}-\widehat{\bm{\Lambda}}\|_F}{\|\bm{\Lambda}\|_F}$
and
$\frac{\|\bm{\mathcal P}-\widehat{\bm{\mathcal P}}\|_F}{\|\bm{\mathcal P}\|_F}$
under the high-sparsity setting ($\mu_\xi=1$). Overall, \texttt{eigenb} and \texttt{eigenbx} consistently achieve the smallest relative errors for both $\bm{\Lambda}$ and $\bm{\mathcal P}$. In contrast, the CP-based initializations perform only marginally better than random initialization, suggesting that unconstrained joint CP factorization of $\bm{\eta}^{(0)}$ and $\bTheta^{(0)}$ does not provide sufficiently accurate latent factors.
Comparing \texttt{eigenb} and \texttt{eigenx} further shows that initializing from $\bm{\eta}^{(0)}$ is consistently more effective than initializing from $\bTheta^{(0)}$, suggesting that the intensity tensor contains a stronger low-rank signal than the masking probability tensor. Although \texttt{eigenbx} combines information from both matrices, it performs similarly to \texttt{eigenb}, indicating that $\bTheta^{(0)}$ contributes only limited additional information.
Finally, both \texttt{eigenb} and \texttt{eigenbx} achieve their lowest errors when $\widehat L=L=5$. More importantly, their performance deteriorates only slightly when $\widehat L>L$, whereas all methods perform substantially worse when $\widehat L<L$. Thus, when the true embedding dimension is unknown, slight overestimation is preferable to underestimation.

\begin{figure}[ht!]
     \centering
     \begin{subfigure}[b]{0.43\textwidth}
         \centering
         \includegraphics[width=\textwidth]{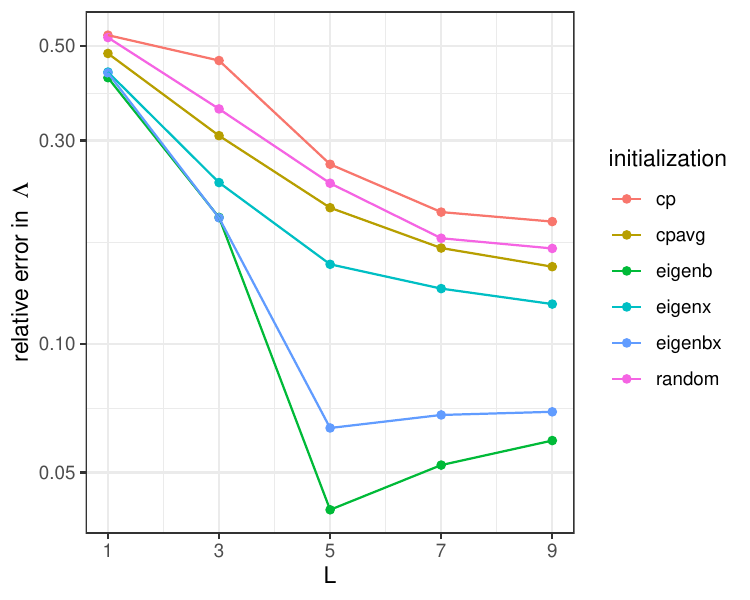}
     \end{subfigure}
       \hfill
     \begin{subfigure}[b]{0.43\textwidth}
         \centering
         \includegraphics[width=\textwidth]{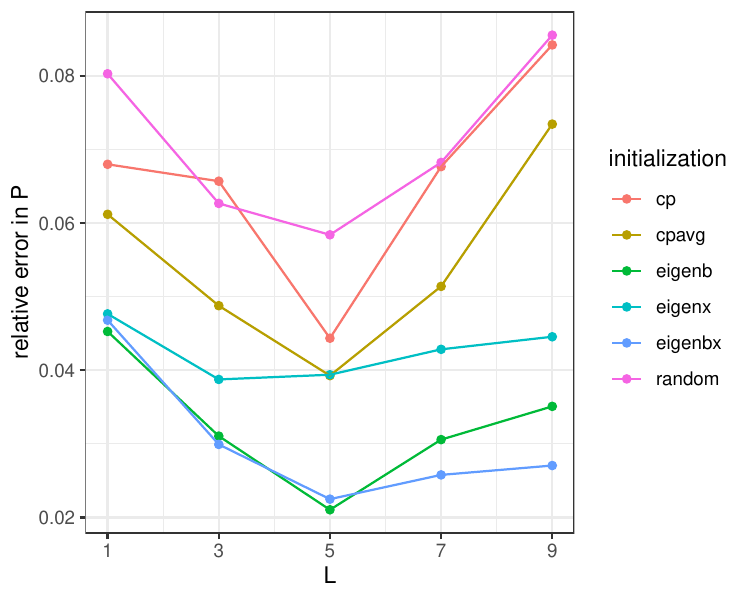}
     \end{subfigure}
        \caption{Average relative Frobenius errors of the estimated intensity tensor $\widehat{\bm{\Lambda}}$ (left) and masking probability tensor $\widehat{\bm{\mathcal P}}$ (right) for six initialization strategies, averaged over 50 simulated datasets. The true embedding dimension is $L=5$, while the fitting procedure uses varying values of $\widehat L$. Results are shown under the high-sparsity setting ($\mu_\xi=1$).}
        \label{fig:simu:init:re}
\end{figure}

We next investigate the effect of the number of cells by varying
$
K\in\{25,50,100,250,500\},
$
while keeping all other simulation settings unchanged and using the \texttt{eigenb} initialization.

The top row of Figure~\ref{fig:simu:K} reports the average relative Frobenius errors of the estimated intensity and masking probability tensors under different sparsity levels. As expected, the estimation accuracy improves as the number of cells increases across all sparsity regimes, consistent with the consistency results established in Section~\ref{sec: consistency}. The effect of sparsity, however, differs for the two parameters. As the observed tensor becomes sparser, estimation of $\bm{\Lambda}$ becomes slightly more difficult, whereas estimation of $\bm{\mathcal P}$ improves substantially. Consequently, sparse observations contain less information about the latent contact intensities but considerably more information about the masking probabilities. This phenomenon may be viewed as a blessing of sparsity for zero-inflated tensor models.

Next, we evaluate the false-zero detection procedure proposed in Section~\ref{sec: false}. An observed zero $C_{i,j,k}=0$ is classified as a false zero whenever
$\widehat p_{i,j,k}>
\left(e^{\widehat\lambda_{i,j,k}}-1\right)^{-1}.
$
The bottom row of Figure~\ref{fig:simu:K} summarizes the resulting accuracy, precision, and recall. Classification accuracy increases monotonically with the number of cells and is substantially higher in sparse settings, improving from roughly $60\%$ under low sparsity to nearly $90\%$ under high sparsity. This mirrors the improved estimation of the masking probability tensor observed in the top row.

The precision--recall curves provide further insight into this behavior. Under low sparsity, recall is close to zero, indicating that most false zeros remain undetected, while precision remains relatively high because the few detected false zeros are usually classified correctly. As the number of cells increases, recall improves substantially whereas precision changes little. Under medium- and high-sparsity settings, both precision and recall are close to one, indicating accurate false-zero detection across a wide range of sample sizes.

\begin{figure}[h]
    \centering
    \begin{subfigure}[b]{0.49\textwidth}
        \centering
        \includegraphics[width=\textwidth]{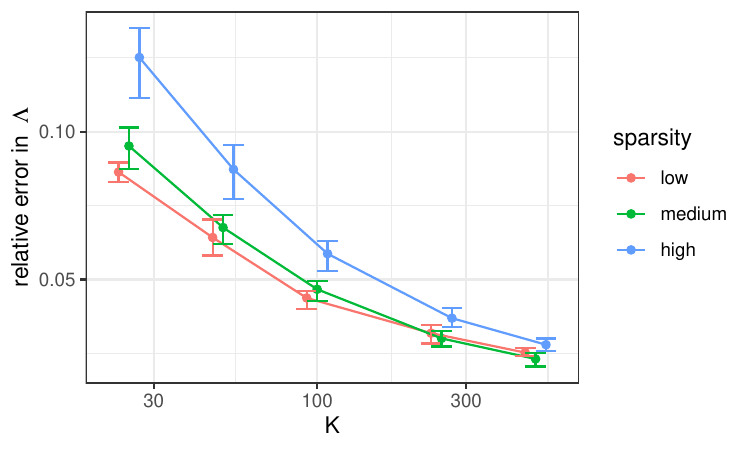}
        \label{fig:simu:K:reL}
    \end{subfigure}
    \hfill
    \begin{subfigure}[b]{0.49\textwidth}
        \centering
        \includegraphics[width=\textwidth]{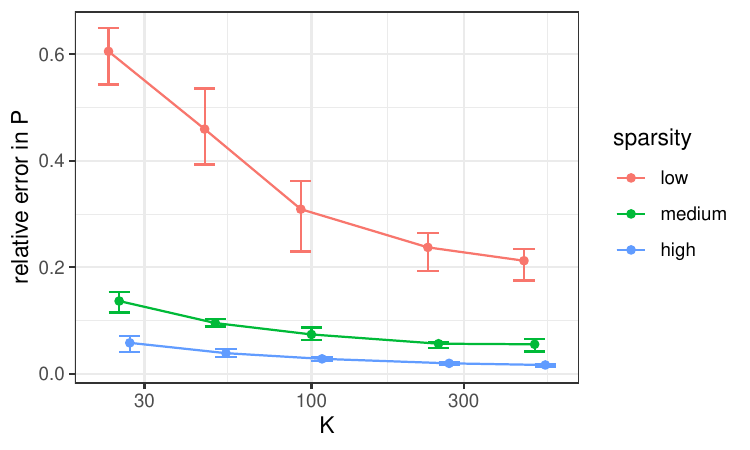}
        \label{fig:simu:K:reP}
    \end{subfigure}

    \begin{subfigure}[b]{0.49\textwidth}
        \centering
        \includegraphics[width=\textwidth]{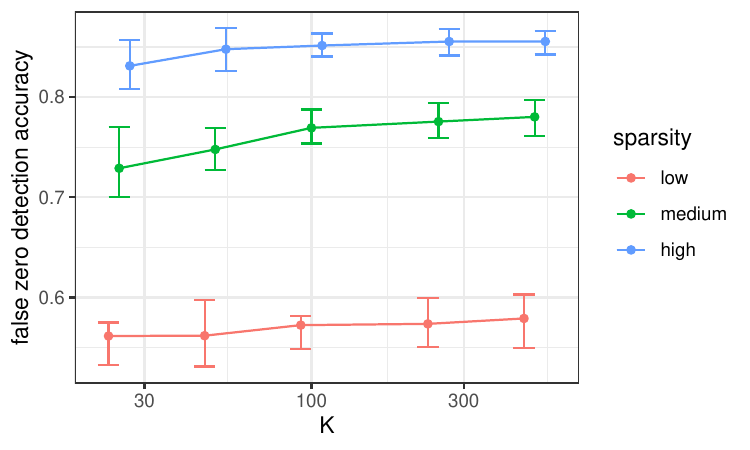}
        \label{fig:simu:K:acc}
    \end{subfigure}
    \hfill
    \begin{subfigure}[b]{0.45\textwidth}
        \centering
        \includegraphics[width=\textwidth]{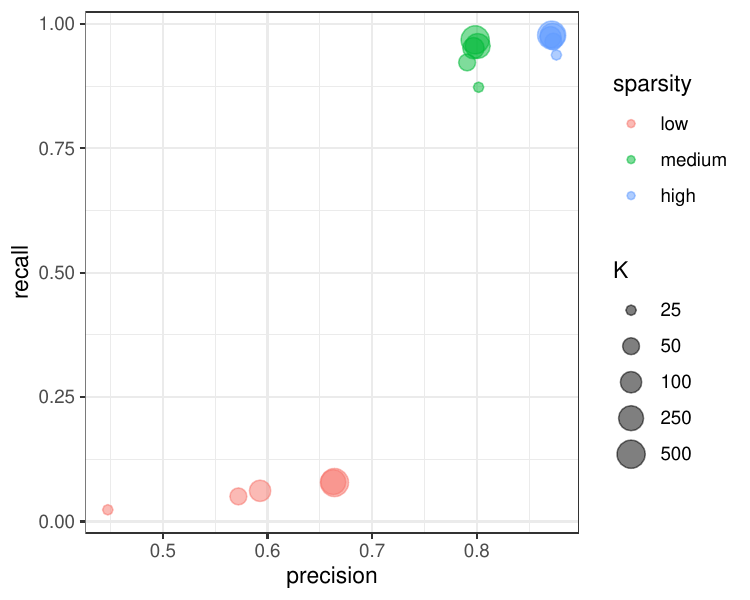}
        \label{fig:simu:K:pr}
    \end{subfigure}
    \caption{Performance of the proposed estimator as the number of cells $K$ increases under different sparsity levels (controlled by $\mu_\xi$). The top row reports the average relative Frobenius errors of the estimated intensity tensor $\widehat{\bm{\Lambda}}$ and masking probability tensor $\widehat{\bm{\mathcal P}}$. The bottom row summarizes the performance of the Bayes-based false-zero detection procedure in terms of accuracy, precision, and recall.}
    \label{fig:simu:K}
\end{figure}

\subsection{Multi-cluster scenario}
\label{sim: multi_cluster}

In this subsection, we investigate the performance of \texttt{ZITS} in the presence of multiple cell populations. We set $N=60$, $K=240$, and $L=5$, and consider the number of clusters
$R\in\{2,3,4,5,6\}.$
Each value of $R$ induces $R$ blocks of repeated rows in $\bbeta$ and $\bxi$, and hence $R$ groups of repeated frontal slices in the intensity and masking probability tensors. Throughout this experiment, we use the \texttt{eigenb} initialization.

We first study the effect of the number of clusters on parameter estimation. Figure~\ref{fig:simu:R:re} in Supplementary~\ref{app:sec:plots} reports the average relative Frobenius errors of the estimated intensity and masking probability tensors. As expected, the estimation error increases with the number of clusters for both tensors. Since the total number of cells is fixed, increasing $R$ reduces the number of observations available within each cluster, making estimation progressively more challenging. The effect is more pronounced for the intensity tensor $\bm{\Lambda}$ than for the masking probability tensor $\bm{\mathcal P}$.

We next evaluate how well the fitted model recovers the underlying cell clusters. In addition to the estimated parameter matrices $\widehat{\bbeta}$ and $\widehat{\bxi}$, we consider clustering based on the estimated tensors $\widehat{\bm{\Lambda}}$, $\widehat{\bm{\mathcal P}}$, expectation tensor $(\bm{\mathcal J}-\widehat{\bm{\mathcal P}})*\widehat{\bm{\Lambda}}$, and the imputed tensor obtained using the false-zero detection procedure of Section~\ref{sec: false}. 

For the estimated cell embedding matrices $\widehat{\bbeta}$ and $\widehat{\bxi}$, we apply $k$-means clustering directly to their rows. For the tensor-valued quantities, we vectorize the upper-triangular entries of each frontal slice, retain the first 20 principal components, and apply $k$-means in the resulting low-dimensional representation. Clustering performance is evaluated using the adjusted Rand index (ARI), and the results are shown in Figure~\ref{fig:simu:R}.

Overall, cluster recovery becomes more difficult as the number of clusters increases, regardless of the quantity used for clustering. Nevertheless, the proposed imputation procedure substantially improves clustering accuracy compared with the observed data, particularly under high-sparsity settings, demonstrating that recovering false zeros enhances the biological signal relevant for downstream analysis.

Among the model parameters, the estimated cell embedding matrix $\widehat{\bbeta}$ consistently provides the strongest cluster separation, indicating that the Poisson intensity captures most of the cluster-specific variation. In contrast, clustering based on $\widehat{\bm{\mathcal P}}$ and $\widehat{\bxi}$ performs substantially worse. This is likely because the entries of $\bm{\mathcal P}$ are constrained to lie in $[0,1]$, making cluster-specific patterns less separable than those of $\bm{\Lambda}$.

\begin{figure}[ht!]
     \begin{subfigure}[b]{\textwidth}
         \centering
         \includegraphics[width=0.9\textwidth]{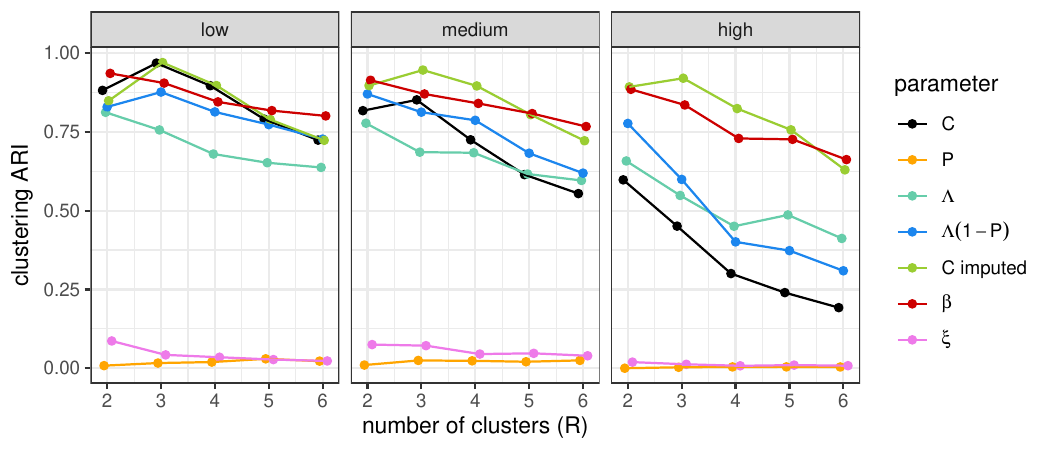}
     \end{subfigure}
     \caption{Adjusted Rand index (ARI) for recovering the true cell clusters using different estimated quantities. Columns correspond to different sparsity levels (controlled by $\mu_\xi$), while the horizontal axis shows the true number of clusters $R$.}
     \label{fig:simu:R}
\end{figure}

\section{Real Data Analysis}
\label{sec: hic}

We analyze the human single-cell Hi-C dataset of \citet{Ramani2017sc} at 10 Mb resolution, focusing on chromosomes 1--4, each of which contains more than 20 genomic loci at this resolution. The 10 Mb resolution is used to mitigate the extreme sparsity observed at finer resolutions, where more than 95\% of the entries are zero. The original dataset consists of four cell types: GM12878, HAP1, HeLa, and K562. Since GM12878 contains substantially fewer cells than the remaining cell types, we exclude it from the analysis. We then randomly sample 48 cells from each of HAP1 and HeLa and combine them with the 48 K562 cells, yielding a balanced dataset of $K=144$ cells from $R=3$ cell types. To reduce sampling variability, all experiments are repeated over 10 independent random subsamples. Following standard preprocessing for single-cell Hi-C data, we set the largest two diagonals of each contact matrix to zero to reduce diagonal dominance. Throughout the analysis, we use cubic B-splines with basis dimension $Q=5$.

We first investigate the effect of the embedding dimension. For chromosome~1, we vary
$L\in\{3,5,10,15,20\},$
and report the average ARI over 10 random subsamples. The left panel of Figure~\ref{fig:hic:ari} shows that clustering based on the imputed tensor consistently outperforms clustering based on the raw contact tensor, demonstrating that the proposed imputation improves cell-type separability. The ARI of the imputed tensor stabilizes once $L\ge10$, so we fix $L=10$ for the remainder of the analysis. Finally, the cell embeddings $\widehat{\bbeta}$ associated with the Poisson intensity consistently yield substantially higher ARI than the raw contact data, whereas the embeddings $\widehat{\bxi}$ associated with the masking probabilities provide only modest improvements.

To assess robustness across chromosomes, we repeat the experiment with $L=10$ for chromosomes~1-4. The results are shown in the right panel of Figure~\ref{fig:hic:ari}. The same conclusions hold across all chromosomes: imputation consistently improves clustering performance, and the embeddings obtained from $\widehat{\bbeta}$ consistently outperform those based on~$\widehat{\bxi}$.

\begin{figure}[h]
    \centering
     \begin{subfigure}[b]{0.49\textwidth}
         \centering
         \includegraphics[width=\textwidth]{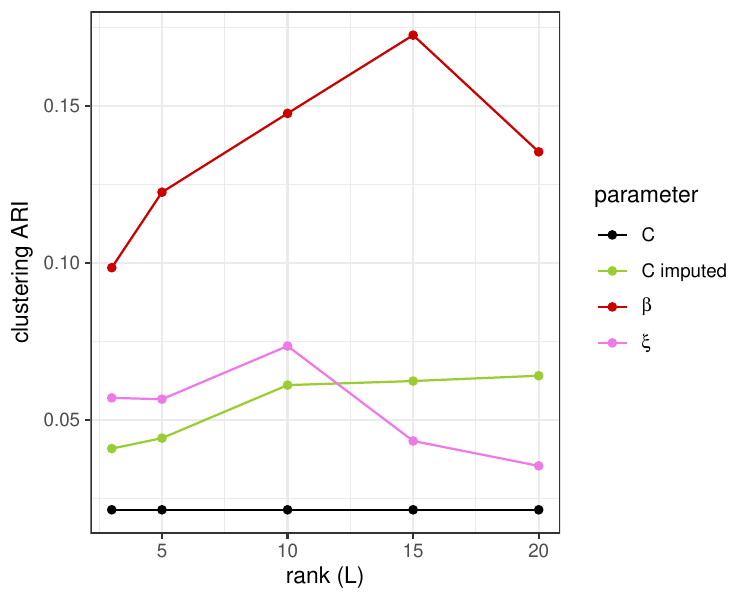}
     \end{subfigure}
    \hfill
    \begin{subfigure}[b]{0.49\textwidth}
        \centering
        \includegraphics[width=\textwidth]{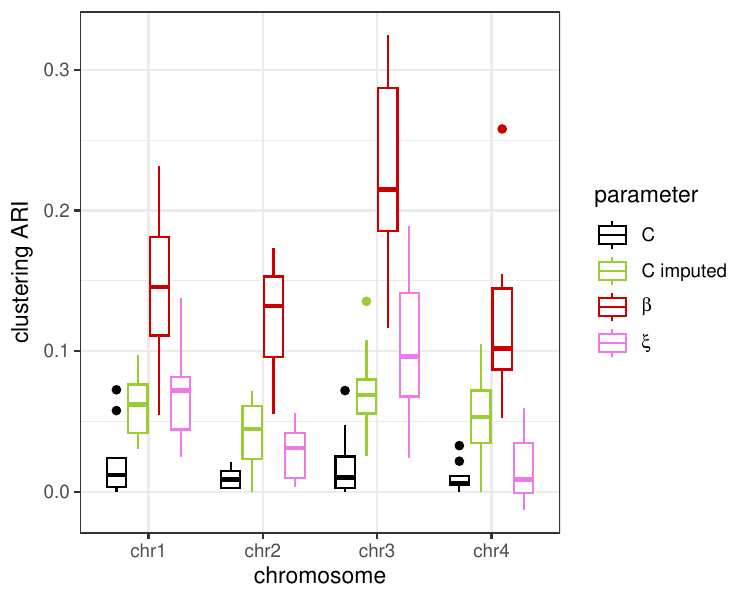}
    \end{subfigure}
    \caption{Cell-type clustering performance of \texttt{ZITS} on the Ramani single-cell Hi-C dataset. Left: ARI as a function of the embedding dimension $L$ for chromosome~1, averaged over 10 random subsamples. Right: ARI distributions across chromosomes with $L=10$.}
    \label{fig:hic:ari}
\end{figure}

We next compare \texttt{ZITS} with existing imputation and embedding methods. We first consider two methods that explicitly account for structural zeros, \texttt{HiCImpute} \citep{Xie2022hicimpute} and \texttt{scHiCSRS} \citep{Xu2022schicsrs}, using the authors' \texttt{R} implementations with default settings. As shown in the left panel of Figure~\ref{fig:hic:competitors}, \texttt{ZITS} consistently achieves the highest ARI, indicating superior recovery of the underlying cell-type structure. Among the competing methods, \texttt{HiCImpute} improves clustering relative to the raw data, whereas \texttt{scHiCSRS} yields substantially smaller improvements.

We further compare \texttt{ZITS} with \texttt{scHiCluster} \citep{Zhou2019schicluster}, which performs imputation through random walks on binarized contact matrices without explicitly modeling structural zeros. For a fair comparison, we binarize the imputed tensor produced by \texttt{ZITS} before clustering. The middle panel of Figure~\ref{fig:hic:competitors} shows that the binarized output of \texttt{ZITS} substantially outperforms both the raw binarized data and \texttt{scHiCluster}, demonstrating the advantage of modeling structural zeros before binarization.

Finally, we compare the cell embeddings produced by \texttt{ZITS} with those obtained from a standard CP tensor decomposition applied directly to the contact tensor. In the absence of competing structured low-rank embedding methods, CP provides a natural baseline. Both methods use embedding dimension $L=10$. As shown in the right panel of Figure~\ref{fig:hic:competitors}, the embeddings produced by \texttt{ZITS} consistently achieve higher ARI than those obtained from CP decomposition, highlighting the benefits of jointly modeling zero inflation, low-rank structure, and chromatin smoothness. A representative visualization of the learned embeddings is shown in Figure~\ref{fig:hic:emb} of Supplementary~\ref{app:sec:plots}.

\begin{figure}[]
    \centering
    \begin{subfigure}[b]{0.36\textwidth}
        \centering
        \includegraphics[width=0.9\textwidth]{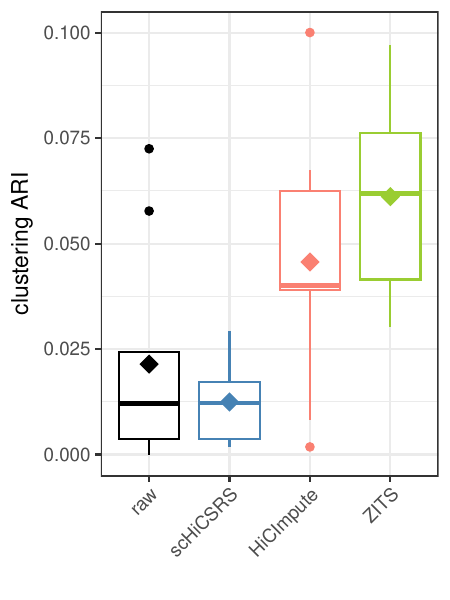}
    \end{subfigure}
    \hfill
    \begin{subfigure}[b]{0.30\textwidth}
        \centering
        \includegraphics[width=0.9\textwidth]{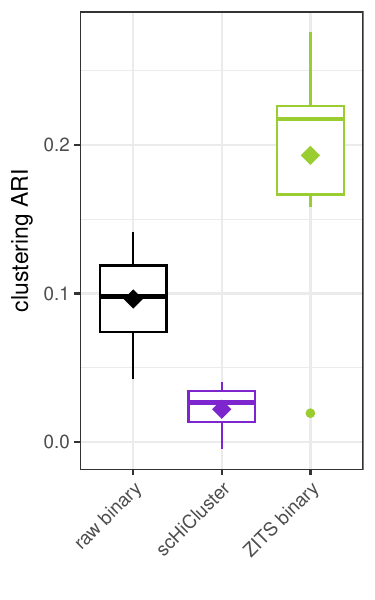}
    \end{subfigure}
    \hfill
    \begin{subfigure}[b]{0.24\textwidth}
        \centering
        \includegraphics[width=0.9\textwidth]{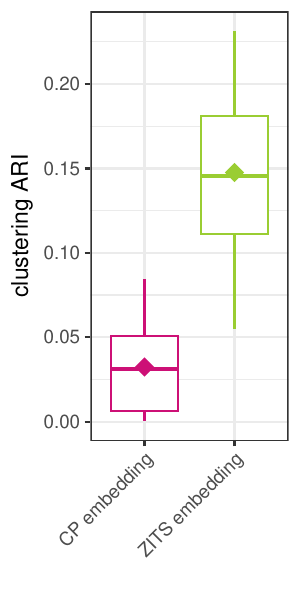}
    \end{subfigure}
    \caption{Comparison of \texttt{ZITS} with competing methods. Diamonds denote the average ARI over 10 random subsamples. Left: methods designed for count-valued single-cell Hi-C data with structural-zero modeling. Middle: methods operating on binarized contact matrices. Right: comparison of cell embeddings.}
    \label{fig:hic:competitors}
\end{figure}

\section{Conclusion}
\label{sec: conclusion}

In this paper, we proposed \texttt{ZITS}, a unified probabilistic framework for analyzing zero-inflated count tensors with latent heterogeneity and smooth structure. \texttt{ZITS} combines a zero-inflated Poisson likelihood, coupled low-rank tensor decomposition, cluster-specific latent embeddings, and spline-based smoothing within a unified estimation framework. This formulation enables simultaneous modeling of excess zeros, latent low-dimensional structure, and heterogeneous populations in sparse tensor-valued count data.

Beyond estimation, we develop a Bayes-optimal classifier for distinguishing structural and technical zeros, providing a principled approach to false-zero detection and imputation. We establish identifiability of the proposed model, derive concentration inequalities for zero-inflated Poisson random variables, and prove consistency of the proposed likelihood estimator. Through extensive simulation studies and analysis of single-cell Hi-C data, we demonstrate that \texttt{ZITS} improves parameter estimation, false-zero detection and cell-type separation compared with existing approaches.

Although motivated by single-cell Hi-C analysis, the proposed framework is broadly applicable to other zero-inflated multiway count data exhibiting latent heterogeneity and smooth structure, including applications in genomics, ecology, and recommender systems.

\section*{Disclosure statement}

The authors report there are no competing interests to declare.

\section*{Data Availability}

The Hi-C data analyzed in this study are publicly available from \citet{Ramani2017sc}. 

\section*{Declaration of Generative AI Use}

The authors acknowledge the use of generative AI for improving the visual appearance of ggplots including coloring and legend, checking grammar and spelling, and enhancing the flow throughout the manuscript.

\putbib[ref]

\end{bibunit}

\newpage

\appendix
\begin{bibunit}


\section*{Supplementary to ``A Probabilistic Model for Zero-Inflated Count Tensors with 
  Structured Latent Representations"}
\section{Additional plots and tables}
\label{app:sec:plots}

\begin{figure}[h]
    \centering
    \begin{subfigure}[b]{0.49\textwidth}
        \centering
        \includegraphics[width=\textwidth]{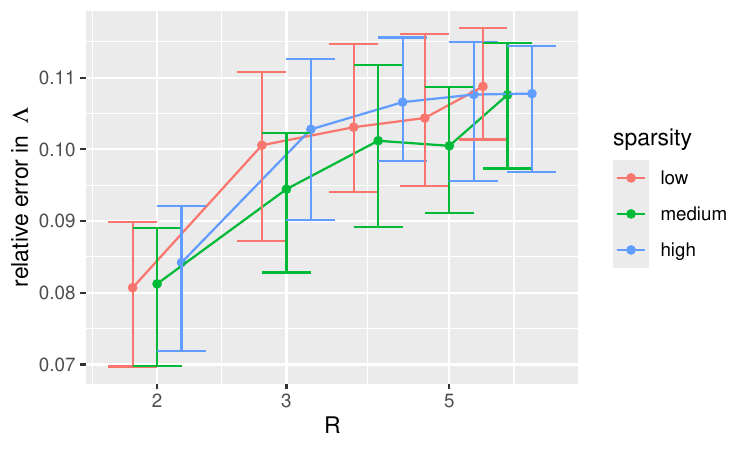}
        \label{fig:simu:R:reL}
    \end{subfigure}
    \hfill
    \begin{subfigure}[b]{0.49\textwidth}
        \centering
        \includegraphics[width=\textwidth]{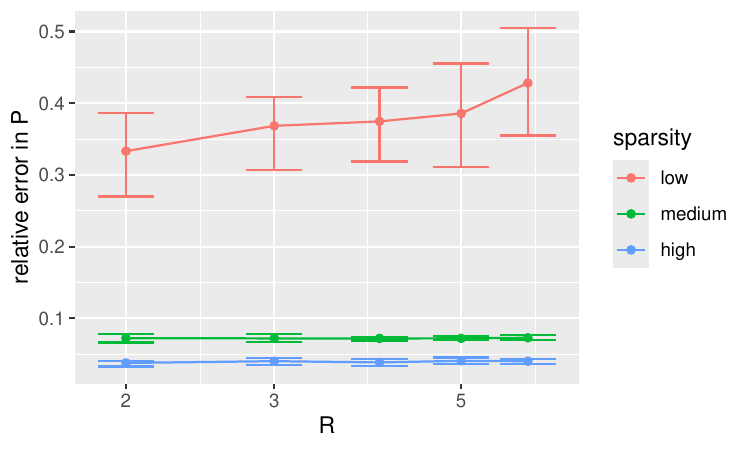}
        \label{fig:simu:R:reP}
    \end{subfigure}
    \caption{Performance of the fitting procedure as the number of clusters $R$ increases, shown for different sparsity levels (controlled by $\mu_\xi$).}
    \label{fig:simu:R:re}
\end{figure}

\begin{figure}[h]
   \centering
   \includegraphics[width=0.7\textwidth]{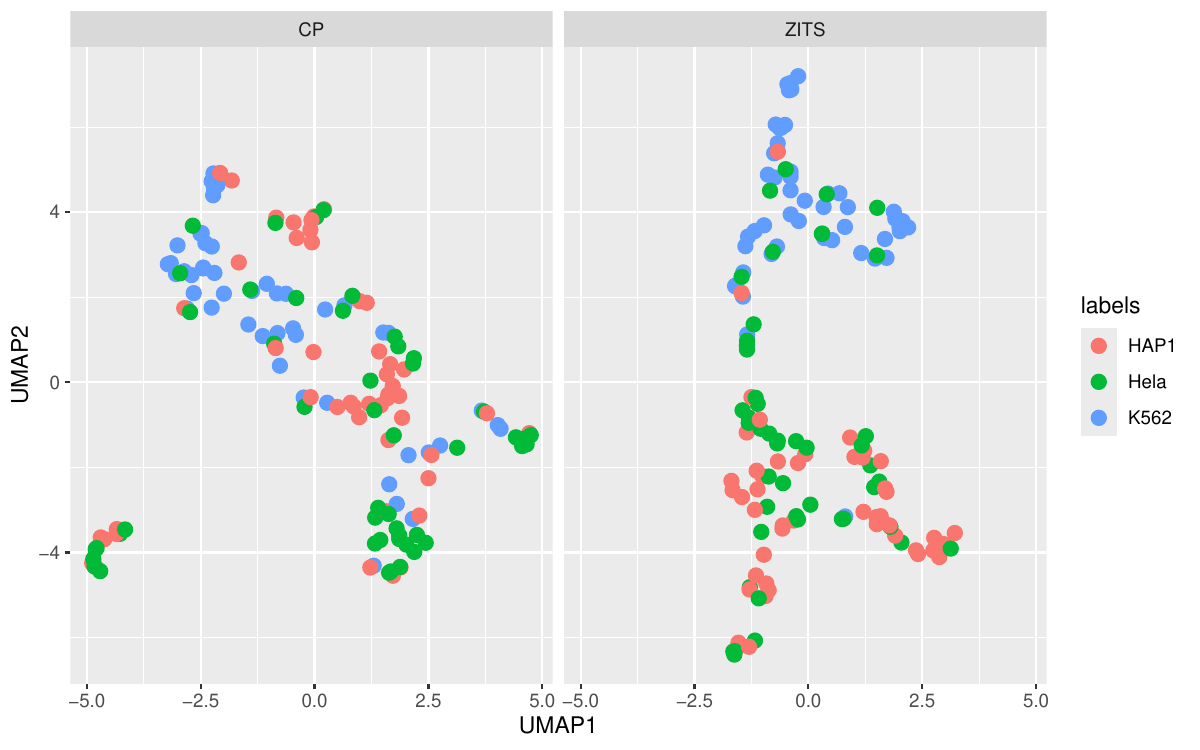}
   \caption{Two-dimensional visualization of the cell embeddings obtained by \texttt{ZITS} and CP decomposition for chromosome~3.}
   \label{fig:hic:emb}
\end{figure}

\section{Derivations}

\subsection{The negative log-likelihood}
\label{sec: likelihood}
For a zero-inflated Poisson random variable $C$ whose probability mass function is 
\begin{equation*}
\mathbb{P}\left(C=c \right)=
\begin{cases}
p + (1-p) e^{-\lambda}, \text { if } c=0; \\
(1 - p)\frac{\lambda^c e^{-\lambda}}{c!}, \text { otherwise}, 
\end{cases}
\end{equation*}
where $1-p = \frac{1}{1+ e^{-\theta}}$ and $\lambda = e^{\eta}$. The negative log-likelihood of $C$ reads 
\begin{align*}
& \quad - \log \left( \left[(1 - p)\frac{\lambda^C e^{-\lambda}}{C!}\right]^{\ind(C \ne 0)} \left[p + (1-p) e^{-\lambda}\right]^{\ind(C = 0)} \right)\\
& = - \ind(C \ne 0)\log \left( \left[(1 - p)\frac{\lambda^C e^{-\lambda}}{C!}\right]\right) - \left( 1- \ind \left(C \ne 0\right)\right)\log\left[p + (1-p) e^{-\lambda}\right]\\
& = \ind(C \ne 0) \log \left( \frac{ C!\left[p + (1-p) e^{-\lambda}\right]}{(1 - p)\lambda^C e^{-\lambda}}\right) - \log\left[p + (1-p) e^{-\lambda}\right]\\
& = \ind(C \ne 0) \left( \log \left( e^{-\lambda} + \frac{p}{1-p}\right) + \lambda - C \log \lambda  +\log C!\right) - \log (1-p) - \log (e^{-\lambda} + \frac{p}{1-p}).
\end{align*}
Under the logit transformation of $1-p$, we have $\frac{p}{1-p} = e^{-\theta}$ and $-\log(1-p) = \log (1+e^{-\theta})$. This leads to
\begin{align*}
& \quad - \log \left( \left[(1 - p)\frac{\lambda^C e^{-\lambda}}{C!}\right]^{\ind(C \ne 0)} \left[p + (1-p) e^{-\lambda}\right]^{\ind(C = 0)} \right)\\
& = \ind(C \ne 0) \left( \log \left( e^{-\lambda} + e^{-\theta}\right) + \lambda - C \log \lambda  +\log C!\right) + \log (1+e^{-\theta}) - \log (e^{-\lambda} + e^{-\theta})\\
& = \ind(C \ne 0) \left( \log \left( 1 + e^{\lambda - \theta}\right) - C \log \lambda  +\log C!\right) + \log (1+e^{-\theta}) + \lambda - \log (1 + e^{\lambda - \theta})\\
& = \ind(C \ne 0) \left( \log \left( 1 + e^{e^{\eta} - \theta}\right) - C \eta  +\log C!\right) + \log (1+e^{-\theta}) + e^\eta - \log (1 + e^{e^\eta - \theta}). 
\end{align*}
Note that the term $\ind(C \ne 0) \log C!$ only depends on the data but not the model parameters. Therefore, minimizing the negative log-likelihood is equivalent to minimizing 
\begin{align*}
 \ind(C \ne 0) \left( \log \left( 1 + e^{ e^{\eta} - \theta}\right) - C \eta\right) + \log (1+e^{-\theta}) + e^\eta - \log (1 + e^{e^\eta - \theta}). 
\end{align*}
It then follows that the negative log-likelihood of the Hi-C tensor, up to an affine transformation is 
\begin{align*}
\mathcal{L}\left(\bm{\eta}, \bTheta; \bm{\mathcal{C}}\right) = & \frac{2}{N(N+1)K} \sum_{i\le j \in [N], k \in[K]} \left[\ind(C_{i, j, k} \ne 0) \left( \log \left( 1 + e^{e^{\eta_{i, j, k} - \theta_{i, j, k}}}\right) - C_{i, j, k} \eta_{i, j, k}\right) \right.\\
& \left. + \log (1+e^{-\theta_{i, j, k}}) + e^{\eta_{i, j, k}} - \log (1 + e^{e^{\eta_{i, j, k} - \theta_{i, j, k}}})\right]. 
\end{align*}

\subsection{Derivatives of the objective function}
\label{appendix: derivative_homogeneous}
Recall that 
\begin{align*}
& \bm{\eta} = \bm{\mathcal{I}} \times_1 \balpha \times_2 \balpha \times_3 \widetilde{\bbeta} =  \bm{\mathcal{I}} \times_1 \bH \bGamma \times_2 \bH \bGamma \times_3 \widetilde{\bbeta}, \text{ and }\\
& \bm{\bTheta} = \bm{\mathcal{I}} \times_1 \balpha \times_2 \balpha \times_3 \widetilde{\bxi} =  \bm{\mathcal{I}} \times_1 \bH \bGamma \times_2 \bH \bGamma \times_3 \widetilde{\bxi}, 
\end{align*}
Therefore, by the chain rule and the symmetry of $\bm{\eta}$ in the first two modes, we have 
\begin{align*}
\frac{\partial \mathcal{L}}{\partial \widetilde{\beta}_{k, l}} = \sum_{i\le j, k^\prime}\frac{\partial \mathcal{L}}{\partial \eta_{i, j, k^\prime }} \frac{\partial \eta_{i, j, k^\prime }}{\partial \widetilde{\beta}_{k, l}} = \sum_{i\le j }\frac{\partial \mathcal{L}}{\partial \eta_{i, j, k }} \frac{\partial \eta_{i, j, k }}{\partial \widetilde{\beta}_{k, l}} = \frac{1}{2}\sum_{i, j }\frac{\partial \mathcal{L}}{\partial \eta_{i, j, k }} \frac{\partial \eta_{i, j, k }}{\partial \widetilde{\beta}_{k, l}} + \frac{1}{2}\sum_{i}\frac{\partial \mathcal{L}}{\partial \eta_{i, i, k }} \frac{\partial \eta_{i, i, k }}{\partial \widetilde{\beta}_{k, l}}. 
\end{align*}
Note that we herein define $\frac{\partial L}{\partial \eta_{i, j, k}} =\frac{\partial L}{\partial \eta_{j, i, k}}$ if $i> j$.

Since $ \frac{\partial \eta_{i, j, k }}{\partial \widetilde{\beta}_{k, l}} = \alpha_{i, l} \alpha_{j, l}$, for any $i, j \in [N]$ and $l\in [L]$, 
\begin{align*}
\frac{\partial \mathcal{L}}{\partial \widetilde{\beta}_{k, l}} &= \frac{1}{2}\sum_{i, j} (\nabla_{\bm{\eta}}\bm{\mathcal{L}})_{i, j, k} \alpha_{i, l} \alpha_{j, l} + \frac{1}{2} \sum_{i} \left(\text{Diag}(\nabla_{\bm{\eta}}\bm{\mathcal{L}})\right)_{i, k}( \balpha * \balpha)_{i, l}\\
& = \frac{1}{2} \left(\nabla_{\bm{\eta}}\bm{\mathcal{L}}\times_1 \balpha^\top \times_2 \balpha^\top\right)_{l, l, k} +\frac{1}{2}\left[\left(\text{Diag}(\nabla_{\bm{\eta}}\bm{\mathcal{L}})\right)^\top (\balpha * \balpha)\right]_{k, l}, 
\end{align*}
showing that 
\begin{align*}
\frac{\partial \mathcal{L}}{\partial \widetilde{\bbeta}} = \frac{1}{2} \text{Diag}\left(\nabla_{\bm{\eta}}\bm{\mathcal{L}}\times_1 \balpha^\top \times_2 \balpha^\top\right)^\top + \frac{1}{2} \left(\text{Diag}(\nabla_{\bm{\eta}}\bm{\mathcal{L}})\right)^\top (\balpha * \balpha). 
\end{align*}
Here, $\text{Diag} : \RR^{n\times n \times m} \rightarrow \RR^{n\times m}$ is a map taking the diagonal of the frontal slices to form the columns of the resulting matrix for any $n$ and $m$, and $*$ stands for Hadamard product. 

Similarly, we have
\begin{align*}
\frac{\partial \mathcal{L}}{\partial \widetilde{\bxi}} = \frac{1}{2} \text{Diag}\left(\nabla_{\bm{\Theta}}\bm{\mathcal{L}}\times_1 \balpha^\top \times_2 \balpha^\top\right)^\top + \frac{1}{2} \left(\text{Diag}(\nabla_{\bm{\Theta}}\bm{\mathcal{L}})\right)^\top (\balpha * \balpha). 
\end{align*}
We next turn to derive the derivative with respect to $\Gamma$. Note that
\begin{align*}
\frac{\partial \mathcal{L}}{\partial \gamma_{q, l}} &= \sum_{i\le j, k} \left(\frac{\partial \mathcal{L}}{\partial \eta_{i, j, k}} \frac{\partial \eta_{i, j, k }}{\partial \gamma_{q, l}} + \frac{\partial \mathcal{L}}{\partial \theta_{i, j, k}} \frac{\partial \theta_{i, j, k }}{\partial \gamma_{q, l}}\right)\\
 &= \frac{1}{2}\sum_{i, j, k} \left(\frac{\partial \mathcal{L}}{\partial \eta_{i, j, k}} \frac{\partial \eta_{i, j, k }}{\partial \gamma_{q, l}} + \frac{\partial \mathcal{L}}{\partial \theta_{i,j, k}} \frac{\partial \theta_{i, j, k }}{\partial \gamma_{q, l}}\right) + \frac{1}{2} \sum_{i, k} \left(\frac{\partial \mathcal{L}}{\partial \eta_{i, i, k}} \frac{\partial \eta_{i, i, k }}{\partial \gamma_{q, l}} + \frac{\partial \mathcal{L}}{\partial \theta_{i, i, k}} \frac{\partial \theta_{i, i, k }}{\partial \gamma_{q, l}}\right). 
\end{align*}
Note that 
\begin{align*}
\frac{\partial \eta_{i, j, k }}{\partial \gamma_{q, l}} & = \frac{\partial \sum_{l^\prime=1}^L \alpha_{i, l^\prime} \alpha_{j, l^\prime}\widetilde{\beta}_{k, l^\prime}}{\partial \gamma_{q, l}} = \frac{\partial \sum_{l^\prime=1}^L (\sum_{q^\prime=1}^Q H_{i, q^\prime} \gamma_{q^\prime, l^\prime}) (\sum_{q^{\prime\prime}=1}^Q H_{j, q^{\prime\prime}} \gamma_{q^{\prime\prime}, l^\prime})\widetilde{\beta}_{k, l^\prime}}{\partial \gamma_{q, l}}\\
& = \frac{\partial \sum_{q^\prime=1}^Q \sum_{q^{\prime\prime}=1}^Q  H_{i, q^\prime} \gamma_{q^\prime, l} H_{j, q^{\prime\prime}} \gamma_{q^{\prime\prime}, l}\widetilde{\beta}_{k, l}}{\partial \gamma_{q, l}} = \left[H_{i, q}(\bH\bGamma)_{j, l} + \bH_{j, q}(\bH \bGamma)_{i, l}\right]\widetilde{\beta}_{k, l}. \\
\end{align*}

It then follows that 
\begin{align*}
\sum_{i, j, k}\frac{\partial \mathcal{L}}{\partial \eta_{i, j, k}} \frac{\partial \eta_{i, j, k }}{\partial \gamma_{q, l}}
& = \sum_{i, j, k}(\nabla_{\bm{\eta}}\bm{\mathcal{L}})_{i, j, k}\left[H_{i, q}(\bH\bGamma)_{j, l} + \bH_{j, q}(\bH \bGamma)_{i, l}\right]\widetilde{\beta}_{k, l}\\
&= \left(\nabla_{\bm{\eta}}\bm{\mathcal{L}} \times_1 \bH^\top \times_2 (\bH \bGamma)^\top \times_3 \widetilde{\bbeta}^\top\right)_{q, l, l} + \left(\nabla_{\bm{\eta}}\bm{\mathcal{L}} \times_1 (\bH \bGamma)^\top\times_2 \bH^\top \times_3 \widetilde{\bbeta}^\top\right)_{l, q, l}\\
&= 2\left(\nabla_{\bm{\eta}}\bm{\mathcal{L}} \times_1 \bH^\top \times_2 \balpha^\top \times_3 \widetilde{\bbeta}^\top\right)_{q, l, l}\\
& = 2\widetilde{\text{Diag}}(\nabla_{\bm{\eta}}\bm{\mathcal{L}} \times_1 \bH^\top \times_2 \balpha^\top \times_3 \widetilde{\bbeta}^\top)_{q, l},  \text{ and }\\
 \sum_{i, k} \frac{\partial \mathcal{L}}{\partial \eta_{i, i, k}} \frac{\partial \eta_{i, i, k }}{\partial \gamma_{q, l}} & = 2\sum_{i, k} \left(\text{Diag}(\nabla_{\bm{\eta}}\bm{\mathcal{L}})\right)_{i, k}H_{i, q}(\bH\bGamma)_{i, l} \widetilde{\beta}_{k, l}= 2\left[\bH^\top \left(\text{Diag}(\nabla_{\bm{\eta}}\bm{\mathcal{L}}) \widetilde{\bbeta} * \bH\bGamma\right)\right]_{q, l}. 
\end{align*}
where $\widetilde{\text{Diag}}: \RR^{n \times m \times m} \rightarrow \RR^{n\times m}$ takes the diagonal of the horizontal slices of the input tensor to form the rows of the resulting matrix for any integers $n$ and $m$. Clearly, similar expression holds if we change $\eta_{i,j, k}$ and $\widetilde{\beta}_{k, l}$ to $\theta_{i,j, k}$ and $\widetilde{\xi}_{k, l}$, respectively. Hence, 
\begin{align*}
\frac{\partial \mathcal{L}}{\partial \bGamma} & =  \widetilde{\text{Diag}}(\nabla_{\bm{\eta}}\bm{\mathcal{L}} \times_1 \bH^\top \times_2 \balpha^\top \times_3 \widetilde{\bbeta}^\top) + \bH^\top \left(\text{Diag}(\nabla_{\bm{\eta}}\bm{\mathcal{L}}) \widetilde{\bbeta} * \balpha\right)\\
&  +\widetilde{\text{Diag}}(\nabla_{\bm{\Theta}}\bm{\mathcal{L}} \times_1 \bH^\top \times_2 \balpha^\top \times_3 \widetilde{\bxi}^\top) + \bH^\top \left(\text{Diag}(\nabla_{\bm{\Theta}}\bm{\mathcal{L}}) \widetilde{\bxi} * \balpha\right).
\end{align*}

Finally, 
\begin{align}
\label{equ: gradient_eta}
& \quad (\nabla_{\bm{\eta}}\bm{\mathcal{L}})_{i, j, k} \nonumber\\
&= \frac{2}{N(N+1)K} \left[\left(\ind(C_{i, j, k}\ne 0)-1\right)
\frac{\exp\{\eta_{i, j, k} + e^{\eta_{i, j, k}} - \theta_{i, j, k}\}}{1+ \exp\{e^{\eta_{i, j, k}} - \theta_{i, j, k}\}} - \ind(C_{i, j, k}\ne 0)C_{i,j, k} + e^{\eta_{i, j, k}}\right] \nonumber\\
&= \frac{2}{N(N+1)K} \left[-\ind(C_{i, j, k}= 0)
\frac{\exp\{\eta_{i, j, k}\}}{1+ \exp\{ \theta_{i, j, k}-e^{\eta_{i, j, k}}\}} - C_{i,j, k} + e^{\eta_{i, j, k}}\right], 
\end{align}
and 
\begin{align}
\label{equ: gradient_theta}
 (\nabla_{\bm{\Theta}}\bm{\mathcal{L}})_{i, j, k} &=\frac{2}{N(N+1)K} \left[\left(1- \ind(C_{i, j, k}\ne 0)\right)
\frac{\exp\{e^{\eta_{i, j, k}} - \theta_{i, j, k}\}}{1+ \exp\{e^{\eta_{i, j, k}} - \theta_{i, j, k}\}} - \frac{1}{1 + e^{\theta_{i, j, k}}}\right] \nonumber\\
&=\frac{2}{N(N+1)K} \left[\ind(C_{i, j, k}= 0)
\frac{1}{1+ \exp\{\theta_{i, j, k}-e^{\eta_{i, j, k}}\}} - \frac{1}{1 + e^{\theta_{i, j, k}}}\right], 
\end{align}
for any $i, j \in [N]$ and $k \in [K]$.

\section{Model identifiability}

\textbf{Proof of Proposition \ref{prop: ZIP_identifiability}: }
Since $\{(p_{i, j, k}, \lambda_{i, j,k}): i\le j \in [n] \}$ and  $\{(p^\prime_{i, j, k}, \lambda^\prime_{i, j,k}): i\le j \in [n], k\in [K]\}$ parametrize the same zero-inflated Poisson distribution, we have
\begin{align*}
p_{i, j, k} + (1 - p_{i, j, k})e^{-\lambda_{i, j, k}} &= p_{i, j, k}^\prime + ( 1- p_{i, j, k}^\prime) e^{- \lambda_{i, j, k}^\prime} \text{ and }\\
(1 - p_{i, j, k}) \lambda_{i, j, k}^c e^{ - \lambda_{i, j, k}}/c! &= (1 - p_{i, j, k}^\prime) (\lambda_{i, j, k}^\prime)^c e^{ - \lambda_{i, j, k}^\prime}/c!
\end{align*}
for $c = 1,  2, \ldots$. When $c = 1$ and $2$, we have $(1 - p_{i, j, k}) \lambda_{i, j, k} e^{- \lambda_{i, j, k}} = (1 - p_{i, j, k}^\prime) \lambda_{i, j, k}^\prime e^{- \lambda_{i, j, k}^\prime}$ and $(1 - p_{i, j, k}) \lambda_{i, j, k}^2 e^{- \lambda_{i, j, k}}/2 = (1 - p_{i, j, k}^\prime) (\lambda_{i, j, k}^\prime)^2 e^{- \lambda_{i, j, k}^\prime}/2$. Taking the ratio of the above two equations, we obtain $\lambda_{i, j, k} = \lambda^\prime_{i, j, k}$ since $p_{i, j, k} \ne 1$ and $\lambda_{i, j, k} \ne 0$.  Plugging $\lambda_{i, j, k} = \lambda^\prime_{i, j, k}$ back to the equation $(1 - p_{i, j, k}) \lambda_{i, j, k} e^{- \lambda_{i, j, k}} = (1 - p_{i, j, k}^\prime) \lambda_{i, j, k}^\prime e^{- \lambda_{i, j, k}^\prime}$ yields $p_{i,j, k} = p^\prime_{i, j, k}$, for any $i\le j \in [n]$ and $k \in [K]$. \qed

\noindent \textbf{Proof of Proposition \ref{prop: identifiablity}:} We first define a new tensor $\bm{\mathcal{T}} \in \RR^{N \times N \times 2K}$, which concatenates $\bm{\eta}$ and $\bTheta$ along the third mode such that 
\begin{align*}
T_{i, j, k} = \begin{cases}
\eta_{i,j, k} & \text{ if } k\le K, \text{ and }\\
\theta_{i, j, k -K} &\text{otherwise}. 
\end{cases}
\end{align*}
The doubly tensor decomposition $\bm{\eta} = \bm{\mathcal{I}}\times_1 \balpha\times_2 \balpha\times_3 \widetilde{\bbeta}$ and $\bTheta = \bm{\mathcal{I}}\times_1{\balpha}\times_2 \balpha \times_3 \widetilde{\bxi}$ can be rewritten as 
\begin{align}
\label{equ: T_decom}
\bm{\mathcal{T}} = \bm{\mathcal{I}} \times_1 \bm{\balpha} \times_2 \bm{\balpha}\times_3 (\widetilde{\bbeta}^\top,\widetilde{\bxi}^\top)^\top =  \bm{\mathcal{I}} \times_1 \bH\bGamma \times_2 \bH\bGamma \times_3 (\bI_2 \otimes \bZ)  (\bbeta^\top, \bxi^\top)^\top.
\end{align}
Therefore, the doubly tensor decomposition is identifiable if and only if the decomposition of $\bm{\mathcal{T}}$ in (\ref{equ: T_decom}) is identifiable. By the Kruskal's Theorem \citep{sidiropoulos2000uniqueness}, (\ref{equ: T_decom}) will be identifiable up to column permutations and scalar multiplications of $\balpha$ and $(\widetilde{\bbeta}^\top,  \widetilde{\bxi}^\top)^\top$ if $2\kappa_{\balpha} + \kappa_{( \widetilde{\bbeta}^\top,  \widetilde{\bxi}^\top)^\top } \ge 2L + 2$, or simply 
$\kappa_{\balpha} + \frac{1}{2}\kappa_{(\bbeta^\top, \bxi^\top)^\top } \ge L + 1$, where we use the fact that $\kappa_{( \widetilde{\bbeta}^\top,  \widetilde{\bxi}^\top)^\top } = \kappa_{(\bbeta^\top, \bxi^\top)^\top }$.

Moreover, since $(\widetilde{\bbeta}^\top,\widetilde{\bxi}^\top)^\top = (\bI_2 \otimes \bZ)  (\bbeta^\top, \bxi^\top)^\top$, the $l$-th column of $\left(\widetilde{\bbeta}^\top, \widetilde{\bxi}^\top\right)^\top$ comes from the $l$-th column of $(\bbeta^\top, \bxi^\top)^\top$ by adding certain repeated coordinates according to $\bZ$, for $l\in [L]$. Therefore, imposing any permutation on the columns of $(\widetilde{\bbeta}^\top,\widetilde{\bxi}^\top)^\top$ is equivalent to imposing the same column permutation on  $(\bbeta^\top, \bxi^\top)^\top$.
When we permute the columns of the matrix $(\bbeta^\top,  \bxi^\top)^\top$, we need to perform the same column permutation on $\balpha$ or equivalently on $\bGamma$ simultaneously. Clearly, this does not affect the cluster assignments. 

For the scaling multiplication issue, we can multiple the $l$-th column of $\balpha$ by $m$ and the $l$-th column of $(\widetilde{\bbeta}^\top,  \widetilde{\bxi}^\top)^\top$ by $m^{-2}$, which is equivalent to multiplying the $l$-th column of $(\bbeta^\top, \bxi^\top)^\top$ by $m^{-2}$, for any $m\ne 0$, and $l \in [L]$. This shows that scaler multiplication also does not change the cluster structure of the cells. In essence, we can extract $\bZ$ from $(\widetilde{\bbeta}^\top,  \widetilde{\bxi}^\top)^\top$, up to column permutation, or relabeling,  as long as the rows of $(\bbeta, \bxi)$ are distinct. Once we extract $\bZ$ from $(\widetilde{\bbeta}^\top,  \widetilde{\bxi}^\top)^\top$, we also obtain $\bbeta$ and $\bxi$ up to row permutation, column permutation, and column scaling.

Finally, as the basis functions $h_1(t), \ldots, h_Q(t)$ are orthogonal, the columns of $\bH$ are guaranteed to be orthogonal when $N$ is large \citep{han2024guaranteed}. Once $\bH$ has full column rank, there will be a unique $\bGamma$ such that $\balpha = \bH \bGamma$, showing the identifiability of $\bGamma$ up to column permutations and scalings.  \qed


\section{More on zero-inflated Poisson distribution}
\label{sec: appendix_ZIP_more}

\begin{definition}
\label{def: Zero-inflated}
A counting random variable $C$ follows zero-inflated Poisson distribution with parameter $(p, \lambda)$ if 
\begin{align*}
\PP(C = c) = \begin{cases}
p + (1-p) e^{-\lambda}, \text{ if } c=0, \text{ and }\\
(1 - p) \frac{\lambda^c e^{-\lambda}}{c!}, \text{ if } c \in \NN^+. 
\end{cases}
\end{align*}
\end{definition}
Since we can decompose $C = B \widetilde{C}$, where $B$ is a Bernoulli random variable with successful probability $1-p$ and $\widetilde{C}$ independent with $B$ is a Poisson random variable with intensity $\lambda$. The expectation of $C$ is 
\begin{align*}
\EE C  = \EE B \widetilde{C} = \EE B \EE\widetilde{C} = (1 - p)\lambda. 
\end{align*}
By the law of total variance, the variance of $C$ is 
\begin{align*}
\Var(C) &= \Var(B\widetilde{C})\\
&= \EE \left[\Var(B\widetilde{C}|\widetilde{C})\right] + \Var\left[\EE \left(B\widetilde{C}|\widetilde{C}\right)\right]\\
&= p(1-p)\EE (\widetilde{C}^2) + \Var \left[ (1-p) \widetilde{C}\right]\\
& = p(1-p)\lambda(1+\lambda) + (1-p)^2\lambda\\
& = \lambda (1-p) (p\lambda + 1). 
\end{align*}
Clearly, the variance of $C$ is greater than $\lambda$ if $\lambda > \frac{1}{1-p}$ and smaller than $\lambda$ if $\lambda < \frac{1}{1 -p}$.

\noindent \textbf{Proof of Proposition \ref{prop: psi_1}:}
Computing the absolute moment generating function for ${X\sim \text{Poisson}(\lambda)}$ 
\begin{align*}
\EE e^{|X|/t} = \sum_{k =0}^{+\infty}  e^{k/t} \frac{\lambda^k e^{-\lambda}}{k!} = \exp\left(\lambda (e^{1/t} - 1)\right), 
\end{align*}
which is decreasing with $t \in (0, \infty)$. Setting $\EE e^{|X|/t} =2$ yields that 
\begin{align*}
t = \left(\log \left(\frac{1}{\lambda}\log 2 + 1\right)\right)^{-1},  
\end{align*}
which is the exact $\psi_1$-norm of $X$. 

Similarly, for $C \sim ZIP(p, \lambda)$,
\begin{align*}
\EE e^{|C|/t} &= p + (1-p) e^{-\lambda} +(1-p) \sum_{k =1}^{+\infty}  e^{k/t} \frac{\lambda^k e^{-\lambda}}{k!}\\
&= p +(1-p) \sum_{k =0}^{+\infty}  e^{k/t} \frac{\lambda^k e^{-\lambda}}{k!}\\
& = p + (1-p)\exp\left(\lambda (e^{1/t} - 1)\right),
\end{align*}
which is also decreasing with $t\in (0, +\infty)$. Setting $\EE e^{|C|/t}= 2$ yields that
\begin{align*}
t = \left(\log\left(\frac{1}{\lambda}\log \frac{2-p}{1-p} + 1\right)\right)^{-1}, 
\end{align*}
which is the exact $\psi_1$-norm of $C$.  Moreover, since $\frac{2-p}{1-p}> 2$ for $p\in (0, 1)$, we have $\|C\|_{\psi_1} < \|X\|_{\psi_1}$ for $p \in (0, 1)$. \qed

In the following, we introduce the Hurdle Poisson distribution and its relationship with zero-inflated Poisson distribution. This result will be used to derive the consistency theory of the proposed estimator of ZITS. 
\begin{definition}
\label{def: Hurdel}
A counting random variable $C$ follows Hurdle Poisson distribution with parameter $(p, \lambda)$, $p \ne 1$, if 
\begin{align*}
\PP(C = c) = \begin{cases}
p, \text{ if } c=0, \text{ and }\\
\frac{1 - p}{1 - e^{-\lambda}} \frac{\lambda^c e^{-\lambda}}{c!}, \text{ if } c \in \NN^+. 
\end{cases}
\end{align*}
\end{definition}

According to the above two definitions, a Hurdle Poisson distribution with parameter $(p, \lambda)$ is a zero-inflated Poisson distribution with parameter $\left((p - e^{-\lambda})/(1 - e^{-\lambda}), \lambda\right)$, provided that $e^{-\lambda} \le p < 1 $. This is because 
\begin{align*}
p = \frac{p - e^{-\lambda}}{1 - e^{-\lambda}} + \left(1 - \frac{p - e^{-\lambda}}{1 - e^{-\lambda}}\right) e^{-\lambda}, \text{ and } \frac{1-p}{1 - e^{-\lambda}} = 1 - \frac{p - e^{-\lambda}}{1 - e^{-\lambda}}. 
\end{align*}
Conversely, a zero-inflated Poisson distribution with parameter $(p, \lambda)$ is a Hurdle Poisson distribution with parameter $\left(p + (1-p)e^{-\lambda}, \lambda\right)$. This is because 
\begin{align*}
1 - p = \frac{1 - \left(p + (1-p)e^{-\lambda}\right)}{1 - e^{-\lambda}}. 
\end{align*}

Because $0< e^{-\lambda} \le p + (1-p)e^{-\lambda}< p+(1-p) = 1$, any zero-inflated Poisson distribution can be reparametrized as Hurdle Poisson distribution. However, not every Hurdle Poisson distribution can be reparametrized as zero-inflated Poisson distribution, implying that the family of zero-inflated Poisson distributions is a special subset of the family of Hurdle Poisson distributions.

\noindent \textbf{Proof of Theorem \ref{thm: concentration}} The centered moment generating function of $C$ satisfies 
\begin{align*}
\EE e^{t \left(C - (1-p)\lambda \right)} &= \sum_{c = 1}^{+\infty} (1 - p) \frac{\lambda^c e^{-\lambda}}{c!} e^{t\left(c - (1-p)\lambda\right)} + \left(p + (1 - p) e^{-\lambda}\right) e^{- t(1 - p)\lambda}\\ 
& = (1 - p) e^{p \lambda t + \lambda(e^t - t - 1)} + pe^{- (1 - p)\lambda t}\\
&\le e^{\lambda(e^t - t- 1)}\left[ (1- p) e^{p\lambda t} + p e^{-(1-p) \lambda t}\right]\\
& = e^{\lambda(e^t - t- 1)} \EE e^{\lambda t[B - (1-p)]}, 
\end{align*}
for any $t> 0$, where the inequality comes from the fact that $e^t - t - 1> 0$, and $B$ is a Bernoulli random variable whose successful probability is $1-p$. Note that 
\begin{align*}
\EE e^{\lambda t [B - (1-p)]} & = 1+ \sum_{k= 2}^{+\infty} \frac{(\lambda t)^k}{k!}  \EE [B - (1-p)]^k\\
& = 1 +  \sum_{k= 2}^{+\infty}  \frac{(\lambda t)^k}{k!}\{(1-p) p^k + p [-(1-p)]^k\}\\
& \le 1 +\frac{p(1-p)\lambda^2 t^2}{2}+ 2p (1-p) \sum_{k = 3}^{+\infty}  \frac{(\lambda t)^k}{6}[p \vee (1-p)]^{k-1}\\
& \le 1 +\frac{p(1-p)\lambda^2 t^2}{2}+\frac{p (1-p)\lambda^3 t^3 [p\vee (1-p)]^2}{2(1 - [p \vee (1-p)] \lambda t)}\\
& \le   1+ \frac{p(1-p)\lambda^2 t^2}{2(1 - [p \vee (1-p)]\lambda t)}\\
& \le e^{\frac{p(1-p)\lambda^2 t^2}{2(1 - [p \vee (1-p)]\lambda t)}},
\end{align*}
where $p\vee (1-p): = \max\{p, 1-p\}$ and the last three inequalities hold when $t< \frac{1}{[p\vee (1-p)] \lambda}$. Moreover, 
\begin{align*}
e^t - t - 1 =\sum_{k =2}^{+\infty} \frac{t^k}{k!} \le \frac{t^2 }{2(1 - t)}, 
\end{align*}
when $t<1$, leading to 
\begin{align*}
\log \EE e^{t \left(C - (1-p)\lambda \right)} &\le \frac{\lambda t^2}{2(1 - t)} + \frac{p(1-p)\lambda^2 t^2}{2(1 - [p \vee (1-p)]\lambda t)}\\
& \le \frac{\lambda [1 + p(1-p)\lambda] t^2}{2(1 - \{1\vee [p \vee (1-p)] \lambda\}t)}\\
& \le \frac{\lambda [1 + p(1-p)\lambda] t^2}{2\left(1 - \max\{1, \lambda\}t\right)}, 
\end{align*}
where the first two inequalities hold for any $0<t< \frac{1}{\max\{1, [p\vee(1-p)]\lambda\}}$ and the last inequality holds for any $0<t< \frac{1}{\max\{1, \lambda\}}$. It is not hard to verify that the above display also holds when $ - \frac{1}{\max\{1, \lambda\}}<t\le  0$. This implies that $C$ is a sub-exponential random variable with parameter $\left(\lambda [1 + p(1-p)\lambda],  1\vee [p \vee (1-p)] \lambda \right)$, or simply a sub-exponential random variable with parameter $\left(\lambda[1 + p(1-p)\lambda], \max\{1, \lambda\}\right)$. 

 Since $C_i\sim \text{subE} \left( \lambda_i[1+p_i(1-p_i) \lambda_i], \max\{1, \lambda_i\}\right)$, we have
\begin{align*}
\sum_{i = 1}^n a_i C_i \sim \text{subE}\left(\sum_{i = 1}^n a_i^2 \lambda_i[1+p_i(1-p_i) \lambda_i], \max_i \max\{|a_i|, |a_i|\lambda_i \}      \right),
\end{align*}
where subE$(\cdot, \cdot)$ here is a shorthand notation for sub-exponential random variable with the corresponding parameters. 
By the Bernstein's inequality, 
\begin{align*}
& \quad \PP\left(\frac{1}{n} \sum_{i = 1}^n a_i [C_i - (1-p_i)\lambda_i]\ge M  \right) \\
& = \PP\left( \sum_{i = 1}^n a_i [C_i - (1 - p_i) \lambda_i]\ge nM  \right)\\
& \le \exp\left( - \frac{n^2M^2}{2\left(\sum_{i = 1}^n a_i^2 \lambda_i[1+p_i(1-p_i) \lambda_i] +nM\max_i \max\{|a_i|, |a_i|\lambda_i \} \right  )}\right). \qed
\end{align*}

\section{Technical lemmas}
\label{sec: appdendix_D_lemmas}
\begin{lemma}
\label{lemma: KL_Hellinger_Poisson}
For any $a, b\ge 0$, we have 
\begin{align*}
a - b - b\log \frac{a}{b} \ge (\sqrt{a} - \sqrt{b})^2. 
\end{align*}
\end{lemma}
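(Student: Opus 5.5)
The plan is to reduce the inequality to the elementary bound $\log x \le x-1$ for $x>0$, using the substitution $x=\sqrt{a/b}$. First I treat the generic case $a,b>0$. Dividing both sides by $b>0$ and writing $a = b x^2$ with $x=\sqrt{a/b}>0$, the left-hand side becomes
\[
b\left(x^2 - 1 - \log x^2\right) = b\left(x^2 - 1 - 2\log x\right).
\]
The right-hand side becomes $(\sqrt{b}\,x-\sqrt{b})^2 = b(x-1)^2 = b(x^2-2x+1)$.

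Subtracting the second expression from the first, the claim is equivalent to
\[
x^2 - 1 - 2\log x - (x^2 - 2x + 1) = 2\bigl(x - 1 - \log x\bigr) \ge 0,
\]
which is exactly $\log x \le x - 1$. This holds by concavity of $\log$: its tangent line at $x=1$ lies above it. Alternatively, $g(x)=x-1-\log x$ has $g'(x)=1-1/x$, so it is minimized at $x=1$, where $g(1)=0$. Multiplying back by $b>0$ gives the inequality, with equality iff $a=b$.

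Next come the boundary cases, which are the only potential obstacle since the left-hand side involves $\log(a/b)$. I adopt the standard conventions used for Kullback--Leibler divergences, namely $0\log(\cdot)=0$ and $-b\log 0=+\infty$ for $b>0$.
\begin{itemize}
\item If $b=0$, the left-hand side equals $a$ and the right-hand side equals $(\sqrt a)^2=a$, so equality holds.
\item If $a=0<b$, the left-hand side is $+\infty$ while the right-hand side is $b<\infty$, so the inequality is trivial.
\end{itemize}
Consequently the inequality holds for all $a,b\ge 0$.

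This lemma is the Poisson analogue of the classical "KL dominates squared Hellinger" bound. The left-hand side is $KL\bigl(\mathrm{Poisson}(b)\,\|\,\mathrm{Poisson}(a)\bigr)$, and the right-hand side is (twice) the squared Hellinger-type distance between the intensities.
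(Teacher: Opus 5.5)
Your proof is correct and is essentially the paper's argument. The paper writes $\log\frac{a}{b} = 2\log\bigl(1 + \frac{\sqrt a - \sqrt b}{\sqrt b}\bigr)$ and applies $\log(1+y)\le y$, which is exactly your $\log x \le x-1$ at $x=\sqrt{a/b}$. Your handling of the boundary cases $a=0$ or $b=0$ via the KL conventions is a small improvement: the paper divides by $\sqrt b$ and silently assumes $a,b>0$.
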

\noindent \textbf{Proof of Lemma \ref{lemma: KL_Hellinger_Poisson}.} Since $\log(x+1) \le x $ for any $x>-1$, we have
\begin{align*}
a - b - b\log\frac{a}{b} & = a - b - 2b \log \left(\frac{\sqrt{a} - \sqrt{b}}{\sqrt{b}} + 1\right)\\
&\ge a - b - 2b \cdot \frac{\sqrt{a} - \sqrt{b}}{\sqrt{b}}\\
& = (\sqrt{a} - \sqrt{b})^2. \qed
\end{align*}

\begin{lemma}
\label{lemma: KL_Hellinger}
For any $x, y > 0$, we have 
\begin{align*}
\frac{x}{1-e^{-x}} \log \frac{x}{y} + \log \frac{e^y - 1}{e^x - 1}\ge \left(\sqrt{\frac{x}{1 - e^{-x}}} - \sqrt{\frac{y}{1 - e^{-y}}}\right)^2.
\end{align*}
\end{lemma}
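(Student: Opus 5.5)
The plan is to recognise the left-hand side as a Kullback--Leibler divergence between two zero-truncated Poisson laws, and to reduce it to the Poisson case already handled in Lemma~\ref{lemma: KL_Hellinger_Poisson}.

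\textbf{Step 1: the left-hand side is a KL divergence.} Let $T_x$ denote the zero-truncated Poisson law with pmf
\[
P_x(c)=\frac{x^c}{c!\,(e^x-1)},\qquad c\in\NN^+.
\]
Its mean is $\mu(x):=x/(1-e^{-x})$. A one-line computation gives
\[
KL(T_x\|T_y)=\EE_x\!\left[c\log\frac{x}{y}\right]+\log\frac{e^y-1}{e^x-1}=\mu(x)\log\frac{x}{y}+\log\frac{e^y-1}{e^x-1},
\]
which is exactly the left-hand side. So it suffices to show $KL(T_x\|T_y)\ge(\sqrt{\mu(x)}-\sqrt{\mu(y)})^2$.

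\textbf{Step 2: exponential-family structure.} $T_x$ is an exponential family in the natural parameter $\vartheta=\log x$. Its log-partition function is $A(\vartheta)=\log(e^{e^\vartheta}-1)$, with
\[
A'(\vartheta)=\mu,\qquad A''(\vartheta)=\mathrm{Var}_x(c)=\mu(1+x-\mu).
\]
The variance is positive, so $\mu$ is strictly increasing in $\vartheta$ and we may reparametrise by the mean $m$. The KL divergence is a Bregman divergence of $A$, and in mean coordinates it reads
\[
KL(T_x\|T_y)=\int_{\mu_y}^{\mu_x}\bigl(\vartheta(m)-\vartheta_y\bigr)\,dm,
\]
with $\mu_x=\mu(x)$, $\mu_y=\mu(y)$, $\vartheta_y=\log y$. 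The integrand has the same sign as $m-\mu_y$, so the integral is nonnegative in both orientations.

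\textbf{Step 3: compare with the Poisson case.} The key comparison is $d\vartheta/dm=1/\mathrm{Var}\ge 1/m$. This is equivalent to $\mathrm{Var}\le\mu$, i.e.\ $1+x-\mu\le1$, i.e.\ $\mu(x)\ge x$, which holds since $1-e^{-x}\le1$.

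Integrating this from $\mu_y$ to $m$ gives $|\vartheta(m)-\vartheta_y|\ge|\log(m/\mu_y)|$, with the same sign as $m-\mu_y$. Hence
\[
KL(T_x\|T_y)\ge\int_{\mu_y}^{\mu_x}\log\frac{m}{\mu_y}\,dm=\mu_y-\mu_x-\mu_x\log\frac{\mu_y}{\mu_x}.
\]
The right-hand side is the Poisson KL divergence written in terms of the means. Applying Lemma~\ref{lemma: KL_Hellinger_Poisson} with $a=\mu_y$ and $b=\mu_x$ yields $(\sqrt{\mu_x}-\sqrt{\mu_y})^2$, which is the claim.

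\textbf{Expected obstacle.} The only delicate point is the sign bookkeeping when $\mu_x<\mu_y$, i.e.\ checking that the pointwise comparison of integrands still goes the right way when the integral is taken in the reverse direction. I would handle it by writing both orientations explicitly. If the Bregman/integral representation becomes awkward, the fallback is a direct calculus argument: set $g(x)=\mathrm{LHS}-\mathrm{RHS}$ as a function of $x$ with $y$ fixed, note $g(y)=0$, and show that $g'$ has the sign of $x-y$ using the same inequality $\mathrm{Var}\le\mu$.
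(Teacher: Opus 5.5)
Your proof is correct. Its skeleton matches the paper's: both reduce the claim to $\mu_x\log\frac{x}{y}+\log\frac{e^y-1}{e^x-1}\ge \mu_y-\mu_x-\mu_x\log\frac{\mu_y}{\mu_x}$, i.e.\ the zero-truncated Poisson divergence dominates the Poisson divergence at the two means. Both then invoke Lemma~\ref{lemma: KL_Hellinger_Poisson} with $a=\mu_y$, $b=\mu_x$; the paper simply applies that lemma first.

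The difference is in how the comparison is proved. The paper fixes $x$ and calls the difference $f_x(y)$. It computes $f_x'(y)=\frac{e^{-y}}{1-e^{-y}}\bigl(g(y)-g(x)\bigr)$ with $g(y)=y/(1-e^{-y})$, and uses strict monotonicity of $g$ to get $f_x(y)\ge f_x(x)=0$. This is exactly your fallback. You instead pass to mean coordinates, write the divergence as a Bregman integral, and compare integrands via $\mathrm{Var}\le\mu$.

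The two arguments rest on the same inequality. The positive prefactor in the paper's derivative is $\frac{e^{-y}}{1-e^{-y}}=\frac{1}{y}(\mu_y-y)=\frac{1}{y}\bigl(1-\mathrm{Var}_y/\mu_y\bigr)$, which is nonnegative precisely because of your comparison. The paper's route is shorter and needs only one-variable calculus. Yours shows where the inequality comes from, and it extends verbatim to any natural exponential family with variance function $V(m)\le m$. Such underdispersed families get the same Hellinger-type lower bound.

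Your sign bookkeeping for $\mu_x<\mu_y$ is correct. Writing the divergence as $\int_{\mu_y}^{\mu_x}\frac{\mu_x-m}{V(m)}\,dm$ would make it unnecessary: the oriented integrand is always nonnegative, so the comparison with the Poisson case $V(m)=m$ is a single pointwise step.
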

\noindent\textbf{Proof of Lemma \ref{lemma: KL_Hellinger}.} According to Lemma \ref{lemma: KL_Hellinger_Poisson}, we have 
\begin{align*}
\left(\sqrt{\frac{x}{1 - e^{-x}}} - \sqrt{\frac{y}{1 - e^{-y}}}\right)^2 \le \frac{y}{1 - e^{-y}}  - \frac{x}{1 - e^{-x}} - \frac{x}{1 - e^{-x}} \left(\log\frac{y}{x} + \log\frac{1 - e^{-x}}{1 - e^{-y}}\right). 
\end{align*}
It then suffices to show that 
\begin{align*}
\frac{y}{1 - e^{-y}}  - \frac{x}{1 - e^{-x}} - \frac{x}{1 - e^{-x}}\log\frac{1 - e^{-x}}{1 - e^{-y}}\le \log \frac{e^y - 1}{e^x - 1}, 
\end{align*}
or equivalently the function $f_x(y) \ge 0$ for any $x>0$ and $y >0$, where
\begin{align*}
f_x(y) & = \log(e^y - 1) - \frac{y}{1 - e^{-y}}  - \frac{x}{1 - e^{-x}} \log (1 - e^{-y})\\
& - \log(e^x -1) +  \frac{x}{1 - e^{-x}} +  \frac{x}{1 - e^{-x}}\log (1 - e^{-x}).
\end{align*}
In fact, 
\begin{align*}
f^\prime_x(y) &= \frac{e^y}{e^y - 1} - \frac{1 - e^{-y} - ye^{-y}}{(1 - e^{-y})^2} -  \frac{x}{1 - e^{-x}}\cdot \frac{e^{-y}}{1 - e^{-y}}\\
& = \frac{1-e^{-y} - (1 - e^{-y}) + ye^{-y} -   \frac{x}{1 - e^{-x}}\cdot e^{-y}(1 -e^{-y})    }{(1 -e^{-y})^2}\\
& = \frac{ e^{-y}\left(\frac{y}{1-e^{-y}} -   \frac{x}{1 - e^{-x}}\right) }{1 -e^{-y}}. 
\end{align*}
Clearly, $e^{-y}>0$ and $1 - e^{-y}>0$ since $y>0$, and the sign of $f_x^\prime(y)$ depends on $y/(1-e^{-y}) - x/(1-e^{-x})$ only. Let
\begin{align*}
g(y) = \frac{y}{1-e^{-y}} = \frac{ye^y}{e^y - 1}, 
\end{align*}
for $y>0$. The first order derivative of $g(y)$ is 
\begin{align*}
g^\prime(y) = \frac{(e^y + ye^y)(e^y -1) - ye^{2y}}{(e^y-1)^2} = \frac{e^y(e^y-y - 1)}{(e^y -1)^2} = \frac{e^y\sum_{k=2}^{+\infty} y^k}{(e^y -1)^2} >0,
\end{align*}
showing that $g(y)$ is a strictly increasing function for $y\in (0,+\infty)$. Consequently, $f_x^\prime(y)< 0 $ for $y\in (0, x)$, $f_x^\prime(y)= 0 $ for $y=x$, and $f_x^\prime(y)> 0 $ for $y\in (x, +\infty)$. This implies that the continuous function $f_x(y)$ strictly monotone decreases for $y\in (0, x]$ and strictly monotone increases for $y\in[x, +\infty)$. Therefore, 
$f_x(y)\ge f_x(x) = 0$, for any $x>0$ and $y>0$. \qed

\begin{lemma}
\label{lemma: inverse_derivative}
For any $x, y >0$, we have
\begin{align*}
\vert x - y\vert \le 2\vert g(x) - g(y) \vert, 
\end{align*}
where $g(y) = ye^y/(e^y -1)$ is the function defined in the proof of Lemma \ref{lemma: KL_Hellinger}. 
\end{lemma}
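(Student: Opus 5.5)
The inequality is a Lipschitz bound for the inverse of $g$. Since $g(y)=ye^y/(e^y-1)$ is differentiable and, by the proof of Lemma~\ref{lemma: KL_Hellinger}, strictly increasing on $(0,\infty)$, it suffices to show $g'(u)\ge 1/2$ for all $u>0$. The mean value theorem then gives, for $x,y>0$,
\[
|g(x)-g(y)| = g'(u)\,|x-y| \ge \tfrac12 |x-y|
\]
for some $u$ between $x$ and $y$, which is the claim.

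The main step is the lower bound on the derivative. We already know
\[
g'(u)=\frac{e^u(e^u-u-1)}{(e^u-1)^2}.
\]
The inequality $g'(u)\ge 1/2$ is equivalent to
\[
2e^u(e^u-u-1)\ge (e^u-1)^2,
\qquad\text{i.e.}\qquad
e^{2u}-2ue^u-1\ge 0 .
\]
Dividing by $e^u$, this becomes $e^u-e^{-u}\ge 2u$, that is, $\sinh u\ge u$ for $u\ge 0$. This holds because $\sinh 0=0$ and $(\sinh u)'=\cosh u\ge 1$, or directly from the power series, all of whose coefficients are nonnegative.

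I expect no real obstacle beyond checking the algebraic reduction. One remark is worth noting: the constant $2$ is sharp, since $g'(u)\to 1/2$ as $u\to 0^+$ (from $g(u)\approx 1+u/2$). So the bound cannot be improved by this argument.
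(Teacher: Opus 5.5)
Your proof is correct and has the same overall structure as the paper's: a lower bound $g'>\tfrac12$ is converted into the Lipschitz estimate. The difference is in how you obtain the derivative bound, where your route is more direct.

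The paper proves that $g$ is convex. It writes $g'(y)=(e^y-y-1)/(e^y-2+e^{-y})$ and shows $g''>0$ by analysing the numerator $h(y)=(y-2)e^y-(y+2)e^{-y}+4$ through $h''(y)=y(e^y-e^{-y})>0$. It then gets $\tfrac12<g'(y)<1$ from the limits of $g'$ at $0^+$ and $+\infty$ via L'H\^opital. Finally it argues that $g^{-1}$ exists with $(g^{-1})'\in(1,2)$, so $g^{-1}$ is $2$-Lipschitz.

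You instead reduce $g'(u)\ge\tfrac12$ directly to $e^{2u}-2ue^u-1\ge 0$, i.e.\ $\sinh u\ge u$. This avoids the second derivative, the limit computations, and the inverse function. Applying the mean value theorem to $g$ itself is equivalent to the paper's Lipschitz argument for $g^{-1}$. Your sharpness remark matches the paper's limit $g'(0^+)=\tfrac12$.

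The paper's longer route does buy something: the two-sided bound $g'\in(\tfrac12,1)$, which is reused in Lemma~\ref{lemma: KL_square} to get $1+\tfrac12 y<g(y)<1+y$. The upper bound is just as elementary in your style, though. $g'(u)<1$ reduces to $e^u(1-u)<1$, i.e.\ $e^{-u}>1-u$ for $u\neq 0$.
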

\noindent\textbf{Proof of Lemma \ref{lemma: inverse_derivative}. } In the proof of Lemma \ref{lemma: KL_Hellinger}, we have 
\begin{align*}
g^\prime(y) =  \frac{e^y(e^y-y - 1)}{(e^y -1)^2} = \frac{e^y - y - 1}{e^y -2 + e^{-y}}. 
\end{align*}
It then follows that the second order derivative of $g(y)$ is 
\begin{align*}
g^{\prime \prime}(y) &= \frac{(e^y - 1)(e^y -2 + e^{-y}) - (e^y - y - 1)(e^y -e^{-y})}{(e^y -2 + e^{-y})^2}\\
&= \frac{2(e^y - 1)(e^{-y} - 1) + y (e^y -e^{-y})}{(e^y -2 + e^{-y})^2}. 
\end{align*}
Let $h(y) = 2(e^y - 1)(e^{-y} - 1) + y (e^y -e^{-y}) = (y -2) e^{y} - (y+2)e^{-y} + 4$ be the numerator of $g^{\prime \prime}(y)$.  The first and second order derivatives of $h(y)$ are 
\begin{align*}
h^\prime(y) & = (y-1)e^y + (y +1)e^{-y}, \text{ and } \quad h^{\prime \prime}(y) = y(e^{y} - e^{-y}), 
\end{align*}
respectively. Note that $h^{\prime \prime}(y)$ is continuous and $h^{\prime \prime}(y) > 0$ for $y>0$, which implies that $h^{\prime}(y)$ is a strictly monotone increasing function in $[0, +\infty)$. Therefore, $h^\prime(y) > h^\prime(0)= 0$ for any $y>0$, showing that the continuous function $h(y)$ is a strictly monotone increasing function for $y\in [0, +\infty)$. This leads to $h(y)> h(0) = 0$ for any $y>0$.  Hence, we have $g^{\prime \prime}(y)>0$ for any $y>0$, which suggests that $g^{\prime}(y)$ is a strictly monotone increasing function for $y\in (0, +\infty)$. Moreover, by L'H$\widehat{o}$pital's rule, we have
\begin{align*}
& g^\prime(y) > \lim_{y\rightarrow 0^+} \frac{e^y - y - 1}{e^y -2 + e^{-y}} = \lim_{y\rightarrow 0+} \frac{e^y - 1}{e^y  - e^{-y}} = \lim_{y\rightarrow 0+} \frac{e^y}{e^y  + e^{-y}} = \frac{1}{2}, \text{ and }\\
& g^{\prime}(y) <  \lim_{y\rightarrow +\infty} \frac{e^y - y - 1}{e^y -2 + e^{-y}} = \lim_{y\rightarrow +\infty} \frac{e^y}{e^y  + e^{-y}} = \lim_{y\rightarrow +\infty} \frac{1}{1  + e^{-2y}} = 1. 
\end{align*}
Since $g^\prime(y)>1/2$, $z = g(y)$ is a strictly monotone increasing function in $(0, +\infty)$, and its inverse $y= g^{-1}(z)$ exists, which is also a strictly monotone increasing function for $z \in (1, +\infty)$ since $\lim_{y\rightarrow 0^+} g(y) = 1$ and $\lim_{y \rightarrow + \infty}g(y) = +\infty$. Moreover, $g^\prime(y) \in (1/2, 1)$ implies that $(g^{-1})^\prime(z) \in (1, 2)$, showing that $g^{-1}(z)$ is $2$-Lipschitz continues. It immediately follows that 
\begin{align*}
\vert x- y\vert &= \vert g^{-1}\left(g(x)\right) - g^{-1}\left(g(y)\right)\vert \le 2 \vert g(x) - g(y) \vert. \qed
\end{align*}

\begin{lemma}
\label{lemma: KL_square}
For any $\kappa>1$, suppose $0< x, y \le \kappa$,  we have  
\begin{align*}
(x-y)^2 \le 16 (\kappa+1) \left( \frac{x}{1-e^{-x}} \log \frac{x}{y} + \log \frac{e^y - 1}{e^x - 1}\right).
\end{align*}
\end{lemma}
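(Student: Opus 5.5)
The plan is to chain the two preceding lemmas through the function $g(y)=y/(1-e^{-y})=ye^y/(e^y-1)$ introduced in the proof of Lemma~\ref{lemma: KL_Hellinger}. Write
\[
D(x,y)=\frac{x}{1-e^{-x}}\log\frac{x}{y}+\log\frac{e^y-1}{e^x-1}.
\]
Lemma~\ref{lemma: KL_Hellinger} gives directly
\[
D(x,y)\ge\bigl(\sqrt{g(x)}-\sqrt{g(y)}\bigr)^2 .
\]
Lemma~\ref{lemma: inverse_derivative} controls $|x-y|$ by $2|g(x)-g(y)|$. So the only missing link is to pass from a difference of square roots of $g$ to a difference of $g$ itself. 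That step is where the hypothesis $x,y\le\kappa$ enters.

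Concretely, I would first factor
\[
\bigl(g(x)-g(y)\bigr)^2=\bigl(\sqrt{g(x)}-\sqrt{g(y)}\bigr)^2\bigl(\sqrt{g(x)}+\sqrt{g(y)}\bigr)^2 .
\]
The proof of Lemma~\ref{lemma: KL_Hellinger} shows that $g$ is strictly increasing on $(0,\infty)$. Since $0<x,y\le\kappa$, this gives
\[
\bigl(\sqrt{g(x)}+\sqrt{g(y)}\bigr)^2\le 4g(\kappa).
\]
Hence $(g(x)-g(y))^2\le 4g(\kappa)\,D(x,y)$.

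Next I would bound $g(\kappa)\le\kappa+1$. This is equivalent to $\kappa e^\kappa\le(\kappa+1)(e^\kappa-1)$, i.e.\ to $e^\kappa\ge\kappa+1$, which is elementary. The assumption $\kappa>1$ is not really needed here; it only matches the convention $\lambda_{\max}>1$ used in the consistency analysis.

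Finally, Lemma~\ref{lemma: inverse_derivative} gives
\[
(x-y)^2\le 4\bigl(g(x)-g(y)\bigr)^2\le 16\,g(\kappa)\,D(x,y)\le 16(\kappa+1)\,D(x,y),
\]
which is the claimed inequality.

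There is no serious obstacle, since the heavy lifting was done in Lemmas~\ref{lemma: KL_Hellinger} and \ref{lemma: inverse_derivative}. The one point to handle with care is the monotonicity of $g$, which is what allows $g(x),g(y)\le g(\kappa)$. I would cite $g'(y)>0$ from the proof of Lemma~\ref{lemma: KL_Hellinger}, or equivalently $g'(y)>1/2$ from the proof of Lemma~\ref{lemma: inverse_derivative}, rather than rederive it.
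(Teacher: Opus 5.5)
Your proposal is correct and follows essentially the same route as the paper. Like the paper, you combine Lemma~\ref{lemma: KL_Hellinger} with Lemma~\ref{lemma: inverse_derivative} through $|g(x)-g(y)|=|\sqrt{g(x)}-\sqrt{g(y)}|\,(\sqrt{g(x)}+\sqrt{g(y)})$ and the bound $g\le\kappa+1$ on $(0,\kappa]$; the only cosmetic difference is that the paper gets $g(y)<y+1$ from the mean value theorem with $g'<1$ and $g(0^+)=1$, whereas you use monotonicity of $g$ together with $e^\kappa\ge 1+\kappa$.
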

\noindent\textbf{Proof of Lemma \ref{lemma: KL_square}. } For the function $z = g(y) = y/(1 - e^{-y})$, define $g(0) = 1$ such that $g$ will be a right continues function at the neighborhood of $0$. For any $y>0$, by the mean value theorem, we have
\begin{align*}
g(y) - g(0) = g^\prime(u) (y - 0), 
\end{align*}
for some $u$ between $0$ and $y$. Since $g^\prime(u) \in (\frac{1}{2}, 1)$, we have
\begin{align*}
\frac{1}{2}y + 1< g(y) = g^\prime(u) y + 1< y +1, 
\end{align*}
for any $y>0$. 

It then follows from the result in Lemma \ref{lemma: inverse_derivative} that, 
\begin{align*}
4\vert\sqrt{g(x)} - \sqrt{g(y)}\vert = \frac{4\vert g(x) - g(y)\vert}{\sqrt{g(x)} + \sqrt{g(y)}} \ge \frac{\vert x - y \vert}{\sqrt{\kappa+1}}. 
\end{align*}
Applying the result of Lemma \ref{lemma: KL_Hellinger} yields that 
\begin{align*}
(x-y)^2  \le 16(\kappa+1)\left(\sqrt{g(x)} - \sqrt{g(y)}\right)^2\le 16 (\kappa+1) \left( \frac{x}{1-e^{-x}} \log \frac{x}{y} + \log \frac{e^y - 1}{e^x - 1}\right). \qed 
\end{align*}

\begin{lemma}
    \label{lemma: KL_divergence} 
    For two Hurdle Poisson distributions with parameters $(p, \lambda)$ and $(\widetilde{p}, \widetilde{\lambda})$, their KL-divergence reads 
    \begin{align*}
    KL_{\text{Hurdle}}(p, \lambda\| \widetilde{p}, \widetilde{\lambda}) = KL_{\text{Bernoulli}}(p\| \widetilde{p}) + (1-p) \left[\frac{\lambda}{1 - e^{-\lambda}} \log \frac{\lambda}{\widetilde{\lambda}} + \log \frac{e^{\widetilde{\lambda}}-1}{e^\lambda -1}\right], 
    \end{align*}
where 
\begin{align*}
KL_{\text{Bernoulli}}(p\| \widetilde{p}) =p \log \frac{p}{\widetilde{p}} + (1-p)\log \frac{1-p}{1-\widetilde{p}}
\end{align*}
is the KL-divergence of two Bernoulli distributions with parameter $p$ and $\widetilde{p}$, respectively. 
\end{lemma}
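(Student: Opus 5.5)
We will compute the divergence directly from the definition $KL(P\|\widetilde P)=\sum_{c\ge 0}P(c)\log\frac{P(c)}{\widetilde P(c)}$ and split the sum into the term $c=0$ and the terms $c\in\NN^+$. Under Definition~\ref{def: Hurdel}, the zero term is simply $p\log\frac{p}{\widetilde p}$.

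For $c\ge1$, each log-ratio factors as
\[
\log\frac{P(c)}{\widetilde P(c)}=\log\frac{1-p}{1-\widetilde p}+\log\frac{1-e^{-\widetilde\lambda}}{1-e^{-\lambda}}+c\log\frac{\lambda}{\widetilde\lambda}-(\lambda-\widetilde\lambda).
\]
We will then use two facts about the positive part. First, the probabilities sum to $\sum_{c\ge1}P(c)=1-p$. Second, the truncated mean is
\[
\sum_{c\ge1}cP(c)=\frac{1-p}{1-e^{-\lambda}}\,\lambda,
\]
since the untruncated Poisson mean is $\lambda$ and the $c=0$ term contributes nothing. Summing the factored log-ratio against $P(c)$ gives the contribution
\[
(1-p)\log\frac{1-p}{1-\widetilde p}+(1-p)\left[\log\frac{1-e^{-\widetilde\lambda}}{1-e^{-\lambda}}-\lambda+\widetilde\lambda+\frac{\lambda}{1-e^{-\lambda}}\log\frac{\lambda}{\widetilde\lambda}\right].
\]
Adding the zero term to the first part already produces $KL_{\text{Bernoulli}}(p\|\widetilde p)$.

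The remaining step is to simplify the constant inside the bracket. We use
\[
\log(1-e^{-\widetilde\lambda})+\widetilde\lambda=\log(e^{\widetilde\lambda}-1),
\]
and similarly $\log(1-e^{-\lambda})+\lambda=\log(e^{\lambda}-1)$. Hence
\[
\log\frac{1-e^{-\widetilde\lambda}}{1-e^{-\lambda}}-\lambda+\widetilde\lambda=\log\frac{e^{\widetilde\lambda}-1}{e^\lambda-1},
\]
which yields exactly the stated formula.

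There is no real obstacle here. The only care needed is in computing the truncated mean correctly, and in observing that the interchange of sums is justified because all series converge absolutely for $\lambda,\widetilde\lambda>0$. We also need $p,\widetilde p\in(0,1)$, or the usual conventions $0\log 0=0$, so that the Bernoulli term is well defined.
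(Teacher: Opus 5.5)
Your proposal is correct and follows essentially the same route as the paper: split off the $c=0$ term, factor the log-ratio for $c\ge 1$, and sum using the total mass $1-p$ and the truncated mean $\frac{(1-p)\lambda}{1-e^{-\lambda}}$. The only cosmetic difference is that the paper combines $\frac{1-e^{-\widetilde\lambda}}{1-e^{-\lambda}}\cdot e^{\widetilde\lambda-\lambda}=\frac{e^{\widetilde\lambda}-1}{e^{\lambda}-1}$ inside the pmf ratio before summing, whereas you do this simplification after summing.
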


\noindent \textbf{Proof of Lemma \ref{lemma: KL_divergence}. } Denote $p(c)$ and $q(c)$ as the probability mass functions of the Hurdle Poisson distributions with parameters $(p, \lambda)$ and $(\widetilde{p}, \widetilde{\lambda})$, respectively. By Definition \ref{def: Hurdel}, we have
\begin{align*}
\frac{p(0)}{q(0)} =\frac{p}{\widetilde{p}}  \text{ and } \frac{p(c)}{q(c)} & = \frac{1-p}{1-\widetilde{p}} \cdot \left(\frac{\lambda}{\widetilde{\lambda}}\right)^c \cdot \frac{1-e^{-\widetilde{\lambda}}}{1 - e^{-\lambda}}\cdot \frac{e^{-\lambda}}{e^{-\widetilde{\lambda}}}\\
&= \frac{1-p}{1-\widetilde{p}} \cdot \left(\frac{\lambda}{\widetilde{\lambda}}\right)^c \cdot \frac{e^\lambda e^{\widetilde{\lambda}} -e^{\lambda}}{e^\lambda e^{\widetilde{\lambda}}  - e^{\widetilde{\lambda}}}\cdot \frac{e^{-\lambda}}{e^{-\widetilde{\lambda}}}\\
&= \frac{1-p}{1-\widetilde{p}} \cdot \left(\frac{\lambda}{\widetilde{\lambda}}\right)^c \cdot \frac{ e^{\widetilde{\lambda}}-1} {e^\lambda-1}, 
\end{align*}
for $c\in \NN^+$. It then follows that for any $c\in \NN^+$, we have
\begin{align*}
\log \frac{p(c)}{q(c)} = \log  \frac{1-p}{1-\widetilde{p}} + c \log \frac{\lambda}{\widetilde{\lambda}} + \log \frac{ e^{\widetilde{\lambda}}-1} {e^\lambda-1}.  
\end{align*}
By the definition of KL-divergence, 
\begin{align*}
 KL_{\text{Hurdle}}(p, \lambda\| \widetilde{p}, \widetilde{\lambda})& = p \log \frac{p}{\widetilde{p}} + \sum_{c=1}^{+\infty} \frac{1 - p}{1 - e^{-\lambda}} \frac{\lambda^c e^{-\lambda}}{c!}\log \frac{p(c)}{q(c)} \\
 & = p \log \frac{p}{\widetilde{p}} + (1-p)\left(\log  \frac{1-p}{1-\widetilde{p}}+  \frac{\lambda}{1-e^{-\lambda}}\log \frac{\lambda}{\widetilde{\lambda}}     +   \log \frac{ e^{\widetilde{\lambda}}-1} {e^\lambda-1} \right)\\
 &= KL_{\text{Bernoulli}}(p\| \widetilde{p}) + (1-p) \left[\frac{\lambda}{1 - e^{-\lambda}} \log \frac{\lambda}{\widetilde{\lambda}} + \log \frac{e^{\widetilde{\lambda}}-1}{e^\lambda -1}\right].\qed
\end{align*}

\begin{lemma}
\label{lemma: KL_Hurdle_square_loss}
Suppose that $0< s < p, \widetilde{p}<S< 1$, $0< \lambda, \widetilde{\lambda}\le \lambda_{\max}$ for some constants $0< s<S<1$, we have
\begin{align*}
 KL_{\text{Hurdle}}(p, \lambda\| \widetilde{p}, \widetilde{\lambda})\ge \frac{1-s+S}{4S(1-s)}(p -\widetilde{p})^2 + \frac{(1-S)}{16(\lambda_{\max}+1)}(\lambda - \widetilde{\lambda})^2,
\end{align*}
or simply
\begin{align*}
 KL_{\text{Hurdle}}(p, \lambda\| \widetilde{p}, \widetilde{\lambda})\ge \frac{1}{4S}(p -\widetilde{p})^2 + \frac{(1-S)}{16(\lambda_{\max}+1)}(\lambda - \widetilde{\lambda})^2.
\end{align*}
\end{lemma}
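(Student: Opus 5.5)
The plan is to use the decomposition in Lemma~\ref{lemma: KL_divergence}, which splits $KL_{\text{Hurdle}}(p,\lambda\|\widetilde p,\widetilde\lambda)$ into a Bernoulli part and a positive-count part. I will bound each part from below by a quadratic separately and then add the two bounds. The second displayed inequality follows from the first, since
\[
\frac{1-s+S}{4S(1-s)} = \frac{1}{4S} + \frac{1}{4(1-s)} \ge \frac{1}{4S}.
\]
So it suffices to prove the first inequality.

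\textbf{Positive-count part.} Lemma~\ref{lemma: KL_square} applies with $\kappa=\lambda_{\max}>1$ and $x=\lambda$, $y=\widetilde\lambda$, both in $(0,\lambda_{\max}]$. It gives
\[
\frac{\lambda}{1-e^{-\lambda}}\log\frac{\lambda}{\widetilde\lambda}+\log\frac{e^{\widetilde\lambda}-1}{e^{\lambda}-1} \ge \frac{(\lambda-\widetilde\lambda)^2}{16(\lambda_{\max}+1)}.
\]
This bracket is nonnegative (it is a KL divergence between zero-truncated Poissons), and $1-p>1-S$ because $p<S$. Multiplying the bracket by $(1-p)$ therefore yields the second term $\frac{1-S}{16(\lambda_{\max}+1)}(\lambda-\widetilde\lambda)^2$.

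\textbf{Bernoulli part.} I need $KL_{\text{Bernoulli}}(p\|\widetilde p)\ge c\,(p-\widetilde p)^2$ with $c=\frac{1}{4S}+\frac{1}{4(1-s)}$.

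1. Fix $p$ and set $\phi(q)=p\log\frac{p}{q}+(1-p)\log\frac{1-p}{1-q}$. Then $\phi(p)=0$, $\phi'(p)=0$, and
\[
\phi''(q)=\frac{p}{q^2}+\frac{1-p}{(1-q)^2}.
\]
2. A Taylor expansion with integral remainder, or the mean-value form, gives $\phi(\widetilde p)=\tfrac12\phi''(u)(p-\widetilde p)^2$ for some $u$ between $p$ and $\widetilde p$.
3. Because $u,p\in(s,S)$, one has $\frac{p}{u^2}\ge \frac{s}{S^2}$. This is the step that needs care: a naive bound like this degrades when $s$ is tiny.
4. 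A cleaner route is the standard inequality $KL(p\|q)\ge \frac{(p-q)^2}{2\max_{u\in[p,q]}u(1-u)}$. It comes from writing $\phi(\widetilde p)=\int_p^{\widetilde p}\frac{t-p}{t(1-t)}\,dt$, which holds because $\phi'(q)=\frac{q-p}{q(1-q)}$.
5. For $t\in(s,S)$,
\[
\frac{1}{t(1-t)}=\frac1t+\frac1{1-t}\ge \frac1S+\frac1{1-s}.
\]
Hence
\[
\phi(\widetilde p)\ge\Big(\frac1S+\frac1{1-s}\Big)\frac{(p-\widetilde p)^2}{2}.
\]
6. This is twice the claimed constant $\frac{1}{4S}+\frac{1}{4(1-s)}=\frac{1-s+S}{4S(1-s)}$, so the claimed bound holds with room to spare.

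Adding the Bernoulli bound and the positive-count bound gives the lemma. The main technical point is the integral representation of the Bernoulli KL in steps 4 and 5; it avoids ratio bounds that would blow up as $s\to 0$. Everything else is an immediate citation of Lemmas~\ref{lemma: KL_divergence} and~\ref{lemma: KL_square}.
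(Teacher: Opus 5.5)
Your proof is correct, and its skeleton matches the paper's. You split via Lemma~\ref{lemma: KL_divergence}, handle the positive-count bracket with Lemma~\ref{lemma: KL_square} together with $1-p\ge 1-S$, and bound the Bernoulli term by a quadratic. The only real difference is in the Bernoulli step.

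The paper uses $\log(1+x)\le x$ to get the squared-Hellinger lower bound $KL_{\text{Bernoulli}}(p\|\widetilde p)\ge(\sqrt p-\sqrt{\widetilde p})^2+(\sqrt{1-p}-\sqrt{1-\widetilde p})^2$. It then bounds $(\sqrt p+\sqrt{\widetilde p})^2\le 4S$ and $(\sqrt{1-p}+\sqrt{1-\widetilde p})^2\le 4(1-s)$, which lands exactly on the constant $\frac{1}{4S}+\frac{1}{4(1-s)}$.

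You instead use the exact representation $\phi(\widetilde p)=\int_p^{\widetilde p}\frac{t-p}{t(1-t)}\,dt$ together with $\frac{1}{t(1-t)}=\frac1t+\frac1{1-t}\ge\frac1S+\frac1{1-s}$ on $(s,S)$. This yields twice the paper's constant; the factor of $2$ reflects the slack in ``KL $\ge$ squared Hellinger'' near $p=\widetilde p$.

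So your route is slightly sharper and self-contained. The paper's route keeps the Bernoulli part in the same ``KL $\ge$ Hellinger'' form used for the count part (Lemmas~\ref{lemma: KL_Hellinger_Poisson} and~\ref{lemma: KL_Hellinger}). It also reuses that Hellinger intermediate when bounding the envelope norm $\|\bF\|_{P,2}$ in the consistency proof. Since Lemma~\ref{lemma: ZIP_KL_bound} only uses the weaker $\frac{1}{4S}$ form, the improved constant does not change anything downstream.

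In a write-up, delete your steps 2--3: the mean-value bound $p/u^2\ge s/S^2$ is an abandoned detour and is never used. Also, nonnegativity of the count bracket already follows from Lemma~\ref{lemma: KL_square} itself, so you do not need the zero-truncated-Poisson interpretation.
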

\noindent \textbf{Proof of Lemma \ref{lemma: KL_Hurdle_square_loss}. } By the result of Lemma \ref{lemma: KL_divergence}, it suffices to lower bound $KL_{\text{Bernoulli}}(p\| \widetilde{p})$ and 
\begin{align*}
 (1-p) \left[\frac{\lambda}{1 - e^{-\lambda}} \log \frac{\lambda}{\widetilde{\lambda}} + \log \frac{e^{\widetilde{\lambda}}-1}{e^\lambda -1}\right],
\end{align*}
respectively. The first term can be lower bounded as 
\begin{align*}
KL_{\text{Bernoulli}}(p\| \widetilde{p}) &= -2p \log \left(\frac{\sqrt{\widetilde{p}}-\sqrt{p}}{\sqrt{p}} + 1\right) -2 (1-p)\log \left(\frac{\sqrt{1-\widetilde{p}}-\sqrt{1-p}}{\sqrt{1-p}} + 1\right)\\
&\ge -2p \cdot \frac{\sqrt{\widetilde{p}}-\sqrt{p}}{\sqrt{p}} -2 (1-p) \cdot \frac{\sqrt{1-\widetilde{p}}-\sqrt{1-p}}{\sqrt{1-p}}\\
&= 2-2 \sqrt{p \widetilde{p}}- 2 \sqrt{(1-p)(1-\widetilde{p})}\\
& = (\sqrt{p} - \sqrt{\widetilde{p}})^2 + \left( \sqrt{1-p} - \sqrt{1-\widetilde{p}}\right)^2. 
\end{align*}
Since 
\begin{align*}
(\sqrt{p} - \sqrt{\widetilde{p}})^2 = \frac{(p-\widetilde{p})^2}{(\sqrt{p} + \sqrt{\widetilde{p}})^2}\ge \frac{(p-\widetilde{p})^2}{4S}, 
\end{align*}
and similarly
\begin{align*}
 \left( \sqrt{1-p} - \sqrt{1-\widetilde{p}}\right)^2 = \frac{(p - \widetilde{p})^2}{ \left( \sqrt{1-p} +\sqrt{1-\widetilde{p}}\right)^2}\ge \frac{(p - \widetilde{p})^2}{ 4(1-s)}, 
\end{align*}
we have 
\begin{align*}
KL_{\text{Bernoulli}}(p\| \widetilde{p}) \ge \frac{(1-s+S) (p -\widetilde{p})^2}{4S(1-s)}. 
\end{align*}
By Lemma \ref{lemma: KL_square}, the second term can be lower bounded as 
\begin{align*}
(1-p) \left[\frac{\lambda}{1 - e^{-\lambda}} \log \frac{\lambda}{\widetilde{\lambda}} + \log \frac{e^{\widetilde{\lambda}}-1}{e^\lambda -1}\right] \ge \frac{(1-S)}{16(\lambda_{\max}+1)}(\lambda - \widetilde{\lambda})^2. 
\end{align*}
The desired result follows immediately from the above two inequalities. \qed

\begin{lemma}
\label{lemma: ZIP_KL_bound}
For two zero-inflated Poisson distributions with parameter $(p, \lambda)$ and $(\widetilde{p}, \widetilde{\lambda})$ such that $0\le p, \widetilde{p}\le 1 - s_{N, K}$ and $0< \lambda_{\min} \le \lambda, \widetilde{\lambda} \le \lambda_{\max}$, we have 
\begin{align*}
& (p - \widetilde{p})^2 + (\lambda - \widetilde{\lambda})^2 \le 4(1 - e^{-\lambda_{\min}})^{-2}\\
& \cdot \max \left\{  1 - s_{N,K} +s_{N, K} e^{-\lambda_{\min}}, \frac{16(\lambda_{\max} + 1)}{s_{N, K}(1 - e^{-\lambda_{\min}})}     \right\} KL_{\text{ZIP}}(p, \lambda|| \widetilde{p}, \widetilde{\lambda}),  
\end{align*}
where $KL_{\text{ZIP}}(p, \lambda|| \widetilde{p}, \widetilde{\lambda})$ denotes the KL-divergence between the two ZIP distributions. 
\end{lemma}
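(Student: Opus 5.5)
The approach is to pass through the hurdle representation. Since the KL divergence depends only on the distributions, the ZIP law with parameters $(p,\lambda)$ coincides with the hurdle Poisson law with parameters $(q,\lambda)$, where $q=p+(1-p)e^{-\lambda}$; similarly $\widetilde q=\widetilde p+(1-\widetilde p)e^{-\widetilde\lambda}$. Hence
\[
KL_{\text{ZIP}}(p,\lambda\|\widetilde p,\widetilde\lambda)=KL_{\text{Hurdle}}(q,\lambda\|\widetilde q,\widetilde\lambda).
\]
I will apply Lemma~\ref{lemma: KL_Hurdle_square_loss} to the right-hand side, after checking that $q,\widetilde q$ lie in an interval $(s,S)$ with explicit endpoints. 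The bounds are:
\[
q\ge e^{-\lambda}\ge e^{-\lambda_{\max}}>0,
\qquad
q=1-(1-p)(1-e^{-\lambda})\le 1-s_{N,K}(1-e^{-\lambda_{\min}}).
\]
So I take $S=1-s_{N,K}(1-e^{-\lambda_{\min}})$, which gives $1-S=s_{N,K}(1-e^{-\lambda_{\min}})$ and $S=1-s_{N,K}+s_{N,K}e^{-\lambda_{\min}}$. These match exactly the two quantities inside the max of the statement. Possibly $S$ must be enlarged slightly because the lemma uses strict inequalities; this is handled by a limiting argument. Lemma~\ref{lemma: KL_Hurdle_square_loss} (second form) then yields
\[
KL_{\text{ZIP}}\ge \frac{1}{4S}(q-\widetilde q)^2+\frac{s_{N,K}(1-e^{-\lambda_{\min}})}{16(\lambda_{\max}+1)}(\lambda-\widetilde\lambda)^2
\ge \frac{1}{4\max\{S,\,16(\lambda_{\max}+1)/(s_{N,K}(1-e^{-\lambda_{\min}}))\}}\bigl[(q-\widetilde q)^2+(\lambda-\widetilde\lambda)^2\bigr].
\]

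It remains to pass from $(q,\lambda)$ back to $(p,\lambda)$. Inverting the map gives
\[
p=\frac{q-e^{-\lambda}}{1-e^{-\lambda}},
\]
and I need a bound of the form
\[
(p-\widetilde p)^2+(\lambda-\widetilde\lambda)^2\le (1-e^{-\lambda_{\min}})^{-2}\bigl[(q-\widetilde q)^2+(\lambda-\widetilde\lambda)^2\bigr],
\]
possibly with a harmless constant absorbed into the leading factor. The natural route is to write
\[
p-\widetilde p=\frac{q-\widetilde q}{1-e^{-\lambda}}+(\widetilde q-1)\left[\frac{1}{1-e^{-\lambda}}-\frac{1}{1-e^{-\widetilde\lambda}}\right]
\]
and bound the second term by a mean-value estimate of the form $c\,|\lambda-\widetilde\lambda|$.

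This last transfer step is the main obstacle. The cross term risks producing a constant larger than the clean $4(1-e^{-\lambda_{\min}})^{-2}$ prefactor. To control it I will use $|\widetilde q-1|\le 1-e^{-\lambda_{\min}}$ together with the derivative bound
\[
\left|\frac{d}{d\lambda}\,(1-e^{-\lambda})^{-1}\right|=\frac{e^{-\lambda}}{(1-e^{-\lambda})^2}.
\]
Combined with $(a+b)^2\le 2a^2+2b^2$, the resulting factor should be absorbed into the stated constants, since $(1-e^{-\lambda_{\min}})^{-2}\ge1$. If the constants do not close exactly, I would instead bound the Bernoulli part directly via the Hellinger-type inequality used in the proof of Lemma~\ref{lemma: KL_Hurdle_square_loss}.
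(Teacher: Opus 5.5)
Your route is the paper's: identify $KL_{\text{ZIP}}(p,\lambda\|\widetilde p,\widetilde\lambda)$ with $KL_{\text{Hurdle}}(q,\lambda\|\widetilde q,\widetilde\lambda)$ (the paper's $\pi$ is your $q$). Then confine $q,\widetilde q$ to $[e^{-\lambda_{\max}},S]$ with $S=1-s_{N,K}+s_{N,K}e^{-\lambda_{\min}}$, apply Lemma~\ref{lemma: KL_Hurdle_square_loss}, and pull back through $p=(q-e^{-\lambda})/(1-e^{-\lambda})$. No limiting argument is needed, since that lemma's proof only uses $(\sqrt{q}+\sqrt{\widetilde q})^2\le 4S$ and $1-q\ge 1-S$.

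The gap is in the transfer step you singled out. The inequality $|\widetilde q-1|\le 1-e^{-\lambda_{\min}}$ is false. Indeed $1-\widetilde q=(1-\widetilde p)(1-e^{-\widetilde\lambda})$, which exceeds $1-e^{-\lambda_{\min}}$ as soon as $\widetilde p=0$ and $\widetilde\lambda>\lambda_{\min}$. What you correctly derived earlier is the \emph{lower} bound $1-\widetilde q\ge s_{N,K}(1-e^{-\lambda_{\min}})$. The cross-term constant your computation produces, $e^{-\lambda_{\min}}/(1-e^{-\lambda_{\min}})$, happens to be the right one, but not for the stated reason. With the valid bound $|1-\widetilde q|\le 1-e^{-\lambda_{\max}}$ and the derivative bound at $\lambda_{\min}$, you only get $(1-e^{-\lambda_{\max}})e^{-\lambda_{\min}}(1-e^{-\lambda_{\min}})^{-2}$. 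Its square is of order $\lambda_{\min}^{-4}$ as $\lambda_{\min}\to0$. By contrast, the coefficient of $(\lambda-\widetilde\lambda)^2$ that the right-hand side supplies via Lemma~\ref{lemma: KL_Hurdle_square_loss} is exactly $4(1-e^{-\lambda_{\min}})^{-2}$. So the step, as written, does not close.

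The missing idea is to keep $1-\widetilde q$ attached to the difference of reciprocals, so that the factor $1-e^{-\widetilde\lambda}$ cancels:
\[
p-\widetilde p=\frac{q-\widetilde q}{1-e^{-\lambda}}+(1-\widetilde p)\,\frac{e^{-\widetilde\lambda}-e^{-\lambda}}{1-e^{-\lambda}},
\qquad
|p-\widetilde p|\le\frac{|q-\widetilde q|+e^{-\lambda_{\min}}|\lambda-\widetilde\lambda|}{1-e^{-\lambda_{\min}}}.
\]
Set $x=e^{-\lambda_{\min}}$ and use $(a+b)^2\le 2a^2+2b^2$. Since $(1-x)^2+2x^2\le 2$ on $(0,1)$, this gives $(p-\widetilde p)^2+(\lambda-\widetilde\lambda)^2\le 2(1-x)^{-2}[(q-\widetilde q)^2+(\lambda-\widetilde\lambda)^2]$.

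The constants need one more observation. Combined with your relaxed bound $KL\ge[4\max\{\cdot\}]^{-1}[\cdots]$, this yields $8$ rather than $4$, and $(1-e^{-\lambda_{\min}})^{-2}\ge1$ cannot absorb that factor. Instead, note that $4S\le 4<16(\lambda_{\max}+1)/(s_{N,K}(1-e^{-\lambda_{\min}}))$. Hence the smaller of the two coefficients in Lemma~\ref{lemma: KL_Hurdle_square_loss} is the $\lambda$-coefficient, and the maximum in the statement is its second entry. This gives the lemma with $2$ in place of $4$.

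For comparison, the paper handles the transfer by a mean-value step, asserting that the Jacobian of $(\pi,\lambda)\mapsto(p,\lambda)$ has spectral norm $1/(1-e^{-\lambda'})$. That overlooks the off-diagonal entry $e^{-\lambda'}(1-\pi')/(1-e^{-\lambda'})^2$. This entry is controlled by the same cancellation: equivalently, $1-\pi'\le 1-e^{-\lambda'}$ along the segment, by concavity of $u\mapsto 1-e^{-u}$. So your explicit decomposition, once repaired, is the more careful version of the paper's argument.
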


\noindent\textbf{Proof of Lemma \ref{lemma: ZIP_KL_bound}: } Define $\pi = p + (1 - p)e^{-\lambda}$ and $\widetilde{\pi} = \widetilde{p} + (1 -\widetilde{p})e^{- \widetilde{\lambda}}$.
Since $p\le (1- s_{N, K})$ and $\lambda_{\min} \le \lambda \le \lambda_{\max}$, 
\begin{align*}
& \pi \le p + (1 -p)e^{-\lambda_{\min}} = (1 - e^{- \lambda_{\min}})p + e^{- \lambda_{\min}} \le (1 - e^{- \lambda_{\min}})(1 - s_{N, K}) + e^{- \lambda_{\min}},  \text{ and }\\
& \pi = p( 1- e^{-\lambda}) + e^{-\lambda} \ge e^{-\lambda} \ge e^{-\lambda_{\max}}. 
\end{align*}
Therefore, 
\begin{align*}
 e^{-\lambda_{\max}}\le \pi, \widetilde{\pi} \le 1 - s_{N, K} + s_{N, K} e^{-\lambda_{\min}}. 
\end{align*}
Consider the inverse mapping : $(\pi, \lambda)\rightarrow (p, \lambda)$ in such a way that 
\begin{align*}
p = \frac{\pi - e^{-\lambda}}{1 - e^{-\lambda}}. 
\end{align*}
By the mean value theorem, there exists a $(\pi^\prime, \lambda^\prime)$ between $(\pi, \lambda)$ and $(\widetilde{\pi}, \widetilde{\lambda})$ such that 
\begin{align*}
\begin{pmatrix}
p - \widetilde{p}\\
\lambda - \widetilde{\lambda}
\end{pmatrix}
= \begin{pmatrix}
\frac{1}{1 - e^{- \lambda^\prime}} & \frac{e^{-\lambda^\prime}(1 - \pi^\prime)}{(1 - e^{-\lambda^\prime})^2}\\
0 & 1
\end{pmatrix}
\begin{pmatrix}
\pi - \widetilde{\pi}\\
\lambda - \widetilde{\lambda}
\end{pmatrix}, 
\end{align*}
leading to 
\begin{align*}
(p - \widetilde{p})^2 + (\lambda - \widetilde{\lambda})^2 &\le \max_{\lambda^\prime} \left(\frac{1}{1 - e^{-\lambda^\prime}}\right)^2 \left[ (\pi - \widetilde{\pi})^2 + (\lambda-\widetilde{\lambda})^2\right]\\
&=  \left(\frac{1}{1 - e^{-\lambda_{\min}}}\right)^2 \left[ (\pi - \widetilde{\pi})^2 + (\lambda-\widetilde{\lambda})^2\right]. 
\end{align*}
where we use the fact that the spectral norm of the Jacobian matrix is $\frac{1}{1 - e^{-\lambda^\prime}}$.
It then follows from the result in Lemma \ref{lemma: KL_Hurdle_square_loss} that
\begin{align*}
 KL_{\text{ZIP}}(p, \lambda|| \widetilde{p}, \widetilde{\lambda}) & = KL_{\text{Hurdle}}(\pi, \lambda || \widetilde{\pi}, \widetilde{\lambda})\\
& \ge \frac{1}{4 (1 - s_{N,K} +s_{N, K} e^{-\lambda_{\min}})}(\pi -\widetilde{\pi})^2 + \frac{s_{N, K}(1 - e^{-\lambda_{\min}})}{16(\lambda_{\max} + 1)} (\lambda - \widetilde{\lambda})^2\\
&\ge \frac{1}{4} (1 - e^{-\lambda_{\min}})^2 \min \left\{  \frac{1}{1 - s_{N,K} +s_{N, K} e^{-\lambda_{\min}}}, \frac{s_{N, K}(1 - e^{-\lambda_{\min}})}{4(\lambda_{\max} + 1)}      \right\}\\
& \quad \times \left[ (p - \widetilde{p})^2 + (\lambda - \widetilde{\lambda})^2 \right].  
\end{align*}
The desired result follows immediately. \qed

\section{Consistency} First, to connect the zero-inflated Poisson distribution with its Hurdle Poisson representation whose KL-divergence is easier to tackle, we define 
\begin{align*}
\pi^*_{i, j, k} = p_{i, j, k}^* + (1 - p^*_{i, j,k})e^{- \lambda^*_{i,j, k}}, \text{ and }\widehat{\pi}_{i, j, k} = \widehat{p}_{i, j, k} + (1 - \widehat{p}_{i, j,k})e^{- \widehat{\lambda}_{i,j, k}}, 
\end{align*}
for any $i, j \in [N]$ and $k \in [K]$. Therefore, $p_{i, j, k}^* = \frac{\pi_{i, j, k}^* - e^{-\lambda^*_{i, j, k}}}{1 - e^{-\lambda^*_{i, j, k}}}$ and $1- p_{i, j, k}^* = \frac{1 - \pi_{i, j, k}^* }{1 - e^{-\lambda^*_{i, j, k}}}$, similarly for $\widehat{p}_{i,j, k}$ and $1-\widehat{p}_{i, j, k}$. Further denote $\varphi_{N, K} = N(N-1)K$ be the number of independent zero-inflated Poisson random variables in the Hi-C tensor. 

By the definitions of $\widehat{\pi}_{i, j, k}$ and $\widehat{\lambda}_{i, j, k}$, we have 
\begin{align*}
& \quad \frac{1}{\varphi_{N, K}} \sum_{i\le j, k}  \left[\ind(C_{i, j, k} \ne 0) \log\left( \frac{C_{i, j, k}! \widehat{\pi}_{i, j, k} (1 - e^{-\widehat{\lambda}_{i, j, k}})}{(1-\widehat{\pi}_{ij, k}) \widehat{\lambda}_{i, j, k}^{C_{i, j, k}} e^{-\widehat{\lambda}_{i, j, k}}}\right)  - \log \widehat{\pi}_{i, j, k} \right] \\
& \le \frac{1}{\varphi_{N, K}}\sum_{i\le  j, k}  \left[ \ind(C_{i, j, k} \ne 0) \log\left( \frac{C_{i, j, k}!\pi^*_{i, j, k} (1 - e^{-\lambda^*_{i, j, k}})}{(1-\pi^*_{i, j, k}) (\lambda^*_{i, j, k})^{C_{i, j, k}} e^{-\lambda^*_{i, j, k}}}\right) - \log \pi_{i, j, k}^* \right] + \epsilon_{N, K}.
\end{align*}
Reorganizing the terms yields that 
\begin{align}
\label{equ: likelihood_diff}
& \quad \frac{1}{\varphi_{N, K}}\sum_{i\le  j, k} \log \frac{\pi^*_{i,j k}}{\widehat{\pi}_{i, j,k}} \nonumber\\
& \le \frac{1}{\varphi_{N, K}}\sum_{i\le  j, k}  \ind(C_{i, j, k} \ne 0) \left(\log 
\frac{\pi^*_{i, j, k}}{\widehat{\pi}_{i, j, k}} + \log\frac{1 - \widehat{\pi}_{
i, j, k}}{1 - \pi^*_{i, j, k}}+\log\frac{e^{\lambda^*_{i, j,k}} - 1}{e^{\widehat{\lambda}_{i,j,k}} -1} + C_{i, j, k} \log \frac{\widehat{\lambda}_{i, j, k}}{\lambda_{i, j, k}^*}\right) + \epsilon_{N, K}.
\end{align}
Note that 
\begin{align*}
 & \quad \EE \frac{1}{\varphi_{N, K}}\sum_{i\le  j, k} \ind(C_{i, j, k} \ne 0) \left(\log 
\frac{\pi^*_{i, j, k}}{\widehat{\pi}_{i, j, k}} + \log\frac{1 - \widehat{\pi}_{
i, j, k}}{1 - \pi^*_{i, j, k}}+\log\frac{e^{\lambda^*_{i, j,k}} - 1}{e^{\widehat{\lambda}_{i,j,k}} -1} + C_{i,j, k} \log \frac{\widehat{\lambda}_{i, j, k}}{\lambda_{i, j, k}^*}\right)\\
& =\frac{1}{\varphi_{N, K}} \sum_{i\le  j, k}  (1 - \pi_{i, j, k}^*) \left[\left(\log 
\frac{\pi^*_{i, j, k}}{\widehat{\pi}_{i, j, k}} + \log\frac{1 - \widehat{\pi}_{
i, j, k}}{1 - \pi^*_{i, j, k}}+\log\frac{e^{\lambda^*_{i, j,k}} - 1}{e^{\widehat{\lambda}_{i,j,k}} -1}\right) + \frac{\lambda^*_{i,j,k}}{1 - e^{-\lambda^*_{i, j,k}}} \log \frac{\widehat{\lambda}_{i, j, k}}{\lambda_{i, j, k}^*}\right].
\end{align*}
Subtracting this expectation on both sides of the inequality (\ref{equ: likelihood_diff}), we obtain
\begin{align}
\label{equ: KL_to_concentration}
& \quad \frac{1}{\varphi_{N, K}}\sum_{i\le  j, k} KL_{\text{Hurdle}}(\pi_{i, j, k}^*, \lambda_{i, j, k}^*|| \widehat{\pi}_{i, j, k}, \widehat{\lambda}_{i, j, k})\le \frac{1}{\varphi_{N, K}} \sum_{i\le  j, k} \left[\ind(C_{i, j,k} \ne 0) - (1 - \pi^*_{i, j, k})\right] \nonumber\\
&\times \left(\log 
\frac{\pi^*_{i, j, k}}{\widehat{\pi}_{i, j, k}} + \log\frac{1 - \widehat{\pi}_{
i, j, k}}{1 - \pi^*_{i, j, k}}+\log\frac{e^{\lambda^*}_{i, j,k} - 1}{e^{\widehat{\lambda}_{i,j,k}} -1}\right) + \frac{1}{\varphi_{N, K}} \sum_{i\le j, k} \left(C_{i, j, k} - \frac{(1-\pi_{i, j,k}^*)\lambda_{i,j,k}^*}{1-e^{-\lambda^*_{i, j, k}}}\right) \log \frac{\widehat{\lambda}_{i, j, k}}{\lambda_{i, j,k}^*}  \nonumber\\
 & \quad+ \epsilon_{N, K}.
\end{align}

Herein, we use the fact that $\ind(C_{i, j, k} \ne 0)C_{i,j, k} = C_{i, j, k}$.  

Second, define 
\begin{equation*}
S = \left\{(\bGamma, \bZ, \bbeta, \bxi): \frac{1}{\varphi_{N, K}} \sum_{i\le j, k} KL_{\text{Hurdle}}\big(\pi_{i,j, k}^*, \lambda_{i, j, k}^* ||\pi_{i, j, k}, \lambda_{i,j, k} )\ge 4 \epsilon_{N, K} \right\}, \text{ and }
\end{equation*}
\begin{equation*}
S_u = \left\{(\bGamma, \bZ, \bbeta, \bxi): 2^{u+1}\epsilon_{N, K} \le \frac{1}{\varphi_{N, K}} \sum_{i\le j, k} KL_{\text{Hurdle}}\big(\pi_{i,j, k}^*, \lambda_{i, j, k}^* ||\pi_{i, j, k}, \lambda_{i,j, k} )< 2^{u+2} \epsilon_{N, K} \right\},
\end{equation*}
for $u = 1, 2, \ldots $, where $\pi_{i, j, k}$ and $\lambda_{i, j, k}$ are functions of $(\bGamma, \bZ, \bbeta, \bxi)$. Clearly, $S = \bigcup_{u = 1}^{\infty} S_u$. 
As such, we have
\begin{align*}
& \quad \PP\left(\frac{1}{\varphi_{N, K}}\sum_{i\le  j, k} KL_{\text{Hurdle}}(\pi_{i, j, k}^*, \lambda_{i, j, k}^*|| \widehat{\pi}_{i, j, k}, \widehat{\lambda}_{i, j, k}) \ge 4\epsilon_{N, K} \right)\\
& \le \PP\left\{ \sup_{(\bGamma, \bZ, \bbeta, \bxi) \in S}
\frac{1}{\varphi_{N, K}} \sum_{i\le  j, k} \left[\ind(C_{i, j,k} \ne 0) - (1 - \pi^*_{i, j, k})\right] \left(\log 
\frac{\pi^*_{i, j, k}}{\pi_{i, j, k}} + \log\frac{1 -\pi_{
i, j, k}}{1 - \pi^*_{i, j, k}}+\log\frac{e^{\lambda^*}_{i, j,k} - 1}{e^{\lambda_{i,j,k}} -1}\right) \right. \\
& \left. \quad +  \frac{1}{\varphi_{N, K}} \sum_{i\le j, k}\left(C_{i, j, k} - \frac{(1-\pi_{i, j,k}^*)\lambda_{i,j,k}^*}{1-e^{-\lambda^*_{i, j, k}}}\right) \log \frac{\lambda_{i, j, k}}{\lambda_{i, j,k}^*}  \right.\\
& \left. - \frac{1}{\varphi_{N, K}}\sum_{i\le  j, k} KL_{\text{Hurdle}}(\pi_{i, j, k}^*, \lambda_{i, j, k}^*||\pi_{i, j, k},\lambda_{i, j, k})\ge -\epsilon_{N, K}
\right\} \le \sum_{u = 1}^{+\infty} I_u, 
\end{align*}
where the inequality comes from (\ref{equ: KL_to_concentration}). Herein, $I_u$ is defined as 
\begin{align*}
& I_u= \PP\Bigg\{ \sup_{(\bGamma, \bZ, \bbeta, \bxi) \in S_u}
\frac{1}{\varphi_{N, K}} \sum_{i\le  j, k} \left[\ind(C_{i, j,k} \ne 0) - 1 +
 \pi^*_{i, j, k}\right] \left(\log 
\frac{\pi^*_{i, j, k}}{\pi_{i, j, k}} + \log\frac{1 - \pi_{
i, j, k}}{1 - \pi^*_{i, j, k}}+\log\frac{e^{\lambda^*}_{i, j,k} - 1}{e^{\lambda_{i,j,k}} -1}\right)\\
& + \frac{1}{\varphi_{N, K}} \sum_{i\le j, k} \left[\left(C_{i, j, k} - \frac{(1-\pi_{i, j,k}^*)\lambda_{i,j,k}^*}{1-e^{-\lambda^*_{i, j, k}}}\right) \log \frac{\lambda_{i, j, k}}{\lambda_{i, j,k}^*} -KL_{\text{Hurdle}}(\pi_{i, j, k}^*, \lambda_{i, j, k}^*|| \pi_{i, j, k}, \lambda_{i, j, k})\right]\ge -\epsilon_{N, K}
\Bigg\}.
\end{align*}
By the definition of $S_u$ and the inequality that $2^{u+1} - 1\ge 2^{u}$ for $u\ge 1$, $I_u$ is upper bounded by
\begin{align*}
&  \PP\Bigg\{ \sup_{(\bGamma, \bZ, \bbeta, \bxi) \in S_u}
\frac{1}{\varphi_{N, K}} \sum_{i\le  j, k} \left[\ind(C_{i, j,k} \ne 0) - 1 +
 \pi^*_{i, j, k}\right] \left(\log 
\frac{\pi^*_{i, j, k}}{\pi_{i, j, k}} + \log\frac{1 - \pi_{
i, j, k}}{1 - \pi^*_{i, j, k}}+\log\frac{e^{\lambda^*}_{i, j,k} - 1}{e^{\lambda_{i,j,k}} -1}\right)\\
& \quad + \frac{1}{\varphi_{N, K}} \sum_{i\le j, k} \left(C_{i, j, k} - \frac{(1-\pi_{i, j,k}^*)\lambda_{i,j,k}^*}{1-e^{-\lambda^*_{i, j, k}}}\right) \log \frac{\lambda_{i, j, k}}{\lambda_{i, j,k}^*} \\
& \quad \ge \inf_{(\bGamma, \bZ, \bbeta, \bxi) \in S_u}\frac{1}{\varphi_{N, K}} \sum_{i\le j, k}KL_{\text{Hurdle}}(\pi_{i, j, k}^*, \lambda_{i, j, k}^*||\pi_{i, j, k}, \lambda_{i, j, k})  -\epsilon_{N, K}
\Bigg\}\\
& \le  \PP\Bigg\{ \sup_{(\bGamma, \bZ, \bbeta, \bxi) \in S_u}
\frac{1}{\varphi_{N, K}} \sum_{i\le  j, k} \left[\ind(C_{i, j,k} \ne 0) - 1 +
 \pi^*_{i, j, k}\right] \left(\log 
\frac{\pi^*_{i, j, k}}{\pi_{i, j, k}} + \log\frac{1 -\pi_{
i, j, k}}{1 - \pi^*_{i, j, k}}+\log\frac{e^{\lambda^*}_{i, j,k} - 1}{e^{\lambda_{i,j,k}} -1}\right)\\
& \quad + \frac{1}{\varphi_{N, K}} \sum_{i\le j, k} \left(C_{i, j, k} - \frac{(1-\pi_{i, j,k}^*)\lambda_{i,j,k}^*}{1-e^{-\lambda^*_{i, j, k}}}\right) \log \frac{\lambda_{i, j, k}}{\lambda_{i, j,k}^*} \ge 2^{u} \epsilon_{N, K}\Bigg\}, 
\end{align*}

To simplify notations, we denote the following 
\begin{align*}
&a_{i, j, k} = \log 
\frac{\pi^*_{i, j, k}}{\pi_{i, j, k}} + \log\frac{1 -\pi_{
i, j, k}}{1 - \pi^*_{i, j, k}}+\log\frac{e^{\lambda^*}_{i, j,k} - 1}{e^{\lambda_{i,j,k}} -1}, \quad  b_{i, j, k} = \log \frac{\lambda_{i,j,k}}{\lambda_{i, j, k}^*},\\
& f_{i, j, k}(C_{i, j, k}) =  a_{i,,j, k} \ind(C_{i,j, k}\ne 0) + b_{i,j, k} C_{i, j, k},   \text{ and } Y_{i,j, k} = f_{i,j, k}(C_{i, j, k}) - \EE f_{i, j,k}(C_{i, j, k}). 
\end{align*}
It then follows from the standard symmetrization argument that 
\begin{align*}
\EE\sup_{(\bGamma, \bZ, \bbeta, \bxi) \in S_u}
\frac{1}{\varphi_{N, K}} \sum_{i\le  j, k} Y_{i,j, k} \le 2 \EE\sup_{(\bGamma, \bZ, \bbeta, \bxi)  \in S_u}\frac{1}{\varphi_{N, K}} | \sum_{i\le  j, k} f_{i, j, k}(C_{i, j, k}) \varepsilon_{i, j, k}|, 
\end{align*}
where $\varepsilon_{i,j,k}$'s are independent Rademacher variables, which are also independent with $C_{i,j,k}$'s. This leads to 
\begin{align*}
I_u &\le \PP \left( \sup_{(\bGamma, \bZ, \bbeta, \bxi) \in S_u}
\frac{1}{\varphi_{N, K}} \sum_{i\le  j, k} Y_{i,j, k} \ge 2^{u} \epsilon_{N, K} \right)  \\
& \le \PP\Bigg( \sup_{(\bGamma, \bZ, \bbeta, \bxi) \in S_u}
\frac{1}{\varphi_{N, K}} \sum_{i\le  j, k} Y_{i,j, k} - (1+ \nu) \EE\sup_{(\bGamma, \bZ, \bbeta, \bxi) \in S_u}
\frac{1}{\varphi_{N, K}} \sum_{i\le  j, k} Y_{i,j, k}\\
& \quad \ge 2^{u} \epsilon_{N, K} - 2 (1+ \nu)\EE\sup_{(\bGamma, \bZ, \bbeta, \bxi)  \in S_u}\frac{1}{\varphi_{N, K}} | \sum_{i\le  j, k} f_{i,j, k}(C_{i, j,k}) \varepsilon_{i, j, k}|
\Bigg).
\end{align*}
for any $0<\nu\le 1$. 

Third, since $S_u$ is a compact set, $0< \lambda_{\min} \le \lambda_{i, j, k} \le \lambda_{\max}< + \infty$, $e^{-\lambda_{\max}} \le \pi_{i, j, k} \le 1 - s_{N, K} + s_{N, K} e^{-\lambda_{\min}}$, and $a_{i, j, k}$ and $b_{i, j, k}$ are continues functions of the model parameters, there exists a parameter tuple $(\bGamma^\dag, \bZ^\dag, \bbeta^\dag, \bxi^\dag) \in S_u$ such that 
\begin{align*}
|f_{i, j, k} (C_{i, j, k})| &\le |
a_{i, j,k}^\dag| \ind(C_{i, j, k} \ne 0) + |b_{i, j, k}^\dag|C_{i, j, k}\\
& \le \left(| \log 
\frac{\pi^*_{i, j, k}}{\pi^\dag_{i, j, k}} + \log\frac{1 -\pi^\dag_{
i, j, k}}{1 - \pi^*_{i, j, k}} | + |\log\frac{e^{\lambda^*}_{i, j,k} - 1}{e^{\lambda^\dag_{i,j,k}} -1}| \right) \ind(C_{i, j, k}\ne 0) + |\log \frac{\lambda^\dag_{i, j, k}}{\lambda_{i, , k}^*}|C_{i, j, k} \\
& := F_{i, j, k}(C_{i, j, k}), 
\end{align*}
for any random tensor $\bm{\mathcal{C}}$ and any tensor value function $\bm{f} = (f_{i, j, k})$ parametrized by $(\bGamma, \bZ, \bbeta, \bxi) \in S_u$. Therefore, $\bF = (F_{i, j, k})$ could serve as an envelop of the function class parametrized by $S_u$, which we further denoted as $\bm{f}_u$. It then follows from the uniform entropy bound that  
\begin{align*}
\EE\sup_{(\bGamma, \bZ, \bbeta, \bxi)  \in S_u}\frac{1}{\varphi_{N, K}} | \sum_{i\le  j, k} f_{i,j, k}(C_{i, j,k}) \varepsilon_{i, j, k}| \le m_1 \varphi_{N, K}^{-1/2}\int_0^1 \sup_Q \sqrt{\log \mathcal{N}\left(\epsilon ||\bF||_{Q, 2}, \bm{f}_u, L_2(Q)\right)} d\epsilon ||\bF||_{P, 2}, 
\end{align*}
for some universal constant $m_1$, where 
\begin{align*}
\|\bF\|_{Q, 2} = \sqrt{\frac{1}{\varphi_{N, K}}\sum_{i\le j, k} \EE_Q F_{i, j, k}(C_{i, j, k})^2}, 
\end{align*}
for any measure $Q$, the supremum in the right hand side is taken over all distributions parametrized by $(\bGamma, \bZ, \bbeta, \bxi) \in S_u$, $P$ corresponds to the underlying true distribution, and $\mathcal{N}$ denotes the covering number in terms of the $L_2(Q)$ distance. For any $(\bGamma^{(1)}, \bZ^{(1)}, \bbeta^{(1)}, \bxi^{(1)})$, $(\bGamma^{(2)}, \bZ^{(2)}, \bbeta^{(2)}, \bxi^{(2)}) \in S_u$ parametrized $\bm{f}^{(1)}$ and $\bm{f}^{(2)}$, we can upper bound their $L_2(Q)$ distance as
\begin{align*}
&\quad L_2(Q)(\bm{f}^{(1)}, \bm{f}^{(2)})\\
&= \sqrt{ \frac{1}{\varphi_{N, K}}\sum_{i, j, k} \EE_Q \left[ (a^{(1)}_{i, j, k} - a^{(2)}_{i, j, k})\ind(C_{i, j, k} \ne 0) + (b^{(1)}_{i, j, k} - b^{(2)}_{i, j, k}) C_{i, j, k}\right]^2 }\\
& = \sqrt{ \frac{1}{\varphi_{N, K}}\sum_{i, j, k} \sum_{c=1}^{+\infty} \left[ (a^{(1)}_{i, j, k} - a^{(2)}_{i, j, k}) + (b^{(1)}_{i, j, k} - b^{(2)}_{i, j, k}) c\right]^2 (1-p_{i, j, k}^{(Q)})\frac{(\lambda^{(Q)})^c e^{-\lambda^{(Q)}}}{c! }}\\
& \le \sqrt{ \frac{2}{\varphi_{N, K}}\sum_{i, j, k} \left[ (a^{(1)}_{i, j, k} - a^{(2)}_{i, j, k})^2(1 - e^{-\lambda_{i, j, k}^{(Q)}}) + (b^{(1)}_{i, j, k} - b^{(2)}_{i, j, k})^2 \lambda_{i, j,k }^{(Q)} (1 + \lambda_{i, j, k}^{(Q)})\right] (1-p_{i, j, k}^{(Q)})}\\
& = \sqrt{ \frac{2}{\varphi_{N, K}}\sum_{i, j, k} \left[ (a^{(1)}_{i, j, k} - a^{(2)}_{i, j, k})^2(1 - e^{-\lambda_{i, j, k}^{(Q)}}) + (\eta^{(1)}_{i, j, k} - \eta^{(2)}_{i, j, k})^2 \lambda_{i, j,k }^{(Q)} (1 + \lambda_{i, j, k}^{(Q)})\right] (1-p_{i, j, k}^{(Q)})}.
\end{align*}

Note that 
\begin{align*}
& |\frac{\partial a_{i, j, k}}{\partial \theta_{i, j, k}}|  = |\frac{\partial a_{i, j, k}}{\partial \pi_{i, j, k}} \cdot \frac{\partial \pi_{i, j, k}}{\partial  (1- p_{i, j, k})}  \cdot \frac{\partial (1-p_{i, j, k})}{\partial \theta_{i, j, k}} | \le \frac{1 - e^{-\lambda_{i, j, k}}}{4\pi_{i, j, k}(1 - \pi_{i, j, k})} = \frac{1}{4 \pi_{i, j, k} (1 - p_{i, j, k})}\\
& \le \frac{1}{4s_{N, K} (1 - s_{N, K} + s_{N, K} e^{-\lambda_{\max}})}, \text{ and }\\
& |\frac{\partial a_{i, j, k}}{\partial \eta_{i, j, k}}|  = |\frac{\partial a_{i, j, k}}{\partial \pi_{i, j, k}} \cdot \frac{\partial \pi_{i, j, k}}{\partial \lambda_{i, j, k}}  \cdot \frac{\partial \lambda_{i,j,k}}{\partial \eta_{i, j, k}} - \frac{\lambda_{i, j, k} e^{\lambda_{i, j, k}}}{e^{\lambda_{i,j k} } - 1} | = | \frac{\lambda_{i, j, k} e^{-\lambda_{i, j, k}}}{\pi_{i, j, k} (1 - e^{-\lambda_{i, j, k}})} -  \frac{\lambda_{i, j, k} e^{\lambda_{i, j, k}}}{e^{\lambda_{i,j k} } - 1}|\\
& \le (\lambda_{i, j, k} + 1) | \frac{e^{-\lambda_{i, j, k}}}{\pi_{i, j, k}} - 1| \le \lambda_{i, j, k} + 1 \le \lambda_{\max} + 1, 
\end{align*}
where the inequality in the second line follows from the monotonicity and the assumption that $s_{N, K} \le \frac{1}{e^{\lambda_{\max}}-1}$, and the first inequality in the forth line follows from the proof of Lemma \ref{lemma: KL_square}. 

Therefore, by the mean value theorem
\begin{align*}
(a^{(1)}_{i, j, k} - a^{(2)}_{i, j, k})^2 \le \frac{(\theta^{(1)}_{i, j, k} - \theta_{i, j, k}^{(2)})^2}{2s_{N, K}^2 (1 - s_{N, K} + s_{N, K} e^{-\lambda_{\max}})^2} + 2 (\lambda_{\max} + 1)^2 (\eta_{i,j, k}^{(1)} - \eta^{(2)}_{i, j, k})^2. 
\end{align*}
It then follows that 
\begin{align*}
& \quad L_2(Q)(\bm{f}^{(1)}, \bm{f}^{(2)})^2\\
& \le \frac{2}{\varphi_{N, K}}\sum_{i, j, k} \left[ (a^{(1)}_{i, j, k} - a^{(2)}_{i, j, k})^2(1 - e^{-\lambda_{i, j, k}^{(Q)}}) + (\eta^{(1)}_{i, j, k} - \eta^{(2)}_{i, j, k})^2 \lambda_{i, j,k }^{(Q)} (1 + \lambda_{i, j, k}^{(Q)})\right] (1-p_{i, j, k}^{(Q)})\\
& =  \frac{2}{\varphi_{N, K}}\sum_{i, j, k} \left[ (a^{(1)}_{i, j, k} - a^{(2)}_{i, j, k})^2(1 - \pi^{(Q)}_{i, j, k}) + (\eta^{(1)}_{i, j, k} - \eta^{(2)}_{i, j, k})^2 \lambda_{i, j,k }^{(Q)} (1 + \lambda_{i, j, k}^{(Q)}) (1-p_{i, j, k}^{(Q)})\right]\\
& \le \frac{2}{\varphi_{N, K}}\sum_{i, j, k} \left[ \frac{ (1 - \pi^{(Q)}_{i, j, k})(\theta^{(1)}_{i, j, k} - \theta_{i, j, k}^{(2)})^2}{2s_{N, K}^2 (1 - s_{N, K} + s_{N, K} e^{-\lambda_{\max}})^2} + (\lambda_{\max} + 1)^2 (3 - 2\pi^{(Q)}_{i, j, k}-p_{i, j, k}^{(Q)}) (\eta_{i,j, k}^{(1)} - \eta^{(2)}_{i, j, k})^2\right]\\
& \le \frac{2}{\varphi_{N, K}}\sum_{i, j, k} \left[ \frac{(\theta^{(1)}_{i, j, k} - \theta_{i, j, k}^{(2)})^2}{2s_{N, K}^2 (1 - s_{N, K} + s_{N, K} e^{-\lambda_{\max}})^2} + 3(\lambda_{\max} + 1)^2  (\eta_{i,j, k}^{(1)} - \eta^{(2)}_{i, j, k})^2\right]\\
& = \frac{1}{\varphi_{N, K} s_{N, K}^2 (1 - s_{N, K} + s_{N, K} e^{-\lambda_{\max}})^2} \|\bTheta^{(1)} - \bTheta^{(2)}\|_F^2 + \frac{6(\lambda_{\max} + 1)^2 }{\varphi_{N, K}} \|\bm{\eta}^{(1)} - \bm{\eta}^{(2)}\|_F^2. 
\end{align*}
Taking the low-rank structures of  $\bm{\eta}^{(1)}, \bm{\eta}^{(2)}, \bTheta^{(1)}$, and $\bTheta^{(2)}$ into consideration, we have 
\begin{align*}
& \quad \|\bm{\eta}^{(1)} - \bm{\eta}^{(2)}\|_F \le \|\bm{\mathcal{I}} \times_1 \bH (\bGamma^{(1)} - \bGamma^{(2)})\times_2 \bH \bGamma^{(1)} \times_3 \widetilde{\bbeta}^{(1)} \|_F\\
& + \|\bm{\mathcal{I}} \times_1 \bH \bGamma^{(2)} \times_2 \bH (\bGamma^{(1)} - \bGamma^{(2)}) \times_3 \widetilde{\bbeta}^{(1)} \|_F + \|\bm{\mathcal{I}} \times_1 \bH \bGamma^{(2)} \times_2 \bH \bGamma^{(2)} \times_3 (\widetilde{\bbeta}^{(1)} - \widetilde{\bbeta}^{(2)}) \|_F\\
& \le  \sqrt{Q}\|\bGamma^{(1)} - \bGamma^{(2)} \|_F  \| \bGamma^{(1)}\|_F \|\widetilde{\bbeta}^{(1)} \|_F +\sqrt{Q} \|\bGamma^{(2)}\|_F  \|\bGamma^{(1)} - \bGamma^{(2)}\|_F\| \widetilde{\bbeta}^{(1)} \|_F\\
& + Q\|\bGamma^{(2)}\|_F^2 \| \widetilde{\bbeta}^{(1)} - \widetilde{\bbeta}^{(2)}\|_F,
\end{align*}
where the last inequality follows from the fact that the columns of $\bH$ are certain realizations of orthonormal function basis. Without loss of generality, we can assume the columns of both $\bGamma^{(1)}$ and $\bGamma^{(2)}$ have unit $l_2$-norm; otherwise we can rescaled the factors in the last inequality such that $\bGamma^{(1)}$ and $\bGamma^{(2)}$ fall into the parameter space $\Omega$. Since $\widetilde{\bbeta}^{(1)} = \bZ^{(1)} \bbeta^{(1)}$ and $\widetilde{\bbeta}^{(2)} = \bZ^{(2)}\bbeta^{(2)}$, we have
\begin{align*}
\| \widetilde{\bbeta}^{(1)} - \widetilde{\bbeta}^{(2)}\|_F & \le \|\bZ^{(1)} \bbeta^{(1)} - \bZ^{(2)} \bbeta^{(1)} \|_F +  \|\bZ^{(2)}  \bbeta^{(1)} - \bZ^{(2)}\bbeta^{(2)} \|_F\\
& \le \sqrt{RL} \beta_{\max} \|\bZ^{(1)} - \bZ^{(2)}\|_F + \sqrt{K}\|\bbeta^{(1)} - \bbeta^{(2)}\|_F. 
\end{align*}
These leads to 
\begin{align*}
& \quad \|\bm{\eta}^{(1)} - \bm{\eta}^{(2)}\|_F\\
&\le  2L \sqrt{KQ} \beta_{\max}\|\bGamma^{(1)} - \bGamma^{(2)} \|_F + QL^{3/2} \sqrt{R} \beta_{\max} \|\bZ^{(1)} - \bZ^{(2)}\|_F +Q L\sqrt{K} \|\bbeta^{(1)} - \bbeta^{(2)}\|_F\\
& = 2L^{3/2} \sqrt{QK} \beta_{\max} \|\frac{1}{\sqrt{L}} (\bGamma^{(1)} - \bGamma^{(2)})\|_F + L^{3/2} Q\sqrt{KR} \beta_{\max} \|\frac{1}{\sqrt{K}}(\bZ^{(1)} - \bZ^{(2)})\|_F\\
& \quad \quad \quad \quad\quad \quad\quad \quad + L^{3/2} Q\sqrt{RK} \beta_{\max}\| \frac{1}{\sqrt{RL} \beta_{\max}}(\bbeta^{(1)} - \bbeta^{(2)})\|_F\\
& \le 2L^{3/2} Q\sqrt{KR} \beta_{\max} \left(\|\frac{1}{\sqrt{L}} (\bGamma^{(1)} - \bGamma^{(2)})\|_F +  \|\frac{1}{\sqrt{K}}(\bZ^{(1)} - \bZ^{(2)})\|_F \right.\\
& \quad \quad \quad \left.+ \| \frac{1}{\sqrt{RL} \beta_{\max}}(\bbeta^{(1)} - \bbeta^{(2)})\|_F \right).
\end{align*}

Similarly,
\begin{align*}
& \quad \|\bm{\Theta}^{(1)} - \bm{\Theta}^{(2)}\|_F \le 2L^{3/2} Q\sqrt{KR} \xi_{\max} \left(\|\frac{1}{\sqrt{L}} (\bGamma^{(1)} - \bGamma^{(2)})\|_F +  \|\frac{1}{\sqrt{K}}(\bZ^{(1)} - \bZ^{(2)})\|_F \right.\\
& \quad \quad \quad \left.+ \| \frac{1}{\sqrt{RL} \xi_{\max}}(\bxi^{(1)} - \bxi^{(2)})\|_F \right).
\end{align*}
Consequently, 
\begin{align*}
& \quad L_2(Q)(\bm{f}^{(1)}, \bm{f}^{(2)})\\
& \le \frac{1}{\varphi^{1/2}_{N, K} s_{N, K} (1 - s_{N, K} + s_{N, K} e^{-\lambda_{\max}})} \|\bTheta^{(1)} - \bTheta^{(2)}\|_F + \frac{\sqrt{6}(\lambda_{\max} + 1) }{\varphi^{1/2}_{N, K}} \|\bm{\eta}^{(1)} - \bm{\eta}^{(2)}\|_F\\
& = \frac{2L^{3/2} Q\sqrt{KR}}{\varphi^{1/2}_{N, K}}\left(\frac{\xi_{\max}}{s_{N, K}(1 -s_{N, K}+ s_{N, K}e^{-\lambda_{\max}})} + \sqrt{6}(\lambda_{\max} +1 ) \beta_{\max}\right)\left(\|\frac{1}{\sqrt{L}}(\bGamma^{(1)} - \bGamma^{(2)}) \|_F  \right.\\
& \quad \left. \ +\| \frac{1}{\sqrt{K}}(\bZ^{(1)}- \bZ^{(2)})\|_F\right) + \frac{\sqrt{6}(\lambda_{\max} + 1) }{\varphi^{1/2}_{N, K}} \cdot 2QL^{3/2} \sqrt{KR} \beta_{\max}\| \frac{1}{\sqrt{RL} \beta_{\max}}(\bbeta^{(1)} - \bbeta^{(2)})\|_F.\\
& \quad + \frac{2QL^{3/2} \sqrt{KR} \xi_{\max}}{\varphi^{1/2}_{N, K} s_{N, K} (1 - s_{N, K} + s_{N, K} e^{-\lambda_{\max}})}  \| \frac{1}{\sqrt{
RL} \xi_{\max}}(\bxi^{(1)} - \bxi^{(2)})\|_F.
\end{align*}
To further simply notation, there exists a universal constant $m_2>0$, such that 
\begin{align*}
\quad L_2(Q)(\bm{f}^{(1)}, \bm{f}^{(2)}) \le \frac{m_2 \lambda_{\max} QL^{3/2} \sqrt{KR} \beta_{\max} \xi_{\max}}{s_{N, K}\varphi^{1/2}_{N, K}} \left(\|\frac{1}{\sqrt{L}}(\bGamma^{(1)} - \bGamma^{(2)}) \|_F\right.\\
\left.+\| \frac{1}{\sqrt{K}}(\bZ^{(1)}- \bZ^{(2)})\|_F +\| \frac{1}{\sqrt{RL} \beta_{\max}}(\bbeta^{(1)} - \bbeta^{(2)})\|_F +  \| \frac{1}{\sqrt{RL} \xi_{\max}}(\bxi^{(1)} - \bxi^{(2)})\|_F\right).
\end{align*}
It then follows that 
\begin{align*}
& \quad \mathcal{N}\left(\epsilon ||\bF||_{Q, 2}, \bm{f}_u, L_2(Q)\right)\le \mathcal{N}\left(\frac{\epsilon \|\bF\|_{Q, 2}s_{N, K}\varphi^{1/2}_{N, K}}{4m_2\lambda_{\max}QL^{3/2}\sqrt{KR}\beta_{\max} \xi_{\max}}, B(QL), ||\cdot||_2 \right) \\
&\cdot \mathcal{N}\left(\frac{\epsilon \|\bF\|_{Q, 2}s_{N, K}\varphi^{1/2}_{N, K}}{4m_2\lambda_{\max}QL^{3/2}\sqrt{KR}\beta_{\max} \xi_{\max}}, B(KR), ||\cdot||_2 \right)\\
& \cdot \mathcal{N}\left(\frac{\epsilon \|\bF\|_{Q, 2}s_{N, K}\varphi^{1/2}_{N, K}}{4m_2\lambda_{\max}QL^{3/2}\sqrt{KR}\beta_{\max} \xi_{\max}}, B(RL), ||\cdot||_2 \right)^2\\
& \le \left(\frac{12m_2\lambda_{\max}QL^{3/2}\sqrt{KR} \beta_{\max} \xi_{\max}}{\epsilon \|\bF\|_{Q, 2}s_{N, K}\varphi^{1/2}_{N, K}}\right)^{QL + KR + 2RL}, 
\end{align*}
where $B(x)$ stands for the unit $l_2$-norm ball of dimension $x$. Therefore, 
\begin{align*}
& \quad \EE\sup_{(\bGamma, \bZ, \bbeta, \bxi)  \in S_u}\frac{1}{\varphi_{N, K}} | \sum_{i\le  j, k} f_{i,j, k}(C_{i, j,k}) \varepsilon_{i, j, k}|\\
& \le m_1 \varphi_{N, K}^{-1/2} \sqrt{QL+KR+2RL}    \int_0^1  \sqrt{\log  \frac{12 m_2\lambda_{\max}QL^{3/2}\sqrt{KR}\beta_{\max} \xi_{\max}}{\epsilon \inf_Q \|\bF\|_{Q, 2}s_{N, K}\varphi^{1/2}_{N, K}}  } d\epsilon ||\bF||_{P, 2}\\
& \le \frac{12m_1m_2 \lambda_{\max} QL^{3/2} \sqrt{KR} \beta_{\max} \xi_{\max}\sqrt{QL+KR+2RL} }{\inf_Q\|\bF\|_{Q, 2}s_{N, K}  \varphi_{N, K}} \int_{\frac{12m_2 \lambda_{\max} QL^{3/2} \sqrt{KR} \beta_{\max} \xi_{\max}}{\inf_Q\|\bF\|_{Q, 2}s_{N, K} \varphi^{1/2}_{N, K}}}^{+\infty}  \frac{\sqrt{\log  t }}{t^2} dt ||\bF||_{P, 2}\\
& \le \frac{12m_1m_2 \lambda_{\max} QL
^{3/2}\sqrt{KR}\beta_{\max} \xi_{\max} \sqrt{QL+KR+2RL} }{\inf_Q\|\bF\|_{Q, 2}s_{N, K} \varphi_{N, K} \sqrt{\log \frac{12m_2 \lambda_{\max} QL^{3/2} \sqrt{KR}\beta_{\max} \xi_{\max}}{\inf_Q\|\bF\|_{Q, 2}s_{N, K} \varphi^{1/2}_{N, K}}} } \int_{\frac{12m_2 \lambda_{\max} QL^{3/2} \sqrt{KR}\beta_{\max} \xi_{\max}}{\inf_Q\|\bF\|_{Q, 2}s_{N, K} \varphi^{1/2}_{N, K}}}^{+\infty}  \frac{\log  t}{t^2} dt ||\bF||_{P, 2}\\
& = \frac{12m_1m_2 \lambda_{\max} QL^{3/2} \sqrt{KR}\beta_{\max} \xi_{\max} \sqrt{QL+KR+2RL} }{\inf_Q\|\bF\|_{Q, 2}s_{N, K} \varphi_{N, K}\sqrt{\log \frac{12m_2 \lambda_{\max} QL^{3/2} \sqrt{KR}\beta_{\max} \xi_{\max}}{\inf_Q\|\bF\|_{Q, 2}s_{N, K} \varphi_{N, K}\varphi^{1/2}_{N, K}}} } ||\bF||_{P, 2}\\
&\quad \quad \quad \times \frac{\inf_Q\|\bF\|_{Q, 2}s_{N, K} \varphi^{1/2}_{N, K}}{12m_2 \lambda_{\max} QL^{3/2} \sqrt{KR}\beta_{\max} \xi_{\max}} \left(1 + \log \frac{12m_2 \lambda_{\max} QL^{3/2} \sqrt{KR}\beta_{\max} \xi_{\max}}{\inf_Q\|\bF\|_{Q, 2}s_{N, K} \varphi^{1/2}_{N, K}} \right)\\
& = \frac{m_1  \sqrt{QL+KR+2RL} }{\varphi_{N, K}^{1/2} \sqrt{\log \frac{12m_2 \lambda_{\max} QL^{3/2} \sqrt{KR}\beta_{\max} \xi_{\max}}{\inf_Q\|\bF\|_{Q, 2}s_{N, K} \varphi^{1/2}_{N, K}}} } ||\bF||_{P, 2} \left(1 + \log \frac{12 m_2 \lambda_{\max} QL^{3/2} \sqrt{KR}\beta_{\max} \xi_{\max}}{\inf_Q\|\bF\|_{Q, 2}s_{N, K} \varphi^{1/2}_{N, K}} \right).
\end{align*}

\noindent \textbf{Lower bound of $\|\bF\|_{Q, 2}$}
Notice that 
\begin{align*}
KL_{\text{Bernoulli}}(\pi||\widetilde{\pi}) \le\left |\log\frac{\pi}{\widetilde{\pi}} + \log \frac{1 - \widetilde{\pi}}{1 -\pi}\right|. 
\end{align*}
This is because when $\pi\ge \widetilde{\pi}$, we have 
\begin{align*}
KL_{\text{Bernoulli}}(\pi||\widetilde{\pi}) = \pi \log \frac{\pi}{\widetilde{\pi}} + (1 - \pi) \log \frac{1 - \pi}{1 - \widetilde{\pi}} \le  \pi \log \frac{\pi}{\widetilde{\pi}}  - \pi\log \frac{1 - \pi}{1 - \widetilde{\pi}} \le  \log \frac{\pi}{\widetilde{\pi}}  + \log \frac{1 - \widetilde{\pi}}{1 - \pi}. 
\end{align*}
When $\pi< \widetilde{\pi}$, 
\begin{align*}
KL_{\text{Bernoulli}}(\pi||\widetilde{\pi}) \le -(1-\pi) \log \frac{\pi}{\widetilde{\pi}} + (1 - \pi)\log \frac{1 - \pi}{1 - \widetilde{\pi}} \le  \log\frac{\widetilde{\pi}}{\pi} + \log \frac{1 - \pi}{1 - \widetilde{\pi}}  .
\end{align*}
Moreover, 
\begin{align*}
\frac{\lambda}{1 - e^{-\lambda}}\log \frac{\lambda}{\widetilde{\lambda}} + \log \frac{e^{\widetilde{\lambda}} - 1}{e^{\lambda} - 1} \le  \lambda_{\max}\left|\log\frac{e^\lambda - 1}{e^{\tilde{\lambda}}-1}\right|. 
\end{align*}
This is because when $\lambda \le  \widetilde{\lambda}$, by the assmptions that $\lambda_{\max} > 1$, we have
\begin{align*}
\frac{\lambda}{1 - e^{-\lambda}}\log \frac{\lambda}{\widetilde{\lambda}} + \log \frac{e^{\widetilde{\lambda}} - 1}{e^{\lambda} - 1} \le  \log \frac{e^{\widetilde{\lambda}} - 1}{e^{\lambda} - 1} = \left|   \log \frac{e^{\lambda} - 1}{e^{\widetilde{\lambda}} - 1} \right|\le\lambda_{\max} \left|   \log \frac{e^{\lambda} - 1}{e^{\widetilde{\lambda}} - 1} \right|, 
\end{align*}
and when $\lambda > \widetilde{\lambda}$, we have
\begin{align*}
\frac{\lambda}{1 - e^{-\lambda}}\log \frac{\lambda}{\widetilde{\lambda}} + \log \frac{e^{\widetilde{\lambda}} - 1}{e^{\lambda} - 1} \le \left(\frac{\lambda}{1 - e^{-\lambda}} - 1\right)\log \frac{e^{\lambda} - 1}{e^{\widetilde{\lambda}} - 1} < \lambda_{\max} \left|  \log \frac{e^{\lambda} - 1}{e^{\widetilde{\lambda}} - 1}\right|.
\end{align*}
It then follows that within the set $S_u$, we have
\begin{align*}
& \quad \|\bF\|_{Q, 2}\\
&= \sqrt{\frac{1}{\varphi_{N, K}}\sum_{i\le j, k} \EE_Q F_{i, j, k}(C_{i, j, k})^2}\\
& \ge\sqrt{s_{N, K}(1-e^{-\lambda_{\min}})} \sqrt{\frac{1}{\varphi_{N, K}}\sum_{i\le j, k} \left(| \log 
\frac{\pi^*_{i, j, k}}{\pi^\dag_{i, j, k}} + \log\frac{1 - \pi^\dag_{
i, j, k}}{1 - \pi^*_{i, j, k}} | + |\log\frac{e^{\lambda^*}_{i, j,k} - 1}{e^{\lambda^\dag_{i,j,k}} -1}| \right)^2}\\
& \ge \frac{\sqrt{s_{N, K}(1-e^{-\lambda_{\min}})}}{\varphi_{N, K}}\sum_{i\le j, k} \left(| \log 
\frac{\pi^*_{i, j, k}}{\pi^\dag_{i, j, k}} + \log\frac{1 -\pi^\dag_{
i, j, k}}{1 - \pi^*_{i, j, k}} | + |\log\frac{e^{\lambda^*}_{i, j,k} - 1}{e^{\lambda^\dag_{i,j,k}} -1}| \right)\\
& \ge \frac{\sqrt{s_{N, K}(1-e^{-\lambda_{\min}})}}{\varphi_{N, K}} \sum_{i\le j, k} \left[ KL_{\text{Bernoulli}}(\pi^*
_{i, j, k}||\pi^\dag_{i, j, k}) \right.\\
& \quad \left. +\frac{1}{\lambda_{\max}}\frac{\lambda^*_{i, j, k}}{1 - e^{-\lambda^*_{i, j, k}}}\log \frac{\lambda^*_{i, j, k}}{\lambda^\dag_{i, j, k}} + \frac{1}{\lambda_{\max}}\log \frac{e^{\lambda^\dag_{i, j, k}} - 1}{e^{\lambda^*_{i, j, k}} - 1}\right]\\
& \ge \frac{\sqrt{s_{N, K}(1-e^{-\lambda_{\min}})}}{\varphi_{N,K}}\sum_{i\le j, k} \min\{1, \frac{1}{\lambda_{\max}(1 -\pi^*_{i, j, k})}\}KL_{\text{Hurdle}}(\pi^*_{i,j, k}, \lambda^*_{i, j, k}|| \pi^\dag_{i, j, k}, \lambda^\dag_{i, j, k})\\
& \ge \sqrt{s_{N, K}(1-e^{-\lambda_{\min}})}\min\{1, \frac{1}{\lambda_{\max}(1 - e^{-\lambda_{\max}})}\}\frac{1}{\varphi_{N,K}}\sum_{i\le j, k} KL_{\text{Hurdle}}(\pi^*_{i,j, k}, \lambda^*_{i, j, k}|| \pi^\dag_{i, j, k}, \lambda^\dag_{i, j, k})\\
& \ge \sqrt{s_{N, K}(1-e^{-\lambda_{\min}})}\min\{1, \frac{1}{\lambda_{\max}(1 - e^{-\lambda_{\max}})}\} 2^{u+1} \epsilon_{N, K}. 
\end{align*}

\noindent \textbf{Upper bound of $\|\bF\|_{P, 2}$}

For any set of parameters $(\pi = p + (1-p)e^{-\lambda}, \lambda)$ and $(\widetilde{\pi} = \widetilde{p} + (1 - \widetilde{p})e^{-\widetilde{\lambda}}, \widetilde{\lambda})$, by the mean value theorem, we have  
\begin{align*}
& |\log \frac{\pi}{\widetilde{\pi}}|\le \frac{2|\sqrt{\pi} - \sqrt{\widetilde{\pi}}|}{\sqrt{e^{-\lambda_{\max}}}}, \text{ and} |\log \frac{1 - \pi}{1 - \widetilde{\pi}}|\le \frac{2|\sqrt{1 - \pi} - \sqrt{1 - \widetilde{\pi}}|}{\sqrt{(1 -e^{-\lambda_{\min}})s_{N, K}}},
\end{align*}
leading to
\begin{align*}
(\log \frac{\pi}{\widetilde{\pi}})^2 +  (\log \frac{1 - \pi}{1 - \widetilde{\pi}})^2 & \le 4 e^{\lambda_{\max}} (\sqrt{\pi} - \sqrt{\widetilde{\pi}})^2  + \frac{4}{(1 - e^{-\lambda_{\min}})s_{N, K}} (\sqrt{1 - \pi} - \sqrt{1 - \widetilde{\pi}})^2\\
& \le 4(e^{\lambda_{\max}} + \frac{e^{\lambda_{\min}}}{(e^{\lambda_{\min}} - 1) s_{N, K}}) \min \left\{KL_{\text{Bernoulli}}(\pi||\widetilde{\pi}),KL_{\text{Bernoulli}}(\widetilde{\pi}||\pi)\right\}. 
\end{align*}
Additionally, by the mean value theorem, 
\begin{align*}
|\log \frac{e^\lambda - 1}{e^{\widetilde{\lambda}}-1}| \le \max_{\lambda_{\min} \le \lambda_0 \le \lambda_{\max}} \frac{e^{\lambda_0}}{e^{\lambda_0} -1} |\lambda - \widetilde{\lambda}|  = \frac{e^{\lambda_{\min}}}{e^{\lambda_{\min}} - 1}|\lambda - \widetilde{\lambda}|.
\end{align*}
It then follows from Lemma (\ref{lemma: KL_Hurdle_square_loss}) that 
\begin{align*}
(\log \frac{e^\lambda - 1}{e^{\widetilde{\lambda}}-1})^2&  \le \left(\frac{e^{\lambda_{\min}}}{e^{\lambda_{\min}} - 1}\right)^2 \frac{16(\lambda_{\max} +1)}{s_{N, K}(1  - e^{-\lambda_{\min}})} \min \left\{KL_{\text{Hurdle}}(\pi, \lambda|| \widetilde{\pi}, \widetilde{\lambda}), KL_{\text{Hurdle}}( \widetilde{\pi}, \widetilde{\lambda}|| \pi, \lambda)\right\}\\
& = \left(\frac{e^{\lambda_{\min}}}{e^{\lambda_{\min}} - 1}\right)^3 \frac{16(\lambda_{\max} +1)}{s_{N, K}} \min \left\{KL_{\text{Hurdle}}(\pi, \lambda|| \widetilde{\pi}, \widetilde{\lambda}), KL_{\text{Hurdle}}( \widetilde{\pi}, \widetilde{\lambda}|| \pi, \lambda)\right\},
\end{align*}
and 
\begin{align*}
 \left( |\log \frac{\pi}{\widetilde{\pi}}| +  |\log \frac{1 - \pi}{1 - \widetilde{\pi}}| + |\log \frac{e^\lambda - 1}{e^{\widetilde{\lambda}}-1}|\right)^2 \le& 12\left(e^{\lambda_{\max}} + \frac{e^{\lambda_{\min}}}{(e^{\lambda_{\min}} - 1) s_{N, K}} + \frac{4e^{3\lambda_{\min}}(\lambda_{\max} +1)}{(e^{\lambda_{\min}} - 1)^3s_{N, K}} \right) \\
 & \times \min \left\{KL_{\text{Hurdle}}(\pi, \lambda||\widetilde{\pi}, \widetilde{\lambda}),KL_{\text{Hurdle}}(\widetilde{\pi}, \widetilde{\lambda}||\pi, \lambda)\right\}, 
\end{align*}
Therefore, within the set $S_u$, we have
\begin{align*}
&\quad \frac{1}{\varphi_{N, K}} \sum_{i\le j, k\in [K]}\left(\log 
\frac{\pi^*_{i, j, k}}{\pi_{i, j, k}} + \log\frac{1 -\pi_{
i, j, k}}{1 - \pi^*_{i, j, k}}+\log\frac{e^{\lambda^*}_{i, j,k} - 1}{e^{\lambda_{i,j,k}} -1}\right)^2,\nonumber\\
& \le 12\left(e^{\lambda_{\max}} + \frac{e^{\lambda_{\min}}}{(e^{\lambda_{\min}} - 1) s_{N, K}} + \frac{4e^{3\lambda_{\min}}(\lambda_{\max} +1)}{(e^{\lambda_{\min}} - 1)^3s_{N, K}} \right) 2^{u+2}\epsilon_{N, K}.  
\end{align*}

Similarly, $ |\log \frac{\lambda}{\widetilde{\lambda}}|\le \frac{1}{\lambda_{\min}} |\lambda - \widetilde{\lambda}|$, leading to 
\begin{align*}
\frac{1}{\varphi_{N, K}} \sum_{i\le j, k\in [K]}\left(\log \frac{\lambda_{i, j,k}}{\lambda_{i,j, k}^*}\right)^2 \EE_P C_{i, j, k}^2 & \le \frac{\lambda_{\max}(1+ \lambda_{\max})}{\varphi_{N, K}} \sum_{i\le j, k\in [K]}\left(\log \frac{\lambda_{i, j,k}}{\lambda_{i,j, k}^*}\right)^2 \\
& \le \frac{16 \lambda_{\max}e^{\lambda_{\min}}(\lambda_{\max} +1)^2}{\lambda_{\min}^2 (e^{\lambda_{\min}} - 1)s_{N, K}} 2^{u+2}\epsilon_{N, K}.
\end{align*}
As such, within the set $S_u$, $\|\bF\|_{P, 2}$ can be upper bounded as
\begin{align*}
& \quad \|\bF\|_{P, 2} \\
& = \sqrt{\frac{1}{\varphi_{N, K}}\sum_{i\le j, k} \EE_P F_{i, j, k}(C_{i, j, k})^2}\\
&\le \sqrt{\frac{2}{\varphi_{N, K}} \sum_{i\le j, k\in [K]} \left[\left(\log 
\frac{\pi^*_{i, j, k}}{\pi_{i, j, k}} + \log\frac{1 -\pi_{
i, j, k}}{1 - \pi^*_{i, j, k}}+\log\frac{e^{\lambda^*}_{i, j,k} - 1}{e^{\lambda_{i,j,k}} -1}\right)^2 + \left(\log \frac{\lambda_{i, j,k}}{\lambda_{i,j, k}^*}\right)^2 \EE_P C_{i, j, k}^2 \right]}\\
& \le \sqrt{ 32 \left[e^{\lambda_{\max}} + \frac{e^{\lambda_{\min}}}{(e^{\lambda_{\min}} - 1) s_{N, K}} + \frac{4e^{3\lambda_{\min}}(\lambda_{\max} +1)}{(e^{\lambda_{\min}} - 1)^3s_{N, K}} + \frac{\lambda_{\max}e^{\lambda_{\min}}(\lambda_{\max} +1)^2}{\lambda_{\min}^2 (e^{\lambda_{\min}} - 1)s_{N, K}}\right]  2^{u+2}\epsilon_{N, K} }. 
\end{align*}

\textbf{Uniform law of large numbers}
Denote $m_3$, $m_4$ be the constants that 
\begin{align*}
 m_3 &= 2\sqrt{1-e^{-\lambda_{\min}}}\min\{1, \frac{1}{\lambda_{\max}(1 - e^{-\lambda_{\max}})}\} \text{ and }\\
 m_4 &= \sqrt{128\left[e^{\lambda_{\max}} + \frac{e^{\lambda_{\min}}}{(e^{\lambda_{\min}} - 1) } + \frac{4e^{3\lambda_{\min}}(\lambda_{\max} +1)}{(e^{\lambda_{\min}} - 1)^3} + \frac{\lambda_{\max}e^{\lambda_{\min}}(\lambda_{\max} +1)^2}{\lambda_{\min}^2 (e^{\lambda_{\min}} - 1)}\right] }. 
\end{align*}
We have $\|\bF\|_{Q, 2} \ge m_3 \sqrt{s_{N, K}}2^u \epsilon_{N,K}$ and $\|\bF\|_{P, 2} \le m_4 \sqrt{2^u \epsilon_{N, K}/s_{N, K}}$. Combining these with the previous result, we have
\begin{align*}
\quad & \EE\sup_{(\bGamma, \bZ, \bbeta, \bxi)  \in S_u}\frac{1}{\varphi_{N, K}} | \sum_{i\le  j, k} f_{i,j, k}(C_{i, j,k}) \varepsilon_{i, j, k}|\\
& \le  \frac{m_1  \sqrt{QL+KR+2RL} }{\varphi_{N, K}^{1/2} \sqrt{\log \frac{12m_2 \lambda_{\max} QL^{3/2} \sqrt{KR} \beta_{\max} \xi_{\max}}{\inf_Q\|\bF\|_{Q, 2}s_{N, K} \varphi^{1/2}_{N, K}}} } ||\bF||_{P, 2} \left(1 + \log \frac{12m_2 \lambda_{\max} QL^{3/2} \sqrt{KR}\beta_{\max} \xi_{\max}}{\inf_Q\|\bF\|_{Q, 2}s_{N, K} \varphi^{1/2}_{N, K}} \right).\\
& \le m_1 m_4  \sqrt{\frac{(QL+KR+2RL)2^u \epsilon_{N, K}} {\varphi_{N, K} s_{N, K}}} \frac{1 + \log \frac{12m_2 \lambda_{\max} QL^{3/2} \sqrt{KR}\beta_{\max} \xi_{\max}}{ m_3 2^{u} \epsilon_{N,K}s_{N, K}^{3/2} \varphi^{1/2}_{N, K}}}{\sqrt{\log \frac{12m_2 \lambda_{\max} QL^{3/2} \sqrt{KR}\beta_{\max} \xi_{\max}}{  m_3 2^{u} \epsilon_{N,K}s_{N, K}^{3/2} \varphi^{1/2}_{N, K}}} }.\\
\end{align*}
Take 
\begin{align*}
\epsilon_{N, K} = \frac{m_5 (QL + KR + 2RL) \log \varphi_{N, K} }{\varphi_{N, K}s_{N, K}}, 
\end{align*}
for some $m_5$ large enough such that 
\begin{align*}
m_1 m_4 \sqrt{\frac{ 2^{u}}{m_5 \log \varphi_{N, K}}}\frac{1 + \log \frac{12m_2 \lambda_{\max} \varphi_{N, K}^{1/2} QL^{3/2} \sqrt{KR} \beta_{\max} \xi_{\max}}{ m_3 2^{u} m_5 s_{N, K}^{1/2} (QL + KR + 2RL) \log \varphi_{N, K} }}{\sqrt{\log \frac{12m_2 \lambda_{\max} \varphi_{N, K}^{1/2}QL^{3/2} \sqrt{KR}\beta_{\max} \xi_{\max}}{m_3 2^{u} m_5 s_{N, K}^{1/2} (QL + KR + 2RL) \log \varphi_{N, K} }}} \le \frac{1}{2(1+\nu)},
\end{align*}
when $\varphi_{N, K}$ diverges. 
In this case, we have
\begin{align*}
\quad & \EE\sup_{(\bGamma, \bZ, \bbeta, \bxi)  \in S_u}\frac{1}{\varphi_{N, K}} | \sum_{i\le  j, k} f_{i,j, k}(C_{i, j,k}) \varepsilon_{i, j, k}|\le \frac{1}{2(1+\nu)} \epsilon_{N, K}.
\end{align*}
It then follows from Theorem 4 of \cite{Radoslaw2008tail} that, there exists some constant $m_6$ such that
\begin{align*}
I_u &\le  \PP\Bigg( \sup_{(\bGamma, \bZ, \bbeta, \bxi) \in S_u}
\frac{1}{\varphi_{N, K}} \sum_{i\le  j, k} Y_{i,j, k} - (1+ \nu) \EE\sup_{(\bGamma, \bZ, \bbeta, \bxi) \in S_u}
\frac{1}{\varphi_{N, K}} \sum_{i\le  j, k} Y_{i,j, k}\ge 2^u \epsilon_{N, K} -  \epsilon_{N, K}\Bigg)\\
 &\le  \PP\Bigg( \sup_{(\bGamma, \bZ, \bbeta, \bxi) \in S_u}
\frac{1}{\varphi_{N, K}} \sum_{i\le  j, k} Y_{i,j, k} - (1+ \nu) \EE\sup_{(\bGamma, \bZ, \bbeta, \bxi) \in S_u}
\frac{1}{\varphi_{N, K}} \sum_{i\le  j, k} Y_{i,j, k}\ge 2^{u-1} \epsilon_{N, K} \Bigg)\\
& \le \exp(- \frac{2^{2u - 2} \epsilon_{N, K}^2 \varphi_{N, K}^2}{4 m_4^2 \varphi_{N, K} 2^u \epsilon_{N, K}/s_{N, K}}) + 3 \exp\left( - \frac{2^{u-1}\epsilon_{N,K} \varphi_{N, K}}{m_6 \|\max_{i\le j, k} \sup_{S_u} Y_{i, j, k}\|_{\psi_1}}   \right).\\
& \le \exp(- \frac{2^{u - 4} \epsilon_{N, K} \varphi_{N, K} s_{N, K}}{m_4^2  }) + 3 \exp\left( - \frac{2^{u-1}\epsilon_{N,K} \varphi_{N, K}}{m_6 \|\max_{i\le j, k} \sup_{S_u} Y_{i, j, k}\|_{\psi_1}}   \right).
\end{align*}
Moreover, by the result of Proposition \ref{prop: psi_1},
\begin{align*}
\|\max_{i\le j, k} \sup_{S_u} Y_{i, j, k}\|_{\psi_1} &\le \max_{i\le j, k} \sup_{S} \|f_{i,j, k}(C_{i, j, k}) - \EE f_{i, j,k}(C_{i, j, k})\|_{\psi_1} \\
 &\le \max_{i\le j, k} \sup_{S} \|f_{i,j, k}(C_{i, j, k})\|_{\psi_1} 
+\max_{i\le j, k} \sup_{S}\|\EE f_{i, j,k}(C_{i, j, k})\|_{\psi_1} \\
 &\le 2 \max_{i\le j, k} \sup_{S} |a_{i,j, k}|/ \log 2 + \max_{i\le j, k} \sup_{S} (1 - p_{i, j, k})\lambda_{i, j, k}|b_{i,j, k}|/ \log 2\\
& \quad +\max_{i\le j, k} \sup_{S} |b_{i, j, k}|\left(\log\left( 1+ \frac{1}{\lambda_{i, j, k}}\log \frac{2-p_{i, j, k}}{1- p_{i,j,k}}\right)\right)^{-1}.
\end{align*}
Since $a_{i, j, k}$ and $b_{i, j, k}$ are continuous functions over the compact set $S$, there exists a constant $m_7$ such that 
\begin{align*}
\|\max_{i\le j, k} \sup_{S_u} Y_{i, j, k}\|_{\psi_1} \le m_7. 
\end{align*}
Notice that $\epsilon_{N, K}\varphi_{N, K}s_{N, }$ diverges, and thus there exists a constant $m_8$ such that 
\begin{align*}
& \quad \PP\left(\frac{1}{\varphi_{N, K}}\sum_{i\le  j, k} KL_{\text{Hurdle}}(\pi_{i, j, k}^*, \lambda_{i, j, k}^*|| \widehat{\pi}_{i, j, k}, \widehat{\lambda}_{i, j, k}) \ge 4\epsilon_{N, K} \right)\le m_8 \sum_{u=1}^{+\infty}  \exp(- \frac{2^{u} \epsilon_{N, K} \varphi_{N, K} s_{N, K}}{16 m_4^2  })\\
& \le m_8 \sum_{u=1}^{+\infty} \exp\left( - \frac{u \epsilon_{N, K} \varphi_{N, K} s_{N, K} }{16 m_4^2}\right) = \frac{m_8 \exp\left(-\epsilon_{N, K} \varphi_{N, K}s_{N, K}/16m_4^2\right)}{1 - \exp\left(-\epsilon_{N, K} \varphi_{N, K}s_{N, K}/16m_4^2\right)}. 
\end{align*}
Therefore, 
\begin{align*}
& \quad \PP\left(\frac{1}{\varphi_{N, K}}\sum_{i\le  j, k} KL_{\text{Hurdle}}(\pi_{i, j, k}^*, \lambda_{i, j, k}^*|| \widehat{\pi}_{i, j, k}, \widehat{\lambda}_{i, j, k}) \ge 4\epsilon_{N, K} \right) \\
&\le  \left[m_8 +\PP\left(\frac{1}{\varphi_{N, K}}\sum_{i\le  j, k} KL_{\text{Hurdle}}(\pi_{i, j, k}^*, \lambda_{i, j, k}^*|| \widehat{\pi}_{i, j, k}, \widehat{\lambda}_{i, j, k}) \ge 4\epsilon_{N, K} \right) \right] \exp\left(-\frac{\epsilon_{N, K} \varphi_{N, K} s_{N, K}}{16m_4^2}\right) \\
&\le (m_8 +1) \exp\left(-\frac{\epsilon_{N, K} \varphi_{N, K}s_{N, K}}{16m_4^2}\right) = M_2 \exp\left(-\epsilon_{N, K} \varphi_{N, K}s_{N, K}/16m_4^2\right), 
\end{align*}
where $M_2 = m_8 +1$. Recall that $\epsilon_{N, K} = \frac{m_5(QL + KR + 2RL) \log \varphi_{N, K}}{\varphi_{N, K}s_{N, K}}$. We adjust the constants by redefining $\epsilon_{N, K} = \frac{(QL + KR + 2RL) \log \varphi_{N, K}}{\varphi_{N, K}s_{N, K}}$, $M_1 = 4m_5$, and $M_3 = \frac{m_5}{16m_4^2}$, leading to 
\begin{align*}
 \PP\left(\frac{1}{\varphi_{N, K}}\sum_{i\le  j, k} KL_{\text{Hurdle}}(\pi_{i, j, k}^*, \lambda_{i, j, k}^*|| \widehat{\pi}_{i, j, k}, \widehat{\lambda}_{i, j, k}) \ge M_1\epsilon_{N, K} \right) \le  M_2 \exp\left(-M_3\epsilon_{N, K} \varphi_{N, K}s_{N, K}\right). 
\end{align*}

The desired result follows immediately from the fact that 
\begin{align*}
& \quad \PP\left(\frac{1}{\varphi_{N, K}}\sum_{i\le  j, k} KL_{\text{Hurdle}}(\pi_{i, j, k}^*, \lambda_{i, j, k}^*|| \widehat{\pi}_{i, j, k}, \widehat{\lambda}_{i, j, k}) \ge 4\epsilon_{N, K} \right) \\
& = \PP\left(\frac{1}{\varphi_{N, K}}\sum_{i\le  j, k} KL_{\text{ZIP}}(p_{i, j, k}^*, \lambda_{i, j, k}^*|| \widehat{p}_{i, j, k}, \widehat{\lambda}_{i, j, k}) \ge 4\epsilon_{N, K} \right). \qed
\end{align*}

\noindent \textbf{Proof of Theorem \ref{thm: consistency_F_norm}.} By the result of Lemma \ref{lemma: ZIP_KL_bound} and Theorem \ref{thm: large_deviation_inequality}, we have 
\begin{align*}
& \quad \frac{1}{\varphi_{N, K}}\left(\|\widehat{\bm{\mathcal{P}}} - \bm{\mathcal{P}}^*\|_F^2 + \|\widehat{\bm{\Lambda}} - \bm{\Lambda}^*\|_F^2\right)\\
&  \le \frac{2}{\varphi_{N, K}}\sum_{i\le j, k }\left[(\widehat{p}_{i, j, k} - p^*_{i, j, k})^2 + (\widehat{\lambda}_{i,j, k} - \lambda^*_{i, j,k})^2\right]\\
& \le \frac{8}{(1 - e^{-\lambda_{\min}})^{2}} \max \left\{ 1 - s_{N,K} +s_{N, K} e^{-\lambda_{\min}}, \frac{16(\lambda_{\max} + 1)}{s_{N, K}(1 - e^{-\lambda_{\min}})}   \right\}\cdot\\
& \quad \frac{1}{\varphi_{N, K}}\sum_{i\le j, k}KL_{\text{ZIP}}(p^*_{i, j, k}, \lambda^*_{i, j, k}|| \widehat{p}_{i, j, k}, \widehat{\lambda}_{i,j, k}) \\
&  \le \frac{8M_1}{(1 - e^{-\lambda_{\min}})^{2}} \max \left\{ 1 - s_{N,K} +s_{N, K} e^{-\lambda_{\min}}, \frac{16(\lambda_{\max} + 1)}{s_{N, K}(1 - e^{-\lambda_{\min}})}   \right\} \epsilon_{N, K}, 
\end{align*}
where the last inequality holds with probability at least $1 -M_2 \exp\left(- M_3\epsilon_{N, K} \varphi_{N, K}s_{N, K}\right)$.  \qed

\noindent \textbf{Proof of Theorem \ref{thm: calssification}.} Note that we can rewrite $\mathcal{R}(\Psi)$ as 
\begin{align*}
\mathcal{R}(\Psi)  = \EE\left[ \ind\left(\Psi(0) \ne Y\right)|C=0 \right]= \frac{ p(1 - e^{-\lambda}) \ind(\Psi(0) = 0) + e^{-\lambda} \ind(\Psi(0) = 1)} { p(1 - e^{-\lambda}) + e^{-\lambda}}. 
\end{align*}
When $p(1 - e^{-\lambda}) > e^{-\lambda}$, or simply $p> (e^{\lambda} - 1)^{-1}$, the minimizer of $\mathcal{R}(\Psi)$ is $\Psi^*(0) = 1$. Conversely, when $p\le (e^{\lambda} - 1)^{-1}$, the minimizer of $\mathcal{R}(\Psi)$ is $\Psi^*(0) = 0$. 

For any classifier $\Psi$, the excess risk is 
\begin{align*}
\mathcal{R}(\Psi) - \mathcal{R}(\Psi^*) &= \frac{ p(1 - e^{-\lambda}) \left[\ind(\Psi(0) = 0) - \ind(\Psi^*(0) = 0)\right] + e^{-\lambda}  \left[\ind(\Psi(0) = 1) - \ind(\Psi^*(0) = 1)\right]} { p(1 - e^{-\lambda}) + e^{-\lambda}}\\
&= \frac{ \left|p(1 - e^{-\lambda}) - e^{-\lambda} \right| \ind \left(\Psi(0) \ne \Psi^*(0)\right)} { p(1 - e^{-\lambda}) + e^{-\lambda}}, 
\end{align*}
where the last equality follows from the definition of $\Psi^*$, and can be verified for all possible values of $\Psi(0), \Psi^*(0) \in \{0, 1\}$. \qed

\putbib[ref]

\end{bibunit}

\end{document}